\documentclass[conference, anonymous]{IEEEtran}
\usepackage{graphicx} 
\usepackage{authblk}
\usepackage[margin=0.8in]{geometry}
\usepackage{cite}
\usepackage{amsmath,amssymb,amsfonts,amsthm}
\usepackage{algorithmic}
\usepackage{graphicx}
\usepackage{textcomp}
\usepackage{xcolor}
\usepackage{hyperref}
\def\BibTeX{{\rm B\kern-.05em{\sc i\kern-.025em b}\kern-.08em
    T\kern-.1667em\lower.7ex\hbox{E}\kern-.125emX}}

\title{Universally Composable Commitments with Communicating Malicious Physically Uncloneable Functions}

\author{Louren\c co Abecasis$^{1,2}$, Paulo Mateus$^{1,2}$ and Chrysoula Vlachou$^{1,2}$}
\affil{$^1$Instituto de Telecomunica\c c\~oes\\ Av. Rovisco Pais 1, 1049-001 Lisboa, Portugal}
	\affil{$^2$Department of Mathematics, Instituto Superior T\' ecnico\\
		Universidade de Lisboa, Av. Rovisco Pais 1, 1049-001 Lisboa, Portugal}

\usepackage[
    n,
    advantage,
    operators,
    sets,
    adversary,
    landau,
    probability,
    notions,
    logic,
    ff,
    mm,
    primitives,
    events,
    complexity,
    oracles,
    asymptotics,
    keys] {cryptocode}

\def\bm#1{\mathchoice                             
  {\mbox{\boldmath$\displaystyle#1$}}%
  {\mbox{\boldmath$#1$}}%
  {\mbox{\boldmath$\scriptstyle#1$}}%
  {\mbox{\boldmath$\scriptscriptstyle#1$}}}

\newcommand{\prn}[1]{\ensuremath{\left( #1 \right)}}
\newcommand{\msf}[1]{\ensuremath{\mathsf{#1}}}
\newcommand{\ensemble}[1]{
    \ensuremath{ \left\{ #1 \right\}_{n \in \mathbb{N}} }
}

\createpseudocodeblock{pcb}{ center, boxed }{}{}{}
\createprocedureblock{procb}{center, boxed}{}{}{}

\def \N {\ensuremath{\mathbb{N}}}



\newcommand{\indep}{\perp \!\!\! \perp}
\newcommand{\maxentropy}[1]{\ensuremath{\operatorname{H_0} \prn{ #1 } }}


\def \FMPUF {\ensuremath{\mathcal{F}_{\mathsf{MPUF}}}}
\def \FComMPUF {\ensuremath{\mathcal{F}_{\mathsf{ComMPUF}}}}

\def \Fcom {\ensuremath{\mathcal{F}_{\mathsf{com}}}}

\def \Rideal {\ensuremath{\widetilde{\mathsf{R}}}}
\def \Sideal {\ensuremath{\widetilde{\mathsf{S}}}}

\def \A {\ensuremath{A}}

\def \Z {\ensuremath{\mathcal{Z}}}
\def \Adv {\ensuremath{\mathcal{A}}}
\def \Sim {\ensuremath{\mathcal{S}}}


\def \puffamily {\ensuremath{\mathcal{P}}}
\def \pufsample {\msf{Sample}}
\def \pufeval {\msf{Eval}}
\def \rg {\ensuremath{rg}}
\def \dham {\msf{d}}
\def \dnoise {\msf{d_{noise}}}
\def \dmin {\msf{d_{min}}}
\def \entropybound {\ensuremath{m}}

\def \PUF {\ensuremath{\mathsf{PUF}}}
\def \MPUF {\ensuremath{\mathsf{MPUF}}}
\def \ComMPUF {\ensuremath{\mathsf{ComMPUF}}}
\def \TQ {\ensuremath{\mathsf{TQ}}}

\def \fe {\texttt{FE}}
\def \Gen {\ensuremath{\mathsf{Gen}}}
\def \Rep {\ensuremath{\mathsf{Rep}}}
\def \fem {\ensuremath{m}}
\def \fel {\ensuremath{\ell}}
\def \fet {\ensuremath{t}}
\def \feeps {\ensuremath{\epsilon}}

\def \extra {\msf{extra}}
\def \EXTRA {\msf{EXTRA}}
\def \SD {\msf{SD}}


\def \CPUF {\texttt{CPUF}}
\def \CUnif {\texttt{CUnif}}
\def \CBS {\texttt{CBS}}

\def \ExtPUF {\texttt{ExtPUF}}

\def \CollExtPUF {\texttt{CollExtPUF}}
\def \CollExtUnif {\texttt{CollExtUnif}}
\def \CollExtBS {\texttt{CollExtBS}}

\def \CS {\ensuremath{\mathsf{1}}}
\def \CR {\ensuremath{\mathsf{2}}}
\def \E {\ensuremath{\mathsf{E}}}
\def \PUFCS {\ensuremath{\PUF_\CS}}
\def \PUFCR {\ensuremath{\PUF_\CR}}
\def \PUFE {\ensuremath{\PUF_\E}}

\def \STATE {\ensuremath{\msf{STATE}}}

\def \Enc {\msf{Enc}}
\def \Dec {\msf{Dec}}

\def \bs {\ensuremath{bs}}

\def \kin {\ensuremath{k_\msf{in}}}
\def \kout {\ensuremath{k_\msf{out}}}
\def \kstate {\ensuremath{k_\msf{state}}}


\def \R {\ensuremath{\mathsf{R}}}
\def \S {\ensuremath{\mathsf{S}}}
\def \Radv {\ensuremath{\mathsf{R^*}}}
\def \Sadv {\ensuremath{\mathsf{S^*}}}
\def \ext {\ensuremath{\mathsf{E}}}

\def \C {\ensuremath{\mathcal{C}}}
\def \D {\ensuremath{\mathcal{D}}}

\def \com {\textup{\texttt{Com}}}
\def \collcom {\textup{\texttt{CollCom}}}
\def \commit {\textup{\texttt{Commit}}}
\def \INTER {\textup{\texttt{INTER}}}
\def \STRINGS {\textup{\texttt{STRINGS}}}
\def \Const {\msf{Const}}
\def \OPEN {\textup{\texttt{OPEN}}}
\def \CLOSED {\textup{\texttt{CLOSED}}}
\def \IND {\textup{\texttt{IND}}}

\newtheorem{theorem}{Theorem}
\newtheorem*{theorem*}{Theorem}
\newtheorem{definition}{Definition}
\newtheorem{lemma}{Lemma}
\newtheorem{proposition}{Proposition}
\newtheorem{remark}{Remark}
\date{January 2025}

\begin{document}

\maketitle

\begin{abstract}
In this work, we explore the possibility of universally composable (UC)-secure commitments using Physically Uncloneable Functions (PUFs)
within a new adversarial model.
We introduce the \textit{communicating malicious PUFs}, i.e. malicious PUFs that can interact with their creator even when not in
their possession, obtaining a stronger adversarial model.
Prior work [ASIACRYPT 2013, LNCS, vol. 8270, pp. 100--119] proposed a compiler for constructing UC-secure commitments from ideal extractable commitments, and our task would be to adapt the ideal extractable commitment scheme proposed therein to our new model.
However, we found an attack and identified a few other issues in that construction, and
to address them, we modified the aforementioned ideal extractable commitment scheme and introduced new properties and tools that allow us to rigorously develop and present security proofs in this context. 
We propose a new UC-secure commitment scheme against adversaries that can only create
stateless malicious PUFs which can receive, but not send, information from their creators. Our protocol is more efficient compared to previous proposals, as we have parallelized the ideal extractable commitments within it.
The restriction to stateless malicious PUFs is significant, mainly since the protocol from [ASIACRYPT 2013, LNCS, vol. 8270, pp. 100--119] assumes malicious PUFs with unbounded state, thus limiting its applicability. However it is the only way we found to address the issues of the original construction. We hope that in future work this restriction can be lifted, and along the lines of our work, UC-secure commitments with
fewer restrictions on both the state and communication can be constructed.

\end{abstract}


\section{Introduction}
The universally composable (UC) security framework, introduced in \cite{CanettiEarly},
is a powerful standard ensuring that cryptographic protocols remain secure when they are composed with
other protocols and/or instances of themselves. 
UC-security has been extensively explored in the context of secure multi-party computation (MPC) \cite{UCCanetti,UCCommitmentsCanetti,UC2PMP,UCLimitations,UCGlobSetup,UCMPC,CanettiLate}, and it has been shown that in  the plain model, where no additional
assumptions are made, UC-secure MPC is impossible \cite{UCCommitmentsCanetti}. Therefore, to achieve this compelling composability property one needs to make further assumptions, see e.g. access to a common reference string \cite{UCCommitmentsCanetti}. In our work, we employ hardware assumptions, i.e. assumptions based on the physical properties of specific hardware components. In particular, we consider Physically Uncloneable Functions (PUFs), introduced in \cite{Pappu} and used in a line of successive works for constructing UC-secure protocols for MPC primitives \cite{PUFsUC,MalPUFs,UCComm,FeasibilityPUFs,BoundedStateOT,EverlastingUCCom}. 
A PUF is a device produced by a complex physical manufacturing
process, making it extremely difficult to clone. It can be evaluated by providing a physical stimulus, called the \emph{challenge}, to which it responds with a noisy output, called the \emph{response}. Due to their uncloneability, PUFs were initially used as hardware tokens for device identification and authentication, however additional security properties  have been considered and to date there is a vast bibliography on how to incorporate them in diverse security applications (see e.g. \cite{PUFsurvey} for a review).

\subsection{Prior related work}
Focusing on the works concerning the use of PUFs for UC-secure MPC,
the first is the one by Brzuska et al. \cite{PUFsUC}, where UC-secure
protocols for oblivious transfer (OT), bit commitment and key exchange were proposed. 
The PUFs are assumed to be \emph{trusted}, i.e. they have been produced through the prescribed manufacturing process and they have not been tampered with by an adversary. This assumption was later lifted in \cite{MalPUFs}, where \emph{malicious} PUFs were introduced to account for adversaries that can create PUFs with arbitrary
malicious behaviour. A malicious PUF, in general, is a hardware token that meets the syntactical requirements of an honest PUF. It could be a fake PUF, possibly programmed with malicious
code, or a PUF whose output on some input might depend on previous inputs. The latter is called a \emph{stateful} PUF and is in contrast with \emph{stateless} PUFs. While honest PUFs are necessarily stateless, malicious PUFs can be either stateful or stateless. In this model of stateful malicious PUFs, and assuming that a malicious PUF cannot interact with its creator
once it is sent away to another party, Ostrovsky et al., developed a computationally UC-secure commitment scheme \cite{MalPUFs}.  
They also proposed a commitment scheme with unconditional indistinguishability-based security
 in the malicious PUF model, as well as an unconditionally UC-secure OT protocol in the so-called \emph{oblivious-query model}, where the adversaries cannot create malicious PUFs, but they can query trusted
PUFs via non-prescribed processes. They also show that UC-security is
impossible when considering adversaries from both the malicious PUFs and the oblivious-query models. Subsequently, Damgård and Scafuro  constructed an
unconditionally UC-secure commitment scheme using the same model  from \cite{MalPUFs} for malicious PUFs \cite{UCComm}. They also showed that these commitments are unconditionally UC-secure in the stateless tamper-proof hardware token model \cite{KatzTokens}. Following this, Dachman-Soled et al.  derived also two important results \cite{FeasibilityPUFs}: the first is the impossibility of unconditionally secure OT, for both stand-alone and indistinguishability-based security, in the stateful malicious PUF model, and the second is the possibility of UC-secure OT in the stateless malicious PUF model. 
In all the works mentioned so far, malicious PUFs are assumed to maintain a priori unbounded states, however Badrinarayanan et al. 
 argued that this is a very strong assumption that might be practically irrelevant \cite{BoundedStateOT}. Therefore, they modified the adversarial model such that the malicious stateful PUF can maintain an
a priori bounded state, and showed that unconditional UC-secure computation of any functionality is possible in this model, employing the construction from \cite{UCComm}. Moreover, they introduced a new model, where the adversary can generate malicious stateless PUFs and encapsulate honest PUFs inside them even without the knowledge of the
functionality of the inner PUFs. The outer malicious PUF can
make oracle calls to the inner PUFs, and 
an honest party  is not able to tell whether they are interacting with an honest
PUF or a malicious PUF encapsulating honest ones. In this \emph{malicious encapsulation model}, they show that unconditional UC-secure computation of any functionality is still possible.
Finally, Magri et al.  introduced a more general and stronger model, that of \emph{fully malicious hardware tokens}, of which PUFs are a special case  \cite{EverlastingUCCom}. Such tokens do not have a priori
bounded states, arbitrary code might be installed inside them, and
they can encapsulate and decapsulate other – possibly fully
malicious – tokens within themselves. An everlastingly UC-secure commitment scheme was constructed under the Learning With Errors assumption, and it was shown that
everlastingly UC-secure OT is impossible using non-erasable honest tokens.

\subsection{Our contributions}
\label{subsec:contibutions}
We propose a new  model for malicious PUFs, the \emph{communicating malicious PUFs}, and explore the possibility of unconditionally UC-secure commitments in this model. The existence of such commitments is essential for various MPC tasks, and the methods in \cite{UCComm} that are used to derive them are very relevant in this context. Along these lines, in the quantum cryptography paradigm, i.e. assuming access to quantum channels, unconditionally UC-secure commitments imply quantum UC-secure OT \cite{QuantumMPC},  making the scheme in \cite{UCComm} very important in this respect as well, as it enables quantum UC-secure MPC relying solely on physical assumptions. Our motivation was to probe the limitations of this construction by strengthening the adversarial model. In particular, we allow a malicious PUF to communicate with its creator even when it is not in their hands and we aim to determine whether and under which conditions unconditional UC-security still holds. We believe that considering malicious communicating PUFs is a reasonable assumption not only from a theoretical point of view, as an extension of the adversarial model, but also from a practical point of view, since hardware realizations do not rule out this possibility.  
In the previous works, it was assumed that malicious PUFs do not communicate with their creator when they are sent away
to another party, as it is argued that if the functionality allowed this, then the model would be equivalent
to the plain UC model \cite{MalPUFs}, where UC-secure computation is impossible \cite{UCCommitmentsCanetti}.
However, this only holds if the communication is unbounded, and we believe it is relevant to consider \emph{bounded} communication and study the possibility of UC-secure commitments in this case. 
Following the original approach in \cite{UCComm}, we construct an extractable ideal (i.e., statistically hiding and binding) commitment scheme and,  using an adapted version of the unconditional black-box compiler developed therein, we obtain a UC-secure protocol for commitments in the communicating malicious PUFs model.  Note that our protocol is stated for the UC-secure commitment of bitstrings and not bits, which was the case in previous works.
We present our new model and the corresponding communicating malicious PUFs functionality in Section \ref{sec:ourmodel}, and  our proposal for UC-secure commitments in Section \ref{sec:ourprotocol}. Before this, in Sections 
\ref{sec:fixextractability} and \ref{sec:fixcompiler}, we address a few issues that we encountered while going through  previous works, and are essential for making these constructions rigorous. In particular, first we noticed that a malicious sender could break the extractability property of the ideal extractable commitment protocol from \cite{UCComm}. In Section \ref{subsec:issues}, we present this attack and discuss how to fix it. Succinctly, one could either change the extractor or the protocol, and since we could not find a way to change the extractor without giving excessive power, we changed the protocol. However, this change comes at the cost of having to assume that the malicious PUFs  are stateless, a restriction which is very strong in our view and which we would like to lift in future work; especially because the original protocol was designed to be secure against adversaries that can create malicious PUFs with unbounded states. Importantly, the same approach as in \cite{UCComm} is followed in \cite{BoundedStateOT}, where considering stateful malicious PUFs is essential.  Therefore, the attack that we found and, in turn, the way to fix it influence the results in \cite{BoundedStateOT}, as well. We also revised some of the PUF properties from previous works and introduce  new ones in order to rigorously develop and state our results and proofs. The revised and new PUF properties along with the reasons for modifying them are presented in Section \ref{subsec:additionalproperties}. Moreover, we noticed another issue in \cite{UCComm}: the UC-secure commitment protocol involves multiple commitments, and 
 in the UC-security proof it is implicitly assumed that the security properties of these commitments are preserved when they are employed collectively within a more complex protocol. This assumption, though, was not proven. To avoid possible flaws in the proof and to create a more efficient UC-secure protocol, through the parallelization of these commitments, we generalized 
the definition of a commitment scheme and its corresponding properties to enable the commitment of many strings at once, and adjusted accordingly the compiler from \cite{UCComm}. We should mention that our notation overall is different from that in previous works, as we wanted it to be generalizable for the purpose of this collective commitments scheme.

Even though the protocol we propose in our new adversarial model faces severe restrictions, i.e. the malicious PUFs have to be stateless and have no outgoing communication, that was the only way we found  to make it work against the attack we discovered for the ideal extractable commitments in \cite{UCComm}. Hopefully, our contribution will trigger further work on overcoming this restriction, and using the new tools and approach we propose here, UC-secure commitment schemes, possibly even more efficient, will be developed in strong adversarial models with less restrictions on the state and communication. 

\section{Our adversarial model: communicating malicious PUFs}
\label{sec:ourmodel}

Before presenting the adversarial model, we should mention that, just like in previous works, all adversaries are Probabilistic Polynomial-Time (PPT).

Let us start by briefly describing the malicious PUFs model without communication that we build upon, as it slightly differs from those in previous works.
Malicious PUFs were first introduced in \cite{MalPUFs} to model adversaries that can tamper with the manufacturing process of PUFs,
potentially embedding additional behaviors, such as query logging, into the PUF tokens.
To keep the model as general as possible, \cite{MalPUFs} places no restrictions on the malicious PUF families other than requiring
that they share the same syntax as honest PUF families.
In addition, the malicious PUF functionality is parameterized by both an honest and a malicious PUF family.
We argue, however, that this approach grants the adversary excessive power due to its lack of specificity.
For instance, without restrictions on the malicious PUF family, an adversary could, in an extreme case, create a PUF that replicates
the most recently generated honest PUF, which would violate uncloneability -- a fundamental property of PUFs.
Furthermore, as pointed out in \cite{FeasibilityPUFs}, restricting malicious PUFs to a fixed family prevents them from being
created adaptively throughout the protocols. To address these concerns, \cite{FeasibilityPUFs} proposes a more explicit model in which malicious PUFs are defined by arbitrary code,
potentially including oracle access to a freshly created honest PUF that remains inaccessible to other parties, and we follow this modelling as well.
 However, in \cite{FeasibilityPUFs} each honest PUF was assumed to generate a random response on the first query for each challenge and return the same response for repeated queries. We believe that this simplification might not be realistic\footnote{Consistency is not guaranteed when using fuzzy extractors (see Definition \ref{def:fuzzyextractor}): if we query $\PUF(s)$ and receive different responses, $\sigma$ and $\sigma'$, the resulting outputs $st$ and $st'$ from the fuzzy extractor may not match.}, therefore we chose to retain the original approach for modeling honest PUFs.

We model malicious PUFs, denoted as \MPUF{},  as follows: each \MPUF{} consists of a finite set\footnote{Since we consider PPT adversaries, the size of this set must be polynomially bounded in the security parameter $n$.} of freshly created honest PUFs, that are inaccessible to other parties and a possibly stateful Turing machine $M$ with oracle access to those PUFs. Querying \MPUF{} on a challenge $s$ thus amounts to querying $M$ on $s$, which may change $M$’s state. For security parameter $n$, the machine $M$ operates with $\kstate (n)$ bits of memory. The state size can be:
\begin{itemize}
    \item \textbf{bounded,} in which case it is represented as a function $\kstate:\mathbb{N}\rightarrow\mathbb{N}$, or
    \item \textbf{unbounded, }in which case $\kstate=\infty$.
\end{itemize}
Notice that, for simplicity, our malicious PUFs have access to
a single honest PUF, however the functionality can be easily generalized to account for access to 
multiple honest PUFs.
The corresponding functionality \FMPUF{} is depicted in Fig. \ref{fig:mpuf functionality} in Appendix
\ref{appendix:maliciousfunctionality}.

We can now proceed to our new adversarial model 
 which allows each communicating malicious PUF and its creator $P$ to communicate in a possibly bounded manner, and we denote it as \ComMPUF. 
We define two types of communication:
\begin{itemize}
    \item \textbf{Incoming communication:}
    Information sent from $P$ to \ComMPUF{}, using at most a total of $\kin(n)$ bits.
    This can potentially change \ComMPUF{}'s state and cause it to reply with a message.

    \item \textbf{Outgoing communication:}
    Information sent from \ComMPUF{} to $P$, using at most a total of $\kout(n)$ bits.
    This can happen when \ComMPUF{} is queried or, as mentioned earlier, when it receives a message from $P$.
\end{itemize}

Note that  $\kin(n)$ and  $\kout(n)$ depend on the security parameter $n$, and refer to the total communication  and not the size of each message, $m_\msf{in}$ and $m_\msf{out}$.
The corresponding functionality is depicted in Fig. \ref{fig:puf functionality}, and it is parameterized by a PUF family
\puffamily{} (see Definition \ref{def:PUFamily})  and  $\kin(n)$, $\kout(n)$ and $\kstate(n)$, which is the bound on the size of the state of the potentially  stateful communicating malicious PUF.
Each type of communication is tracked with a counter
and once the corresponding bound is reached, no further communication of that type occurs. For simplicity, we do not include the counter in the functionality. Finally, just like in the case without communication, we only consider access of the malicious PUF to a single honest PUF, but we can easily generalize the functionality.

\begin{figure}[ht]
    \centering
    \noindent\fbox{%
        \parbox{0.95\linewidth}{
            \begin{center}
                \textbf{Communicating Malicious PUF Functionality $\mathcal{F}_\ComMPUF(\puffamily, \kstate, \kin, \kout)$}
            \end{center}
        
            Run with parties $\mathbb{P} = \set{P_1, \cdots, P_k}$ and adversary \Sim{}.
            Create empty lists $\mathcal{L}$ and $\mathcal{M}$.
            \begin{itemize}
                \item Upon receiving $(\msf{sid}, \msf{init}, \msf{honest}, P)$ or $(\msf{sid}, \msf{init}, \msf{malicious}, M, P)$ from
                $P \in \mathbb{P} \cup \set{\Sim}$, check whether $\mathcal{L}$ contains some $(\msf{sid}, *, *, *, *)$:
                \begin{itemize}
                    \item If so, turn to the waiting state;
                    \item Else, draw $\msf{id} \gets \pufsample_n$, add $(\msf{sid}, \msf{honest}, \msf{id}, P, \bot)$
                    to $\mathcal{L}$ and send $(\msf{sid}, \msf{initialized})$ to $P$.
                    Furthermore, in the second case, add $(\msf{sid}, P, M)$ to $\mathcal{M}$.
                \end{itemize}

                \item Upon receiving $(\msf{sid}, \msf{eval}, P, s)$ from $P \in \mathbb{P} \cup \set{\Sim}$, check whether $\mathcal{L}$ contains
                $(\msf{sid}, \msf{mode}, \msf{id}, P, \bot)$ or $(\msf{sid}, \msf{mode}, \msf{id}, \bot, *)$ in case $P = \Sim{}$:
                \begin{itemize}
                    \item If it is not the case, turn to the waiting state;
                    \item Else, if $\msf{mode} = \msf{honest}$, run $\sigma \gets \pufeval_n \prn{\msf{id}, s}$
                    and send $(\msf{sid}, \msf{response}, s, \sigma)$ to $P$;
                    \item Else, if $\msf{mode} = \msf{malicious}$, get $(\msf{sid}, \tilde{P}, M)$ from $\mathcal{M}$ and run
                    $(\sigma, m_\msf{out}) \gets M (\msf{input}, s)$.
                    Then, send $(\msf{sid}, \msf{response}, s, \sigma)$ to $P$ and $(\msf{sid}, \msf{outmsg}, m_\msf{out})$ to $\tilde{P}$.
                \end{itemize}

                \item Upon receiving $(\msf{sid}, \msf{inmsg}, P, m_\msf{in})$ from $P \in \mathbb{P} \cup \set{\Sim}$, check whether $\mathcal{M}$ contains
                some $(\msf{sid}, P, M)$:
                \begin{itemize}
                    \item If it is not the case, turn to the waiting state;
                    \item Else, run $m_\msf{out} \gets M (\msf{msg}, m_\msf{in})$ and send $(\msf{sid}, \msf{outmsg}, m_\msf{out})$ to $P$.
                \end{itemize}

                \item Upon receiving $(\msf{sid}, \msf{handover}, P_i, P_j)$ from $P_i$, check whether $\mathcal{L}$
                contains some $(\msf{sid}, *, *, P_i, \bot)$:
                \begin{itemize}
                    \item If it is not the case, turn to the waiting state;
                    \item Else, replace the tuple $(\msf{sid}, \msf{mode}, \msf{id}, P_i, \bot)$ in $\mathcal{L}$ with
                    $(\msf{sid}, \msf{mode}, \msf{id}, \bot, P_j)$ and send $(\msf{sid}, \msf{invoke}, P_i, P_j)$ to \Sim{}.
                \end{itemize}

                \item Upon receiving $(\msf{sid}, \msf{ready}, \Sim{})$ from \Sim{}, check whether $\mathcal{L}$ contains
                $(\msf{sid}, \msf{mode}, \msf{id}, \bot, P_j)$:
                \begin{itemize}
                    \item If it is not the case, turn to the waiting state;
                    \item Else, replace the tuple $(\msf{sid}, \msf{mode}, \msf{id}, \bot, P_j)$ in $\mathcal{L}$ with
                    $(\msf{sid}, \msf{mode}, \msf{id}, P_j, \bot)$, send $(\msf{sid}, \msf{handover}, P_i)$ to $P_j$ and add
                    $(\msf{sid}, \msf{received}, P_i)$ to $\mathcal{L}$.
                \end{itemize}

                \item Upon receiving $(\msf{sid}, \msf{received}, P_i)$ from \Sim{}, check whether $\mathcal{L}$ contains that tuple.
                \begin{itemize}
                    \item If so, send $(\msf{sid}, \msf{received})$ to $P_i$;
                    \item Otherwise, turn to the waiting state.
                \end{itemize}
            \end{itemize}
        }
    }%
    \caption{The communicating malicious PUF functionality, $\mathcal{F}_\ComMPUF$.}
    \label{fig:puf functionality}
\end{figure}


\section{the ideal extractable commitment from \cite{UCComm}}
\label{sec:fixextractability}
\subsection{PUF properties: new and revised}
\label{subsec:additionalproperties}

Following the works mentioned above, we use the definition for a PUF family, introduced in \cite{PUFsUC}. A PUF family \puffamily{} is defined by two stateless probabilistic Turing machines \pufsample{} and \pufeval{}, which are not necessarily efficient.
The index sampling algorithm, \pufsample{}, outputs an index $\msf{id}$ and represents the manufacturing process of a PUF, while 
the evaluation algorithm, \pufeval{}, takes a challenge $s$ as input and outputs the corresponding response $\sigma$.

\begin{definition}
    Let $\puffamily = (\pufsample, \pufeval)$ be a pair of algorithms that run on security parameter $n$.
    We say that \puffamily{} is a $\prn{\rg, \dnoise}$-PUF family if it satisfies the following properties:
    \begin{itemize}
        \item \textbf{Index sampling:}
        The sampling algorithm $\pufsample_n$ outputs an index $\msf{id}$ from a fixed index set $\mathcal{I}_n$.
        Each $\msf{id} \in \mathcal{I}_n$ corresponds to a set of distributions $\mathcal{D}_\msf{id}$:
        for each challenge $s \in \bin^n$, $\mathcal{D}_\msf{id}(s)$ is a distribution on $\bin^{\rg(n)}$.

        \item \textbf{Evaluation:}
        For all challenges $s \in \bin^n$, the evaluation algorithm $\pufeval_n \prn{\msf{id}, s}$
        outputs a response $\sigma \in \bin^{\rg(n)}$ according to the distribution $\mathcal{D}_\msf{id}(s)$.

        \item \textbf{Bounded noise:}
        For all indices $\msf{id} \in \mathcal{I}_n$ and challenges $s \in \bin^n$,
        if $\sigma$ and $\sigma'$ are obtained from running $\pufeval_n \prn{\msf{id}, s}$ twice, then $\dham(\sigma, \sigma') < \dnoise(n)$,
        where \dham{} is the Hamming distance.
    \end{itemize}
    \label{def:PUFamily}
\end{definition}

The main security property of PUFs, unpredictability, is formalized in \cite{PUFsUC} using 
average min-entropy.
Let

\begin{itemize}
    \item $\PUF \gets \puffamily$ mean $\msf{id} \gets \pufsample_n$ and then restricting the name ``\PUF{}'' to this \msf{id};
    \item $\sigma \gets \PUF(s)$ mean $\sigma \gets \pufeval_n(\msf{id}, s)$;
    \item $\bm{\sigma}_{\bm{q}} \gets \PUF(\bm{q})$ mean $\bm{\sigma}_{\bm{q}} \gets \prn{\pufeval_n(\msf{id}, q)}_{q \in \bm{q}}$,
    where $\bm{q}$ is a list of queries.
\end{itemize}
\begin{definition}
    We say that a \prn{\rg, \dnoise}-PUF family $\mathcal{P} = (\pufsample, \pufeval)$ is \prn{\dmin, \entropybound}-unpredictable if
    for any $s \in \bin^n$ and list of queries $\bm{q}$ of polynomial size in $n$, the following condition holds:
    \[
        \dham(\bm{q}, s) \geq \dmin(n) \implies \condavgminentropy{\PUF(s)}{\PUF(\bm{q})} \geq \entropybound(n),
    \]
    where $\dham(\bm{q}, s) \coloneqq \min_i \dham(q_i, s)$.
    Such a PUF-family is called a \prn{\rg, \dnoise, \dmin, \entropybound}-PUF family.
\end{definition}

Since the PUF evaluation is inherently noisy, the fuzzy extractors (FE) from \cite{FuzzyExtractors} 
are used in \cite{PUFsUC} to convert noisy, high-entropy outcomes
of PUFs into reproducible random values.
\begin{definition}
    Let $\mathcal{M}$ be a metric space with distance function \msf{dis}.
    Consider a pair of efficient randomized algorithms $\fe = \prn{\Gen, \Rep}$ such that:
    \begin{itemize}
        \item \msf{Gen}, on input $w \in \mathcal{M}$, outputs a pair $(st, p)$, where $st \in \bin^l$ is called the secret string
        and $p \in \bin^*$ is called the helper data string;
        \footnote{
            Here, we use Kleene star notation, so $\bin^*$ denotes the set of all finite binary strings.
        }
        
        \item \msf{Rep}, on input an element $w \in \mathcal{M}$ and a helper data string $p \in \bin^*$, outputs a string $st$.
    \end{itemize}

    We say that \fe{} is an average-case \prn{\fem, \fel, \fet, \feeps}-FE if:
    \begin{itemize}
        \item \textbf{Correctness:}
        For all $w, w' \in \mathcal{M}$, if $\msf{dis} \prn{w, w'} \leq t$ and $(st, p) \gets \Gen(w)$, then $\Rep(w', p) = st$;
        
        \item \textbf{Security:}
        Consider some random variables $W$, which takes values $w \in \mathcal{M}$, and \EXTRA{}, which corresponds to some additional information
        related to $W$.
        Let $D_0$ and $D_1$ denote the distributions of the outputs from the programs depicted in Fig. \ref{fig:fe security}.
        Then,
        \[
            \condavgminentropy{W}{\EXTRA} \geq \entropybound \implies \SD \prn{ D_0, D_1 } < \feeps.
            \footnote{
                \SD{} denotes the statistical distance.
            }
        \]
        \begin{figure}
            \begin{pchstack}[center, space = 1cm]
                \procedure{$D_0$}{
                    (w, \extra) \gets (W, \EXTRA) \\
                    (st, p) \gets \Gen(w) \\
                    \pcreturn (st, p, \extra)
                }
    
                \procedure{$D_1$}{
                    (w, \extra) \gets (W, \EXTRA) \\
                    (st, p) \gets \Gen(w) \\
                    u \sample \bin^\fel \\
                    \pcreturn (u, p, \extra)
                }
            \end{pchstack}
            \caption{Programs for the security property of FEs.}
            \label{fig:fe security}
        \end{figure}
    \end{itemize}
    \label{def:fuzzyextractor}
\end{definition}
We refer to average-case FEs simply as FEs.
Now, consider the functions \fem{}, \fel{}, \fet{} and \feeps{} and algorithms \prn{\Gen, \Rep} that run on the security parameter $n$.
We say that $\fe = \prn{\Gen, \Rep}$ is a \prn{\fem, \fel, \fet, \feeps}-FE family if,
for each $n \in \N$, $\fe_n = \prn{\Gen_n, \Rep_n}$ is a \prn{\fem(n), \fel(n), \fet(n), \feeps(n)}-FE.
The following definition specifies the parameters for a FE family to be matching with a given PUF family.  
\begin{definition}
    Let \puffamily{} be a \prn{\rg, \dnoise, \dmin, \entropybound}-PUF family and \fe{} be a
    \prn{\fem_\fe, \fel_\fe, \fet_\fe, \feeps_\fe}-FE family.
    We say that \puffamily{} and \fe{} have matching parameters if:
    \begin{itemize}
        \item $\fe_n$ is defined on the metric space $\bin^{\rg(n)}$ with Hamming distance \dham{};
        \item $\fem_\fe(n) = \entropybound(n)$;
        \item $\fet_\fe(n) = \dnoise(n)$;
        \item $\feeps_\fe(n) = \abs{ \varepsilon(n) }$.
        \footnote{
            Here, $\varepsilon(n)$ denotes an arbitrary negligible function.
            We use this notation throughout out work.
        }
    \end{itemize}
\end{definition}

The following properties hold for PUF families and FE families with matching parameters:
    
\textbf{Response consistency:}
    Let $\sigma, \sigma'$ be responses of \PUF{} when queried twice with the same challenge $s$.
    If $(st, p) \gets \msf{Gen}(\sigma)$, then $\msf{Rep}(\sigma', p) = st$.

\textbf{Almost-uniformity:}
    Let $s \in \bin^n$ and consider the programs depicted in Fig. \ref{fig:almost unif}.
    Then, for each $n \in \N$, 
    \begin{align*}
        &\condavgminentropy{\PUF(s)}{\EXTRA} \geq \entropybound(n)\\ 
        &\implies \SD \prn{ D_0, D_1 } < \feeps_\fe(n).
    \end{align*}
    \begin{figure}[ht]
        \begin{pchstack}[center, space = 1cm]
            \procedure{$D_0$}{
                \PUF \gets \puffamily \\
                \sigma \gets \PUF(s) \\
                (st, p) \gets \Gen(\sigma) \\
                \extra \gets \EXTRA \\
                \pcreturn (st, p, \extra)
            }
    
            \procedure{$D_1$}{
                \PUF \gets \puffamily \\
                \sigma \gets \PUF(s) \\
                (st, p) \gets \Gen(\sigma) \\
                u \sample \bin^{\rg(n)} \\
                \extra \gets \EXTRA \\
                \pcreturn (u, p, \extra)
            }
        \end{pchstack}
        \caption{Programs for the almost-uniformity property.}
        \label{fig:almost unif}
    \end{figure}

Notice that if we take the random variable $\EXTRA = \PUF(\bm{q})$ with $\dham(\bm{q}, s) \geq \dmin(n)$, 
unpredictability ensures $\condavgminentropy{\PUF(s)}{\EXTRA} \geq \entropybound(n)$ and thus 
almost-uniformity ensures $\SD \prn{ D_0, D_1 } < \feeps_\fe(n)$.
This particular case, known as \textbf{extraction independence}, is the approach used in 
\cite{PUFsUC}.
However, we adopt this more general property, as it offers greater flexibility in security proofs.

Furthemore, an additional property called \textbf{well-spread domain} was also proven in 
\cite{PUFsUC}.
Informally, it states that if an honest party generates a challenge $s$ uniformly at random,
then an adversary attempting to choose a challenge close to $s$ (that is, within a distance 
smaller than \dmin{}) should only have a negligible probability of success \cite{PUFsUC};
this was achieved by assuming $\dmin(n) \in o(n / \log(n))$.
The goal was to use it along with extraction independence to arrive at
what we call the \textbf{indistinguishability property}: if an honest party generates a challenge 
$s$ uniformly at random, then the output of the FE applied to the response $\PUF(s)$ should be 
indistinguishable from a uniformly random response, even if the adversary has access to \PUF{} 
itself.

However, ensuring that this still holds when the adversary has access to $\PUF(s)$
-- we call this the \textbf{close query (CQ) property} --
is essential for indistiguishability, and it turns out to be non-trivial, as $\PUF(s)$ may 
inadvertently reveal some information about $s$.
Since this had not been considered in previous works, instead of $\dmin(n) \in o(n / \log(n))$ 
we had to make the following stronger assumption:

\textbf{Preimage entropy:}
    Let \puffamily{} be a PUF family and consider the neighborhood $B_n^d(x) = \set{ y \in \bin^n : \dham(x, y) < d }$ around $x \in \bin^n$.
    \footnote{Notice that the size of these neighborhoods does not depend on $x$.
    More specifically, $\abs{B_n^d} = \sum_{k = 0}^{d - 1} \binom{n}{k}$.}
    Then,
        \[
            \abs{B_n^{\dmin(n)}} 2^{-\condavgminentropy{S}{\PUF(S)}} = \varepsilon(n).
        \]

In Appendix \ref{appendix:wellspread}, we formally present the CQ property and prove that a 
PUF family satisfying the preimage entropy property also satisfies the CQ property.
In Appendix \ref{appendix:indist}, we formally present the indistinguishability property and prove that 
it follows from the CQ property.

The next property we need is what we call the \textbf{challenge-response pair (CRP) guessing} property, 
which informally states that it should be hard to generate a valid CRP of a PUF, without querying the 
PUF ``close'' to the corresponding challenge.
\footnote{
    More precisely, this refers to a tuple $(s,st,p)$, derived from an actual CRP $(s,\sigma)$ using a FE. 
}
In this work, we assume that the CRP guessing property holds, even though we would ultimately like to 
eliminate it by reducing it to some of the other PUF properties.
However, it seems that such reductions require to use specific descriptions for the FE.
We opted to assume that the CRP property holds instead of restricting to specific FE descriptions.
A short discussion about these possible reductions, as well as the formal statement of the CRP guessing 
property can be found in Appendix \ref{appendix:crp}.

Finally, we also need what we call the \textbf{test query} property. 
Suppose an honest party sends its PUF to an adversary who then returns it.
How can the honest party be sure that the PUF it received is indeed the one it originally created?
In \cite{UCComm}, it was noted that the honest party could query the PUF on a randomly selected 
challenge (a \textit{test query}) and verify the response upon receiving the PUF back.
If the returned PUF passes this test, then, with overwhelming probability, it is the original PUF.
However, a formal proof of this property was not provided.
Its validity depends on the model of malicious PUFs considered.
For example, in \cite{BoundedStateOT}, where an adversary can construct a
malicious PUF that encapsulates a PUF received from a different party, the adversary could create a PUF 
that differs from the original one only on a specific challenge.
Clearly, the test query property would fail in that scenario, which highlights the need to prove it 
with caution.
In Appendix \ref{appendix:testquery}, we present the formal statement of the test query property, as well as the 
proof that it follows from the preimage entropy and CRP properties.

\subsection{Attack on the ideal extractable commitment from \cite{UCComm}}
\label{subsec:issues}

Let us first present some definitions from \cite{UCComm}.
\begin{definition}
    A \textup{\textbf{commitment scheme in the \FComMPUF{}-hybrid model}} is a tuple of PPT algorithms $\com = (\S, \R)$
    that run on security parameter $n$ and have oracle access to \FComMPUF{}, implementing the following functionality:
    \begin{itemize}
        \item \textup{\textbf{Inputs:}}
        \S{} receives as input a string $x \in \bin^k$.
        
        \item \textup{\textbf{Commitment phase:}}
        \S{} interacts with \R{} to commit to the string $x$; we denote this by $\textup{\texttt{Commit}}^\com (x)$.

        \item \textup{\textbf{Decommitment phase:}}
        \S{} sends $x$ and some decommitment data to \R{}, which outputs either $x$, if it accepts the decommitment, or $\bot$, otherwise;
        we denote this by $\textup{\texttt{Open}}^\com (x)$.
    \end{itemize} 
\end{definition}

Notice that, in the definition above, we allow commitments to strings of any length $k$, rather than 
restricting them to bit commitments as was done in \cite{UCComm}.
\begin{definition}
    A commitment scheme $\com = (\S, \R)$ is an \textup{\textbf{ideal commitment scheme in the \FComMPUF{}-hybrid model}}
    if it satisfies the following properties:
    \begin{itemize}
        \item \textup{\textbf{Completeness:}}
        If \S{} and \R{} follow their prescribed strategy, then \R{} accepts the decommitment with probability 1.

        \item \textup{\textbf{Computationally Hiding:}}
        Let $x^0$ and $x^1$ be different strings in $\bin^k$.
        Consider the interaction between an honest sender \S{} and a malicious receiver \Radv{} depicted in Fig. \ref{fig:single hiding}.
        \footnote{
            Here, \S{} acts as a challenger and \Radv{} as a distinguisher.
        }
        \begin{figure}[ht]
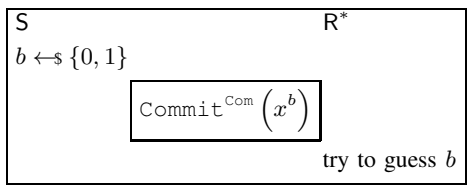

            \textup{
                \procb{}{
                    \S \< \< \Radv \\
                    b \sample \bin \< \< \\
                    \< \pcbox{ \texttt{Commit}^\com \prn{ x^b } } \< \\
                    \< \< \text{try to guess } b
                }
            }
            \caption{Hiding interaction.}
            \label{fig:single hiding}
        \end{figure}

        We say that \com{} is \textbf{computationally hiding} if for all PPT malicious receivers \Radv{},
        \textup{
            \[
                \prob{ \Radv \prn{ \textup{\texttt{Commit}}^\com \prn{ x^B } } = B } = \frac{1}{2} + \varepsilon(n).
            \]
        }
        
        \item \textup{\textbf{Statistically Binding:}}
        Consider the interaction between a malicious sender \Sadv{} and an honest receiver \R{} depicted in Fig. \ref{fig:single binding},
        where \texttt{Commit} denotes the commitment phase of \com{}, in which \Sadv{} may behave in a malicious way.
        Furthermore, $d_1$ and $d_2$ denote two sequences of actions that lead to decommitments.
        \footnote{
            This includes the case where \Sadv{} sends different messages to some PUF it created, depending on the string
            it is going to decommit to.
        }
        We say that \Sadv{} is successful when $d_1$ and $d_2$ lead to successful decommitments to different strings.
        \begin{figure}[ht]
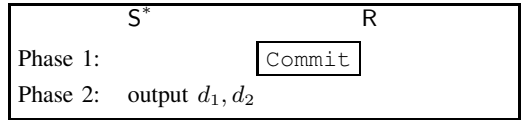

            \textup{
                \procb{}{
                    \< \Sadv \< \< \R \hspace{50pt} \\
                    \text{Phase 1:} \hspace{10pt} \< \< \pcbox{ \texttt{Commit} } \< \\
                    \text{Phase 2:} \hspace{10pt} \< \text{output } d_1, d_2 \< \<
                }
            }
            \caption{Binding interaction.}
            \label{fig:single binding}
        \end{figure}

        We say that \com{} is \textbf{statistically binding} if all malicious senders \Sadv{} in the interaction depicted
        in Fig. \ref{fig:single binding} succeed with negligible probability.  
    \end{itemize}
\end{definition}

In the definition above, we adopted a different (but equivalent) formulation for hiding, where a 
malicious receiver \Radv{} acts as a distinguisher attempting to guess the committed string.
We found that this provides a clearer framework for proving our results.

The protocols constructed in \cite{UCComm} were described as ideal, 
meaning both statistical hiding and statistical binding.
However, they were only statistically hiding under the assumption that the adversary makes a polynomial 
number of queries to \FComMPUF{}.
Rather than relying on this assumption, we instead restrict our analysis to PPT adversaries, who are 
inherently limited to making a polynomial number of queries.  
However, for simplicity, we continue to refer to them as ideal.
\begin{definition}
    An algorithm $M$ has \textup{\textbf{interface access to the functionality \FComMPUF{}}} 
    with respect to a protocol in the \FComMPUF{}-hybrid model if $M$ has oracle access to \FComMPUF{} and can observe any query
    made by any party to honest PUFs during the protocol execution.
\end{definition}
\begin{definition}
    \label{def:single extractability}
    A commitment scheme $\com = (\S, \R)$ is an \textup{\textbf{ideal extractable commitment scheme in the \FComMPUF{}-hybrid model}}
    if \com{} is an ideal commitment and there exists a PPT extractor \ext{}
    having interface access to \FComMPUF{} such that, for all malicious senders \Sadv{}, it interacts with \Sadv{} as depicted in Fig. \ref{fig:single extractability}
    and satisfies the following properties:
    \begin{itemize}
        \item \textup{\textbf{Simulation:}}
        The view of \Sadv{}
        \footnote{
            That is, the distribution of the messages exchanged during the interaction, as well as \Sadv{}'s private
            information.
        }
        when interacting with \ext{} is identical to the view when interacting with an honest receiver \R{}.

        \item \textup{\textbf{Extraction:}}
        \Sadv{} only decommits successfully to some string that is different from what \ext{} outputs
        \footnote{
            Notice that if \ext{} outputs $\bot$, this means that \Sadv{} cannot decommit successfully to any string.
        }
        with negligible probability, that is,
        \[
            \prob{\Sadv{} \text{ decommits successfully to } X \neq X^*} = \varepsilon(n).
        \]
    \end{itemize}
    \begin{figure}[ht]
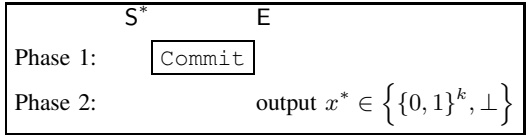

        \textup{
            \procb{}{
                \< \Sadv{} \< \< \ext \hspace{50pt} \\
                \text{Phase 1:} \hspace{10pt} \< \< \pcbox{ \texttt{Commit} } \< \\
                \text{Phase 2:} \hspace{10pt} \< \< \< \text{output } x^* \in \set{\bin^k, \bot}
            }
        }
        \caption{Extractability interaction.}
        \label{fig:single extractability}
    \end{figure}    
\end{definition}

In \cite{UCComm}, the ideal extractable commitment \ExtPUF{} (depicted in Fig. \ref{fig:original 
extpuf}) was constructed from the ideal commitment \CPUF{} from \cite{MalPUFs} (depicted in Fig. 
\ref{fig:single nocomm cpuf} in Appendix \ref{appendix:newproofs}), which is essentially based on the one from \cite{Naor}.
We found, however, some issues in this construction.

The first issue concerns the proof for \CPUF{} presented in \cite{MalPUFs}.
Specifically, the hiding argument does not fully address the potential of an adversary learning about 
$s$ after receiving information that depends on $\PUF(s)$.
This is precisely where the aforementioned CQ property comes into play.
Moreover, the proof for binding appears to be incomplete.
To address this, we adapt the original proof from \cite{Naor}.
Our hiding argument relies on the indistinguishability property of PUFs, while our binding 
argument generalizes the proof to account for PUFs' inherent noise by incorporating certain 
entropy properties.
Our proof can be found in Appendix \ref{appendix:newproofs} (Theorem \ref{single nocomm cpuf}).

The second, and most critical, issue has to do with \ExtPUF{} protocol and its corresponding
extractor (see Fig. \ref{fig:original extpuf} and \ref{fig:original extractor}).
This protocol uses the following parameters:
\begin{itemize}
    \item a PUF family $\puffamily_\E$ and a fuzzy extractor family $\fe_\E = (\Gen_\E, \Rep_\E)$
    \footnote{
        For simplicity, however, in the protocol description we omit the notation specific fuzzy extractors and write \Gen{} and \Rep{}
        generically, assuming the appropriate fuzzy extractor is used with each PUF.
        This convention will be followed in subsequent protocols as well.
    }
    with matching parameters;
    \item a family $(\Enc, \Dec)$ of $(kl, L, 2\prn{\dmin}_\E - 1)$-error-correcting codes
    \footnote{
        See Definition \ref{def:ecc} in Appendix \ref{appendix:newproofs}.
    }
    for some $L$, with $l(n) = 3n$;
    \item a PUF family $\puffamily_\CS$ and a fuzzy extractor family $\fe_\CS = (\Gen_\CS, \Rep_\CS)$ with matching parameters
    such that $\fel_{\fe_\CS}(n) = kl$;
    \item a PUF family $\puffamily_\CR$ and a fuzzy extractor family $\fe_\CR = (\Gen_\CR, \Rep_\CR)$ with matching parameters
    such that $\fel_{\fe_\CR}(n) = ml$, with $m = \left| st_\E \mathbin \Vert p_\E \right|$.
\end{itemize}
Furthermore, in the protocol description \TQ{} denotes a test query.

We are now ready to describe an attack on this protocol.
Consider a malicious sender \Sadv{} that behaves just like an honest \S{} committing to the 
bit 0, except that it also queries \PUFE{} on $\Enc \prn{ st_\CS \oplus r_\CS }$.
Then, $c_\CS = st_\CS$ and
$\mathcal{Q} = \set{\Enc \prn{st_\CS}, \Enc \prn{ st_\CS \oplus r_\CS }}$, which means
\begin{itemize}
    \item for $q = \Enc \prn{st_\CS}$, we have $\msf{Dec}(q) \oplus \prn{ 0^l \land r_\CS } = st_\CS = c_\CS$ and so 0 is extracted;
    \item for $q = \Enc \prn{ st_\CS \oplus r_\CS }$, we have $\msf{Dec}(q) \oplus \prn{ 1^l \land r_\CS } = st_\CS \oplus r_\CS \oplus r_\CS = c_\CS$
    and so 1 is extracted.
\end{itemize}
Therefore, in this case \ext{} always outputs $\bot$.
However, \Sadv{} can always decommit successfully to 0, breaking the extraction property.

The original goal in \cite{UCComm} behind \ExtPUF{} was to base its extractability on the 
binding property of \CPUF{}.
Indeed, it was meant to force \Sadv{} to query \PUFE{} on an opening of the commitment 
$\CPUF(x)$, and thus binding it to $x$.
However, as we have seen, $st_\CS$ is not an opening of that commitment.

To prevent this attack, our approach will be to adjust the protocol so that \S{}
returns \PUFE{} before \R{} sends $r$, as depicted in Fig. \ref{fig:single nocomm extpuf}.
An immediate consequence of this change is that \PUFE{} must be stateless; 
otherwise \Radv{} could learn information about the string being committed.

In this modified setup, \S{} no longer needs to commit to $st_\E \mathbin \Vert p_\E$.
Indeed, we do not need to worry about the malicious senders only querying \PUFE{} in the 
decommitment phase, since they must return \PUFE{} during the commitment phase.
Moreover, the length of the string $r$ can be reduced to $kn$ instead of $3kn$, as the 
protocol no longer depends on the statistical binding of \CPUF{}.
To see why, notice that in \CPUF{}, a malicious sender could indirectly communicate 
information about $r$ to \PUF{} by selecting an appropriate $s$ in the decommitment phase, 
which is what led us to extend the size of $r$ in the first place.
However, in this situation, the sender is required to query \PUFE{} with the response from 
\PUF{} before receiving $r$, so this is not a problem anymore.
In summary, the protocol parameters are now the following:
\begin{itemize}
    \item a PUF family $\puffamily_\E$ and a fuzzy extractor family $\fe_\E = (\Gen_\E, \Rep_\E)$ with matching parameters;
    \item a family $(\Enc, \Dec)$ of $\prn{ kn, L, 2\prn{\dmin}_\E - 1 }$-error-correcting codes for some $L$;
    \item a PUF family $\puffamily$ and a fuzzy extractor family $\fe = (\Gen, \Rep)$ with matching parameters
    such that $\fel_\fe(n) = kn$.
\end{itemize}

Finally, the original protocol can also be simplified by noticing that we can minimize the 
number of PUF exchange phases by sending \PUF{} and \PUFE{} simultaneously.
This is a first step toward achieving a more efficient UC-secure commitment protocol.
Our proof can be found in Appendix \ref{appendix:newproofs} (Theorem \ref{single nocomm extpuf}).

\begin{figure*}[ht]
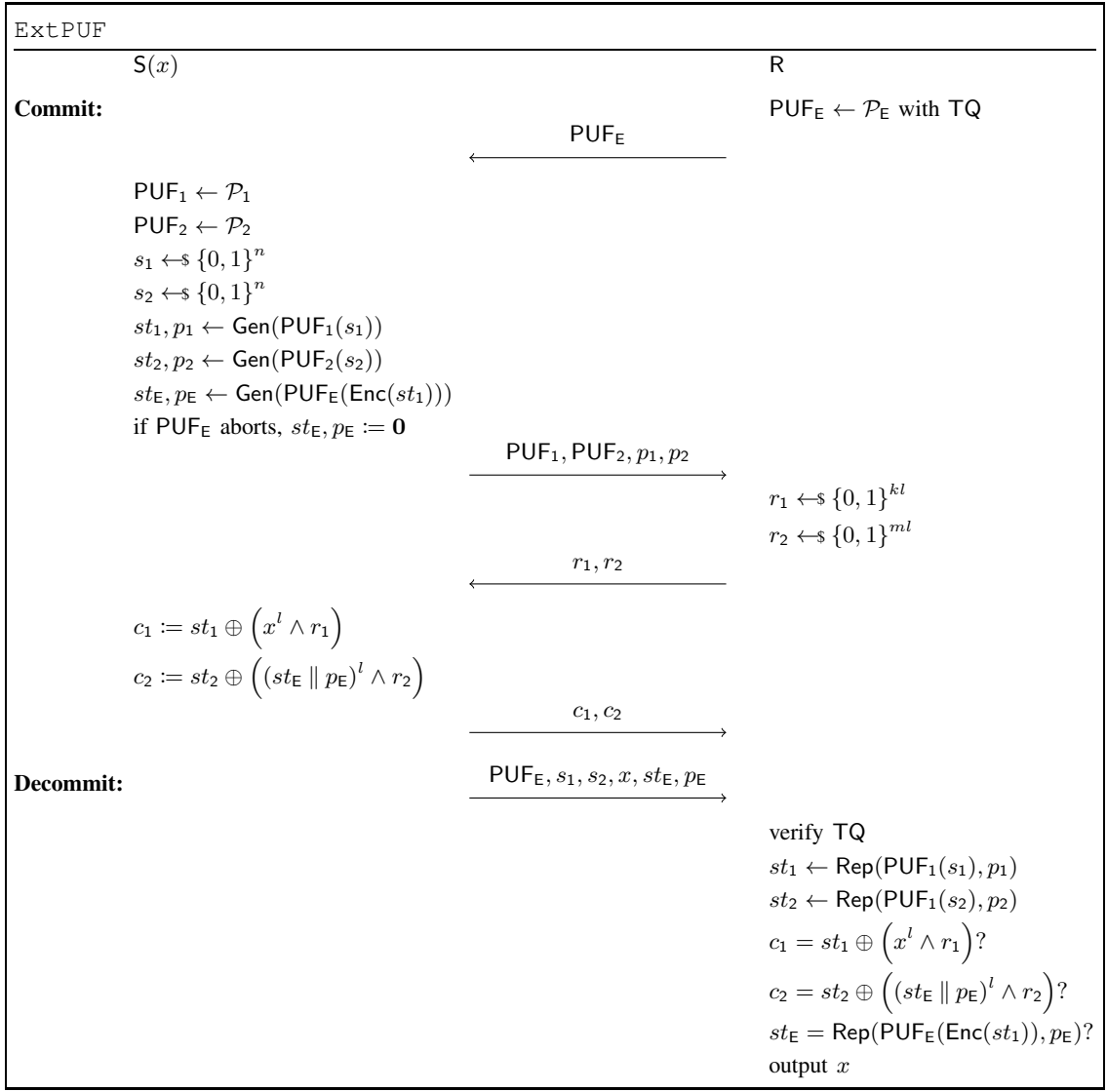

    \procb{\ExtPUF}{
        \< \S(x) \< \< \R \< \\[1ex]
        \textbf{Commit:   } \< \< \< \PUFE \gets \puffamily_\E \text{ with } \TQ \\[-1.5ex]
        \< \< \sendmessageleft*{\PUFE} \< \\
        \< \PUFCS \gets \puffamily_\CS \< \< \\
        \< \PUFCR \gets \puffamily_\CR \< \< \\
        \< s_\CS \sample \bin^n \< \< \\
        \< s_\CR \sample \bin^n \< \< \\
        \< st_\CS, p_\CS \gets \Gen (\PUFCS(s_\CS)) \< \< \\
        \< st_\CR, p_\CR \gets \Gen (\PUFCR(s_\CR)) \< \< \\
        \< st_\E, p_\E \gets \Gen (\PUFE(\Enc(st_\CS))) \< \< \\
        \< \text{if \PUFE{} aborts, } st_\E, p_\E \coloneqq \bm{0} \< \< \\[-1.5ex]
        \< \< \sendmessageright*{\PUFCS, \PUFCR, p_\CS, p_\CR} \< \\[-1.5ex]
        \< \< \< r_\CS \sample \bin^{kl} \\
        \< \< \< r_\CR \sample \bin^{ml} \\[-1.5ex]
        \< \< \sendmessageleft*{r_\CS, r_\CR} \< \\
        \< c_\CS \coloneq st_\CS \oplus \prn{ x^l \land r_\CS } \< \< \\
        \< c_\CR \coloneq st_\CR \oplus \prn{ ( st_\E \mathbin \Vert p_\E )^l \land r_\CR } \< \< \\[-1.5ex]
        \< \< \sendmessageright*{c_\CS, c_\CR} \< \\
        \textbf{Decommit:   } \< \< \sendmessageright*{\PUFE, s_\CS, s_\CR, x, st_\E, p_\E} \< \\
        \< \< \< \text{verify } \TQ \\
        \< \< \< st_\CS \gets \Rep (\PUFCS(s_\CS), p_\CS) \\
        \< \< \< st_\CR \gets \Rep (\PUFCS(s_\CR), p_\CR) \\
        \< \< \< c_\CS = st_\CS \oplus \left( x^l \land r_\CS \right) ?\\
        \< \< \< c_\CR = st_\CR \oplus \left( ( st_\E \mathbin \Vert p_\E )^l \land r_\CR \right) ?\\
        \< \< \< st_\E = \Rep (\PUFE( \Enc (st_\CS)), p_\E) ? \\
        \< \< \< \text{output } x
    }
    \caption{The original \ExtPUF{} protocol.}
    \label{fig:original extpuf}
\end{figure*}

\begin{figure}
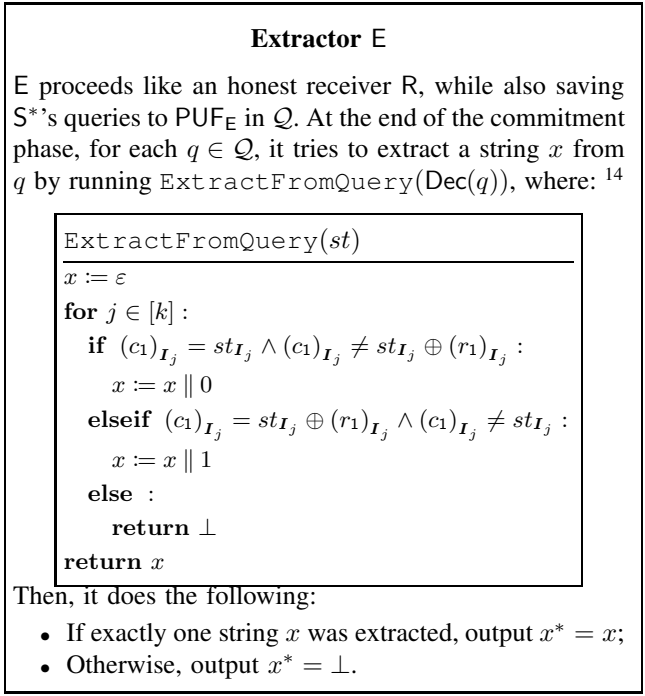

    \begin{center}
        \noindent\fbox{%
            \parbox{0.95\linewidth}{
                \begin{center}
                    \textbf{Extractor \ext}
                \end{center}
        
                \ext{} proceeds like an honest receiver \R{}, while also saving \Sadv{}'s queries to \PUFE{} in $\mathcal{Q}$.
                At the end of the commitment phase, for each $q \in \mathcal{Q}$, it tries to extract a string $x$ from $q$ by running
                \texttt{ExtractFromQuery}$(\msf{Dec}(q))$, where:
                \footnote{
                    In this procedure, $\varepsilon$ denotes the empty string.
                }              
                \pcb[head = \texttt{ExtractFromQuery}$(st)$]{
                    x \coloneqq \varepsilon \\
                    \pcfor j \in [k]: \\
                    \pcind \pcif \prn{c_\CS}_{\bm{I}_j} = st_{\bm{I}_j} \land \prn{c_\CS}_{\bm{I}_j} \neq st_{\bm{I}_j} \oplus \prn{r_\CS}_{\bm{I}_j}: \\
                    \pcind \pcind x \coloneq x \mathbin \Vert 0 \\
                    \pcind \pcelseif \prn{c_\CS}_{\bm{I}_j} = st_{\bm{I}_j} \oplus \prn{r_\CS}_{\bm{I}_j} \land \prn{c_\CS}_{\bm{I}_j} \neq st_{\bm{I}_j}: \\
                    \pcind \pcind x \coloneq x \mathbin \Vert 1 \\
                    \pcind \pcelse : \\
                    \pcind \pcind \pcreturn \bot \\
                    \pcreturn x
                }
    
                Then, it does the following:
                \begin{itemize}
                    \item If exactly one string $x$ was extracted, output $x^* = x$;
                    \item Otherwise, output $x^* = \bot$.
                \end{itemize}
            }
        }%
    \end{center}
    \caption{The original extractor defined in \cite{UCComm}.}
    \label{fig:original extractor}
\end{figure}

\begin{figure*}
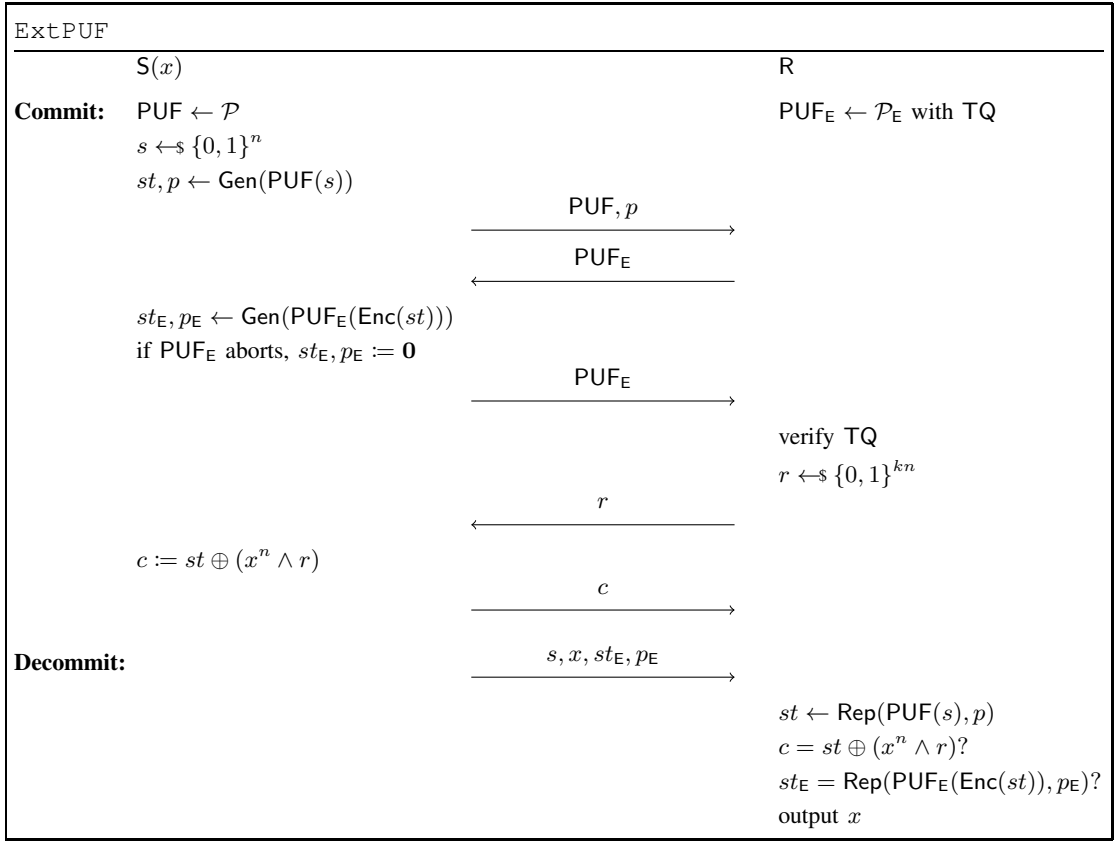

    \procb{\ExtPUF}{
        \< \S(x) \< \< \R \< \\[1ex]
        \textbf{Commit:   } \< \PUF \gets \puffamily \< \< \PUFE \gets \puffamily_\E \text{ with } \TQ \\
        \< s \sample \bin^n \< \< \\
        \< st, p \gets \Gen (\PUF(s)) \< \< \\[-1.5ex]
        \< \< \sendmessageright*{\PUF, p} \< \\[-1.5ex]
        \< \< \sendmessageleft*{\PUFE} \< \\
        \< st_\E, p_\E \gets \Gen (\PUFE(\Enc(st))) \< \< \\
        \< \text{if \PUFE{} aborts, } st_\E, p_\E \coloneqq \bm{0} \< \< \\[-1.5ex]
        \< \< \sendmessageright*{\PUFE} \< \\
        \< \< \< \text{verify } \TQ \\
        \< \< \< r \sample \bin^{kn} \\[-1.5ex]
        \< \< \sendmessageleft*{r} \< \\
        \< c \coloneq st \oplus \prn{ x^n \land r } \< \< \\[-1.5ex]
        \< \< \sendmessageright*{c} \< \\
        \textbf{Decommit:   } \< \< \sendmessageright*{s, x, st_\E, p_\E} \< \\
        \< \< \< st \gets \Rep (\PUF(s), p) \\
        \< \< \< c = st \oplus \left( x^n \land r \right) ?\\
        \< \< \< st_\E = \Rep (\PUFE( \Enc (st)), p_\E) ? \\
        \< \< \< \text{output } x
    }
    \caption{The modified \ExtPUF{} protocol.}
    \label{fig:single nocomm extpuf}
\end{figure*}
\section{the compiler from \cite{UCComm}}
\label{sec:fixcompiler}
\subsection{Collective commitments}

As mentioned in Section \ref{subsec:contibutions}, we generalized the definition of commitments to accommodate the commitment of many strings at once. This yields what we call a \emph{collective commitment scheme}, and the corresponding syntax, where
$N(n)$ denotes the number of strings being committed and $k(n)$ their size, is given in the following definition:
\begin{definition}
    A \textup{\textbf{collective commitment scheme in the \FComMPUF{}-hybrid model}} is a tuple of PPT algorithms $\collcom = (\S, \R)$
    that run on security parameter $n$ and have oracle access to \FComMPUF{}, implementing the following functionality:
    \begin{itemize}
        \item \textup{\textbf{Inputs:}}
        \S{} receives as inputs strings $x^1, \cdots, x^{N(n)} \in \bin^{k(n)}$.

        \item \textup{\textbf{Commitment phase:}}
        \S{} commits to $\bm{x} = \prn{x^1, \cdots, x^{N(n)}}$, which we denote by $\textup{\texttt{Commit}}^\collcom (\bm{x})$.

        \item \textup{\textbf{Decommitment of commitments in a set $I \subseteq [N]= \set{1, \cdots, N}$:}}
        
        \S{} sends $\set{ \prn{ i, x^i } }_{i \in I}$ and some decommitment data to \R{}, which outputs either $\set{ \prn{ i, x^i } }_{i \in I}$,
        if it accepts the decommitment, or $\bot$, otherwise.
        We denote this by $\textup{\texttt{Open}}^\collcom \prn{ \prn{x^i}_{i \in I} }$, and we refer to this phase as the \textit{decommitment of $I$}.
    \end{itemize}

    When using this protocol, there can be many decommitment phases, not necessarily at the same time.
\end{definition}

In the following, consider a fixed collective commitment scheme in the \FComMPUF{}-hybrid model, where
we only consider functions $N(n)$ and $k(n)$ that are polynomial in $n$.

Now we need to define what it means for such a protocol to be hiding.
The intuitive idea is that the committed strings remain hidden until they are revealed — even if other strings have already been opened.
This property must hold even within a more complex interaction.

Of course, this interaction must be restricted in certain ways.
For instance, if $x$ is one of the strings committed by the sender \S{} and \S{} later sends $x$ to a malicious receiver \Radv{},
it would no longer remain hidden, even without explicitly opening the commitment.
Therefore, \S{} must be restricted from sending any messages that depend on the strings intended to remain hidden throughout the interaction.
Additionally, \S{} must be restricted from sharing any information generated during the commitment phase, as this could aid \Radv{}
in learning about the committed strings.
Thus, \S{} should only interact with the commitment scheme as a black box, ensuring no internal details are leaked.
Furthermore, the interaction may involve \S{} not knowing which strings it will commit to at the start.
However, there must be a clear point where \S{} defines the strings and decides which ones will eventually be revealed.
We formalize this in the following definition:
\begin{definition}
    Consider the interaction between an honest sender \S{} and a malicious receiver \Radv{} depicted in Fig. \ref{fig:collective hiding},
    where $\ensemble{\Omega_n}$ is a collection of finite sets and $\INTER^\collcom$ denotes an interaction such that:
    \begin{itemize}
        \item \S{} and \Radv{} can interact arbitrarily, as long as the messages sent by \S{} do not depend on $w$;
        \item There is a moment in the interaction where \S{} defines a function $\STRINGS_n : \Omega_n \to \prn{ \bin^k }^{N(n)}$
        and sets $\OPEN_n \subseteq \Const \prn{ \STRINGS_n }$
        \footnote{
            For a function $f = \prn{ f^1, \cdots, f^N } : X \to Y^N$, we define $\Const(f) = \set{ i \in [N] : f^i \text{ is constant} }$.
        }
        and $\CLOSED_n \coloneqq [N] \setminus \OPEN_n$;
        \item After that, \S{} commits to $\STRINGS_n(w)$ using the protocol \collcom{} as a black-box.
        This is the only time in the interaction where \S{} has access to $w$. 
        Furthermore, since \S{} runs the commitment as a black-box, it does not have access to the information of the commitment outside
        the commitment and decommitment interactions.
        \footnote{
            Of course the same cannot be said about \Radv{}, since it is malicious.
        }
        \item \S{} decommits the strings in $\OPEN_n$ (not necessarily at the same time).
    \end{itemize}
    \begin{figure}[ht]
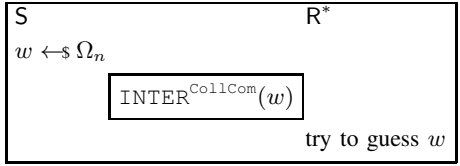

        \procb{}{
            \S \< \< \Radv \\
            w \sample \Omega_n \< \< \\
            \< \pcbox{ \INTER^\collcom (w) } \< \\
            \< \< \text{try to guess } w
        }
        \caption{Collective hiding interaction.}
        \label{fig:collective hiding}
    \end{figure} 

    We say that \collcom{} is \textbf{computationally hiding} if all interactions \INTER{} and PPT malicious receivers \Radv{}
    in the interaction depicted in Fig. \ref{fig:collective hiding} satisfy
    \[
        \prob{ \Radv \prn{ \INTER^\collcom (W) } = W } = \frac{1}{|\Omega_n|} + \varepsilon(n).
    \]
\end{definition}

Following the same idea, we also need to define the binding property within a more complex interaction.
As with the hiding definition, we ensure that \R{} interacts with the commitment scheme as a black box,
preventing it from sending any information that could aid the malicious sender \Sadv{}.
\begin{definition}
    Consider the interaction between a malicious sender \Sadv{} and an honest receiver \R{} depicted in Fig. \ref{fig:collective binding}, where    
    \INTER{} denotes an interaction where \Sadv{} makes a possibly malicious commitment using \collcom{}.
    Throughout this interaction, \R{} uses \collcom{} honestly as a black-box.
    Furthermore, $d_1$ and $d_2$ denote two sequences of actions that lead to decommitments.
    \footnote{
        This can include some decommitments.
    }
    We say that \Sadv{} is successful when $d_1$ and $d_2$ lead to successful decommitments of some
    $i \in [N]$ to different strings.
    \begin{figure}[ht]
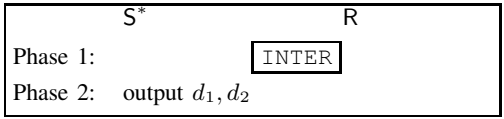

        \textup{
            \procb{}{
                \< \Sadv \< \< \R \hspace{50pt} \\
                \text{Phase 1:} \hspace{10pt} \< \< \pcbox{ \INTER } \< \\
                \text{Phase 2:} \hspace{10pt} \< \text{output } d_1, d_2 \< \<
            }
        }
        \caption{Collective binding interaction.}
        \label{fig:collective binding}
    \end{figure}
    
    We say that \collcom{} is \textbf{statistically binding} if for all interactions \INTER{}, all malicious senders \Sadv{}
    succeed with negligible probability.
\end{definition}

Finally, the same happens for extractability:
\begin{definition}
    Consider the interaction between a malicious sender \Sadv{} and some extractor \ext{} depicted in Fig. \ref{fig:collective extractability},   
    where \INTER{} denotes an interaction in which \Sadv{} makes a possibly malicious commitment using \collcom{}.
    We say that \collcom{} is \textbf{extractable} if there exists a PPT extractor \ext{}
    having interface access to \FComMPUF{} such that all interactions \INTER{} and malicious committers \Sadv{} satisfy the following properties:
    \begin{itemize}
        \item \textup{\textbf{Simulation:}}
        The view of \Sadv{} when interacting with \ext{} is identical to the view when interacting with an honest receiver \R{}.

        \item \textup{\textbf{Extraction:}}
        \Sadv{} only decommits successfully to some string that is different from what \ext{} outputs
        with negligible probability, that is,
        \begin{align*}
            &\prob{ \Sadv{} \text{ decommits some $i$ successfully to } X^i \neq \prn{X^*}^i } \\
            &= \ \varepsilon(n).
        \end{align*}
    \end{itemize}
    \begin{figure}[ht]
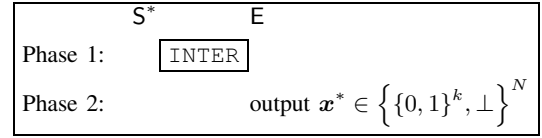

        \procb{}{
            \< \Sadv{} \< \< \ext \hspace{50pt} \\
            \text{Phase 1:} \hspace{10pt} \< \< \pcbox{ \INTER } \< \\
            \text{Phase 2:} \hspace{10pt} \< \< \< \text{output } \bm{x}^* \in \set{\bin^k, \bot}^N
        }
        \caption{Collective extractability interaction.}
        \label{fig:collective extractability}
    \end{figure}
\end{definition}

Just as before, we can define the concepts of ideal and ideal extractable collective commitment schemes in the \FComMPUF{}-hybrid model analogously.

\subsection{Adapting the compiler}
As previously discussed, since we are working with multiple commitments, we need to use a 
collective commitment scheme \collcom{}.
This approach not only ensures the security of these commitments is maintained but also 
allows us to optimize the protocol.
Our revised version of \texttt{UCCompiler} is depicted in Fig. \ref{fig:revised uccompiler}.
\footnote{
We define $[n, 2] = \set{1, 3, \cdots, 2n + 1}$.
}
It uses the parallelized version of the \texttt{BlobEquality} protocol from \cite{UCComm}, which we 
call \texttt{BlobEqualities}, and is depicted in Fig. \ref{fig:blobequalities}. 
We prove that our revised version of \texttt{UCCompiler} applied on ideal extractable commitments \collcom{} results in UC-secure commitments,  in Appendix \ref{appendix:ucproof}, see Theorem \ref{thm:ucappendix}. 
\begin{figure*}[ht]
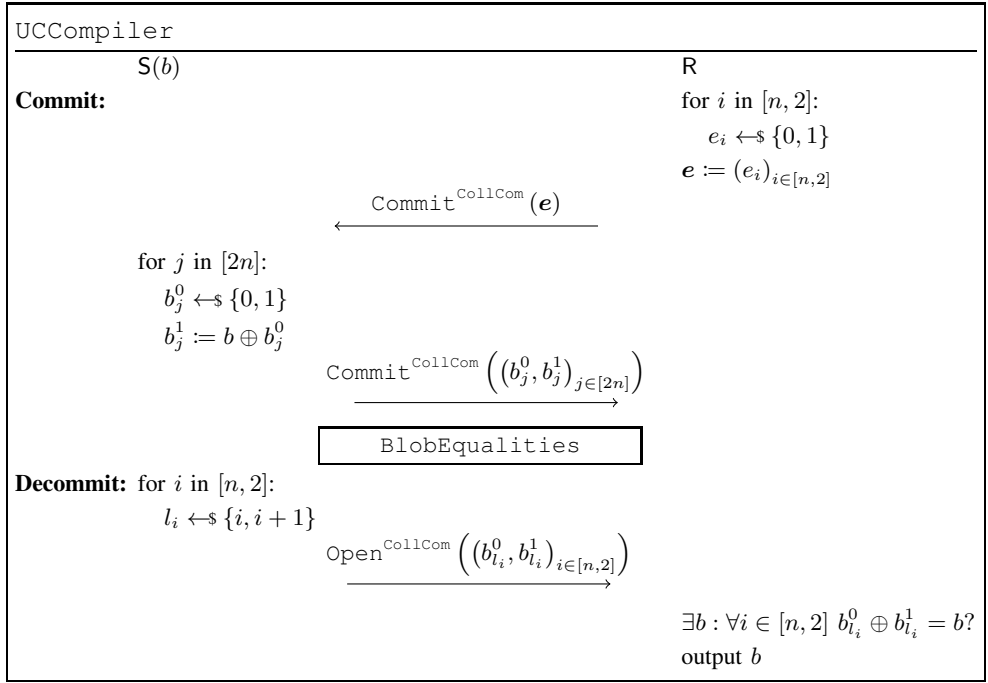

    \procb{\texttt{UCCompiler}}{
        \< \S(b) \< \< \R \\
        \textbf{Commit:   } \< \< \< \text{for $i$ in $[n, 2]$:} \\
        \< \< \< \hspace{10pt} e_i \sample \bin \\
        \< \< \< \bm{e} \coloneqq \prn{e_i}_{i \in [n, 2]} \\[-1.5ex]
        \< \< \sendmessageleft*{ \texttt{Commit}^\collcom \prn{ \bm{e} } } \< \\
        \< \text{for $j$ in $[2n]$:} \< \< \\
        \< \hspace{10pt} b_j^0 \sample \bin \< \< \\
        \< \hspace{10pt} b_j^1 \coloneqq b \oplus b_j^0 \< \< \\[-1.5ex]
        \< \< \sendmessageright*{ \texttt{Commit}^\collcom \prn{ \prn{b_j^0, b_j^1}_{j \in [2n]} } } \< \\
        \< \< \pcbox{\texttt{\hspace{20pt}BlobEqualities\hspace{20pt}}} \< \\
        \textbf{Decommit:   } \< \text{for $i$ in $[n, 2]$:} \< \< \\
        \< \hspace{10pt} l_i \sample \set{i, i + 1} \< \< \\[-1.5ex]
        \< \< \sendmessageright*{ \texttt{Open}^\collcom \prn{ \prn{ b_{l_i}^0, b_{l_i}^1 }_{i \in [n, 2]} } } \< \\
        \< \< \< \exists b : \forall i \in [n, 2] \ b_{l_i}^0 \oplus b_{l_i}^1 = b ? \\
        \< \< \< \text{output } b
    }
    \caption{The revised version of the \texttt{UCComm} protocol, which uses a collective commitment \collcom{}.}
    \label{fig:revised uccompiler}
\end{figure*}
\begin{figure*}[ht]
    \procb{\texttt{BlobEqualities}}{
        \S \< \< \R \\
        \text{for $i$ in $[n, 2]$:} \< \< \\
        \hspace{10pt} y_i \coloneqq b_i^0 \oplus b_{i + 1}^0 \< \< \\
        \bm{y} \coloneqq \prn{y_i}_{i \in [n, 2]} \< \< \\[-1.5ex]
        \< \sendmessageright*{ \bm{y} } \< \\[-1.5ex]
        \< \sendmessageleft*{ \texttt{Open}^\collcom \prn{ \bm{e} } } \< \\[-1.5ex]
        \< \sendmessageright*{ \texttt{Open}^\collcom \prn{ \prn{ b_i^{e_i}, b_{i + 1}^{e_i} }_{i \in [n, 2]} } } \\
        \< \< \forall i \in [n, 2] \ y_i = b_i^{e_i} \oplus b_{i + 1}^{e_i} ?
    }
    \caption{The \texttt{BlobEqualities} protocol, which is a parallelized version of \texttt{BlobEquality}.}
    \label{fig:blobequalities}
\end{figure*}
Notice that, unlike in \texttt{BlobEquality}, \R{}'s commitment to $\bm{e}$ is made earlier in the \texttt{UCCompiler} protocol,
even before \S{}'s commitment.
Interestingly, this early commitment plays an important role in ensuring the UC proof holds.
\section{UC-secure commitments in the communicating malicious PUFs model}
\label{sec:ourprotocol}

\subsection{Protocol}

In Fig. \ref{fig:collective extpuf one}, we define a collective version of \ExtPUF{}, which we call
\CollExtPUF{}.
In Appendix \ref{appendix:ourproof}, we prove that this is an ideal extractable collective commitment
scheme in the \FComMPUF{}-hybrid model, if we assume that the malicious PUFs created by the
adversary
\begin{itemize}
    \item are \textbf{stateless} and have \textbf{no outgoing communication};
    \item have \textbf{unbounded incoming communication}.
\end{itemize}

Concretely, we prove the following theorem:
\begin{theorem}
    \label{collective extpuf one main text}
    \CollExtPUF{} is an ideal extractable collective commitment in the \FComMPUF{}-hybrid model, with $\kstate = \kout = 0$ and unbounded \kin{}.
\end{theorem}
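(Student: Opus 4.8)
The plan is to verify the four defining properties of an ideal extractable collective commitment — completeness, computational hiding, statistical binding, and extractability — by lifting the single-instance guarantees for the modified \ExtPUF{} (Theorem~\ref{single nocomm extpuf}) to the parallel setting, inserting hybrid arguments and union bounds wherever independence across the $N(n)$ instances is needed. Completeness is immediate: each instance samples independent PUFs, and since $\kstate = 0$ forces every malicious PUF to be stateless, response consistency together with the correctness of the matching fuzzy extractors makes every honest decommitment of a set $I$ verify with probability $1$.

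For computational hiding I would run a hybrid over the indices in $\CLOSED_n$. Starting from the real interaction $\INTER^\collcom(w)$, I replace, one $\CLOSED_n$-index $i$ at a time, the mask obtained from $(st^i, p^i) \gets \Gen(\PUF(s^i))$ that forms $c^i = st^i \oplus \prn{(x^i)^n \land r^i}$ by a uniformly random string. Each step is justified by the indistinguishability property of PUFs, which holds even against an adversary with access to \PUF{}; the crucial point is that \Radv{}'s \PUFE{} learns nothing about $st^i$, since $\kout = 0$ means it cannot transmit \S{}'s queries back to \Radv{}, and $\kstate = 0$ means \Radv{} cannot read them off its state once \PUFE{} is returned (unbounded \kin{} is harmless here, as any message \Radv{} pushes into \PUFE{} produces no outgoing reply). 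After all $\CLOSED_n$ masks are uniform, the ciphertexts $\set{c^i}_{i \in \CLOSED_n}$ are independent of $\bm{x}$; since the $\OPEN_n$ strings are constant by $\OPEN_n \subseteq \Const\prn{\STRINGS_n}$ and \S{} sends no other $w$-dependent message in \INTER{}, the whole view becomes independent of $w$, yielding the bound $1/\abs{\Omega_n} + \varepsilon(n)$.

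Statistical binding and extractability both follow instance-wise and are then closed by a union bound over the $N(n) = \mathrm{poly}(n)$ indices. For binding, statelessness of \Sadv{}'s \PUF{} makes $st$ well-defined across the commit and decommit phases, so the single-instance argument — that opening one block to both values forces a relation $st \oplus st' = r$ on a challenge $r$ revealed only after \PUF{} and $p$ are fixed — applies to each $i$ independently. For extractability I construct the collective extractor as $N(n)$ parallel copies of the single-instance extractor: each copy has interface access to \FComMPUF{} and records \Sadv{}'s queries to the honest \PUFE{} in instance $i$, which — because the modified protocol forces \PUFE{} to be returned during the commit phase, statelessness makes its responses consistent, and the test query property prevents \Sadv{} from substituting it — must include $\Enc(st^i)$; decoding and combining with the public $c^i$ recovers $x^i$. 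The simulation property holds because each copy behaves exactly as an honest \R{}, and the extraction property follows from the single-instance guarantee plus the union bound.

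The main obstacle is the hiding hybrid: invoking the indistinguishability property independently for each $\CLOSED_n$-index while the adversary eventually obtains the complete decommitments of every $\OPEN_n$-index. I would handle this by folding the opened-instance data $\set{(s^i, x^i, st^i, p^i)}_{i \in \OPEN_n}$ into the side information \EXTRA{} of the almost-uniformity property, relying on the fact that the $N(n)$ instances draw independent PUFs so that the min-entropy condition $\condavgminentropy{\PUF(s^i)}{\EXTRA} \ge \entropybound(n)$ survives the conditioning for each closed $i$. Pinning down this conditioning, and checking that $\kstate = \kout = 0$ is used precisely at the point where \Radv{}'s \PUFE{} would otherwise leak \S{}'s queries, is the delicate part of the argument.
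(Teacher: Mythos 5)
Your overall decomposition --- hiding via a hybrid over $\CLOSED_n$ reduced to PUF almost-uniformity, binding via the min-entropy argument on $r$, and extractability by running the single-instance extraction logic per index and reusing the binding and CRP-guessing arguments --- is the route the paper takes. (The paper packages your hybrid into a separate ``collective indistinguishability'' lemma proved beforehand, and absorbs your union bound over indices into an extra $\maxentropy{I} = \log(N(n))$ term in a single entropy count; these are presentational differences.)

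However, the justification you give for the step you yourself single out as delicate is wrong, and it rests on a misreading of the protocol. In \CollExtPUF{} the $N(n)$ instances do \emph{not} draw independent PUFs: the sender creates a \emph{single} \PUF{} and queries it on all $N$ challenges $s^1,\dots,s^N$, and the receiver creates a single \PUFE{} --- this sharing is precisely what buys the efficiency improvement from $8n+2$ PUFs down to $4$. Consequently the condition $\condavgminentropy{\PUF(s^i)}{\EXTRA} \ge \entropybound(n)$ for a closed index $i$ cannot be obtained ``because the instances draw independent PUFs'': the side information \EXTRA{} contains responses of the \emph{same} \PUF{} at the other challenges $s^j$, $j\neq i$, together with the adversary's own queries. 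The correct argument, which is the actual technical content of the paper's collective indistinguishability lemma, is to condition on the event that every $s^j$ with $j \neq i$ and every adversarial query lies at Hamming distance at least $\dmin(n)$ from $s^i$; this event holds with overwhelming probability because the challenges are independent and uniform and by the CQ/preimage-entropy property, and under that conditioning unpredictability of the single shared \PUF{} yields the required min-entropy bound, after which almost-uniformity lets you swap $st^i$ for uniform one index at a time. Without this step your hybrid does not close. The same misreading surfaces in your extractability sketch (``queries to \PUFE{} in instance $i$'' --- there is one \PUFE{} and one query set shared by all indices), though there it is harmless since the extractor simply runs its per-index check against the common query set.
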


Our UC-secure commitment protocol follows from applying the compiler in Fig. 
\ref{fig:revised uccompiler} to \CollExtPUF{},  as stated in the theorem below: 
\begin{theorem}
\label{thm:uc}
    Let \CollExtPUF{} be the ideal extractable collective commitment scheme in the \FComMPUF-hybrid model shown in Fig. \ref{fig:collective extpuf one}.
    Then, the protocol given by \texttt{UCCompiler} applied on  \CollExtPUF{} UC-realizes \Fcom{}
    in the \FComMPUF-hybrid model.
\end{theorem}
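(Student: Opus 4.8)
The plan is to give a simulation-based UC proof in the static-corruption setting: for every PPT adversary \Adv{} I build a PPT simulator \Sim{} that interacts with the ideal commitment functionality \Fcom{}, and argue that no PPT environment \Z{} distinguishes the real execution of \texttt{UCCompiler} applied to \CollExtPUF{} in the \FComMPUF{}-hybrid model from the ideal execution with \Fcom{} and \Sim{}. The cases in which both parties are honest or both are corrupted are handled directly (by completeness, and by letting \Sim{} run \Adv{} internally, respectively), so the work is in the two mixed cases. A crucial structural point is that \Sim{} internally emulates \FComMPUF{} toward \Adv{}; this grants it exactly the two capabilities the two cases need --- interface access to every query made to the honest PUFs (used to extract) and the ability to program PUF responses (used to equivocate).

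For a corrupted sender \Sadv{}, \Sim{} plays the honest receiver: it commits to a uniform \bm{e} with \CollExtPUF{}, receives \Sadv{}'s commitment to the blobs $\prn{b_j^0, b_j^1}_{j \in [2n]}$, and runs \texttt{BlobEqualities}. Invoking the extractor \ext{} for \CollExtPUF{} guaranteed by Theorem \ref{collective extpuf one main text}, \Sim{} recovers candidate values for all blobs; by its simulation property \Sadv{}'s view is unchanged. From the extracted blobs \Sim{} computes, for each consecutive pair, the bit $b_i^0 \oplus b_i^1$ encoded by its two blobs; if all these per-pair bits coincide it forwards the common value $b^*$ to \Fcom{}, and otherwise it outputs $\bot$. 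Soundness of extraction comes from combining the binding and extraction guarantees of \CollExtPUF{} with the cut-and-choose analysis of \texttt{BlobEqualities}: a pair whose two blobs encode different bits passes the equality test with probability at most $1/2$ over the challenge $e_i$, which is hidden because \Sim{}'s commitment to \bm{e} is statistically hiding and is made before the blobs are committed. Since \Sadv{} can open to a bit other than the extracted one only if every one of the $n$ pairs is internally inconsistent yet still passes the test, this happens with probability at most $2^{-n}$; hence any successful decommitment reveals $b^*$ except with negligible probability.

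For a corrupted receiver \Radv{}, \Sim{} plays the honest sender without knowing the bit. It first uses \ext{} to extract \Radv{}'s committed challenge \bm{e} from its \CollExtPUF{} commitment --- this is precisely where the early commitment to \bm{e} is exploited --- and then commits to $2n$ dummy blobs all encoding one fixed bit, so that \texttt{BlobEqualities} passes with certainty. While \Radv{}'s extraction PUF \PUFE{} is still in its hands during the commitment phase, \Sim{} precomputes the fuzzy-extractor data $\prn{st_\E, p_\E}$ needed to later open each unopened $\prn{1 - e_i}$-side to either bit value; querying \PUFE{} on both candidates is undetectable exactly because the adversarial PUFs are stateless and have no outgoing communication, matching the hypotheses of Theorem \ref{collective extpuf one main text}. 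When \Fcom{} reveals $b$ at decommit, \Sim{} samples the $l_i$ as an honest sender would and, by reprogramming the responses of the commitment PUFs through the emulated \FComMPUF{}, opens each chosen blob so that $b_{l_i}^0 \oplus b_{l_i}^1 = b$, consistently with the helper data and the check $st_\E = \Rep\prn{\PUFE\prn{\Enc(st)}, p_\E}$. Indistinguishability of this transcript from a real one reduces to the statistical hiding of \CollExtPUF{} --- in its collective form, which is exactly what permits some strings to be opened (the \bm{e}-sides, then the $l_i$-blobs) while the rest stay hidden --- together with the equivocated openings being distributed as honest ones.

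I would then assemble these into a short hybrid argument over \Z{}'s view, replacing the real receiver by the extracting \Sim{} in the sender-corruption case and the real sender by the equivocating \Sim{} in the receiver-corruption case, and bound the total distinguishing advantage by the sum of the negligible terms above. I expect the main obstacle to be the equivocation step: one must prove that reprogramming the commitment-PUF responses to open the $\prn{1 - e_i}$-sides is simultaneously consistent with every verification performed by \CollExtPUF{} --- the test query \TQ{}, response consistency of \Rep{}, and the $st_\E$ equality --- and statistically close to the real distribution, even though \Radv{} may have probed the honest PUFs through its interface. This is exactly the point that fails for stateful or outgoing-communicating malicious PUFs, and making it rigorous is where the restrictions of Theorem \ref{collective extpuf one main text} together with the test-query and CRP properties established earlier are indispensable.
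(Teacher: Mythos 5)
There are genuine gaps in both mixed-corruption cases, and the receiver-corruption one is fundamental. Your simulator for a corrupted \Radv{} tries to equivocate the \CollExtPUF{} commitments themselves: commit to dummy blobs of one fixed bit, then at decommit time ``reprogram'' \PUF{} responses through the emulated \FComMPUF{} so that the opened blobs XOR to the bit $b$ received from \Fcom{}. This cannot be made to work under the paper's assumptions. To open a share committed as $c^i = st^i \oplus \prn{\prn{x^i}^n \land r^i}$ to the other value, the verifier's recomputation forces $\Rep\prn{\sigma', p^i} = c^i \oplus \prn{\prn{x'^i}^n \land r^i}$, where $p^i$ and $c^i$ were fixed \emph{before} $\bm{r}$ was received; for a generic average-case fuzzy extractor nothing guarantees that any response $\sigma'$ achieving this target even exists (the range of $\Rep\prn{\cdot, p^i}$ can be a sparse set), and the paper deliberately refuses to assume FE-specific structure. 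Indeed \CollExtPUF{} is \emph{statistically binding}, and that proof does not care that the sender is a simulator. The paper's simulator (Fig.~\ref{fig:simulator hd}) never reprograms anything: it prepares, for each pair $i$, one blob of $0$ and one blob of $1$ in random order, uses the extractor on \Radv{}'s \emph{early} commitment to obtain $\bm{e}^*$, runs a modified \texttt{BlobEqualities} sending $y_i \coloneqq b_i^{e_i^*} \oplus b_{i+1}^{e_i^*}$ (aborting if \Radv{} opens $\bm{e} \neq \bm{e}^*$, which extractability makes negligible), and at decommit \emph{honestly} opens the blobs matching $b_\Z$; indistinguishability from the real world then reduces, via the hybrids $H_0,\dots,H_4$, to the collective hiding property of \collcom{}. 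Obtaining equivocation at the compiler level from non-equivocable commitments is precisely why the protocol commits to pairs of blobs and why \R{}'s commitment to $\bm{e}$ comes first --- your proposal uses the early commitment but misses what it is for.

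The sender-corruption case also has a concrete error: your rule ``forward $b^*$ if all per-pair bits coincide, else output $\bot$'' does not match what the real receiver accepts. A malicious \Sadv{} can make a single pair inconsistent (its two blobs encoding $0$ and $1$), guess that one challenge bit so that \texttt{BlobEqualities} passes with probability about $\tfrac{1}{2}$, and later decommit validly to a bit that appears in \emph{every} pair; your simulator outputs $\bot$ while the real receiver accepts, giving \Z{} constant distinguishing advantage. The paper's simulator (Fig.~\ref{fig:simulator dh}) instead computes $A = \bigcap_{i} \set{b^*_i, b^*_{i+1}} \setminus \set{\bot}$, sends the unique element of $A$ when it is a singleton, sends a dummy $0$ when $A = \emptyset$ (the sender can then never open), and aborts only when $A = \bin$. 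Moreover, your $2^{-n}$ cut-and-choose bound treats $\bm{e}$ as information-theoretically uniform given \Sadv{}'s view, but \collcom{} is only \emph{computationally} hiding in this paper; the paper bounds the $A = \bin$ event by showing that passing \texttt{BlobEqualities} with all pairs inconsistent forces $\bm{y} = f(\bm{e})$ for an involution $f$ known from the extracted values, so that \Sadv{} has effectively guessed $\bm{e}$, contradicting hiding. Your high-level skeleton (per-corruption simulators, extraction via Theorem~\ref{collective extpuf one main text}, hybrid argument) matches the paper, but both simulators need to be replaced as above for the proof to go through.
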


The proof of Theorem \ref{thm:uc} follows from Theorem \ref{thm:ucappendix} that can be found in Appendix \ref{appendix:ucproof}.
\begin{figure*}[ht]
    \procb{\CollExtPUF{}}{
        \< \S \prn{ x^1, \cdots, x^N } \< \< \R \< \\[1ex]
        \textit{Commit:   } \< \PUF \gets \puffamily \< \< \PUFE \gets \puffamily_\E \text{ with } \mathsf{TQ} \\
        \< \text{for each } i \in [N]: \< \< \\
        \< \pcind s^i \sample \bin^n \< \< \\
        \< \pcind st^i, p^i \gets \Gen \prn{ \PUF \prn{ s^i } } \< \< \\[-1.5ex]
        \< \< \sendmessageright*{ \PUF, \bm{p} } \< \\[-1.5ex]
        \< \< \sendmessageleft*{\PUFE} \< \\
        \< \text{for each } i \in [N]: \< \< \\
        \< \pcind st^i_\E, p^i_\E \gets \Gen \prn{ \PUFE \prn{ \Enc \prn{ st^i } } } \< \< \\
        \< \pcind \text{if \PUFE{} aborts, } st^i_\E, p^i_\E \coloneqq \bm{0} \< \< \\[-1.5ex]
        \< \< \sendmessageright*{\PUFE} \< \\
        \< \< \< \text{verify } \mathsf{TQ} \\
        \< \< \< \text{for each } i \in [N]: \\
        \< \< \< \pcind r^i \sample \bin^{kn} \\[-1.5ex]
        \< \< \sendmessageleft*{ \bm{r} } \< \\
        \< \text{for each } i \in [N]: \< \< \\
        \< \pcind c^i \coloneq st^i \oplus \prn{ \prn{ x^i }^n \land r^i } \< \< \\[-1.5ex]
        \< \< \sendmessageright*{ \bm{c} } \< \\
        \textit{Decommit $i$:   } \< \< \sendmessageright*{ i, s^i, x^i, st^i_\E, p^i_\E } \< \\
        \< \< \< st^i \gets \Rep \prn{ \PUF \prn{ s^i }, p^i } \\
        \< \< \< c^i = st^i \oplus \prn{ \prn{ x^i }^n \land r^i } ? \\
        \< \< \< st^i_\E = \\ 
        \< \< \< \hspace{5pt}\Rep \prn{ \PUFE \prn{ \Enc \prn{ st^i } }, p^i_\E } ? \\  
        \< \< \< \text{output } x^i
    }
    \caption{The \CollExtPUF{} protocol.}
    \label{fig:collective extpuf one}
\end{figure*}
\subsection{Efficiency improvements}

Let us examine the efficiency of the protocol \texttt{UCCompiler}.
In this protocol, \S{} performs $4n$ commitments and $3n$ decommitments, while \R{} makes one commitment and one decommitment.
We will evaluate the impact of using collective commitments in optimizing the protocol, specifically in terms of the number of PUFs
used and PUF exchange phases.

First, suppose we were to use multiple executions of the \ExtPUF{} protocol, as proposed in \cite{UCComm}.
In this setup, each commitment requires the creation of two PUFs.
Additionally, each commitment phase involves two PUF exchange phases, while no exchanges are required in the decommitment phase.
Therefore, this approach would require a total of $8n + 2$ PUFs and $8n + 2$ PUF exchange phases.

Now, let us look at the efficiency when using our \CollExtPUF{} protocol.
In this case, we employ two collective commitments -- one for each direction.
Each collective commitment requires two PUFs.
In the commitment phase, two PUF exchange phases are needed, while no exchanges are required in the decommitment phases.
Thus, this results in a total of four PUFs and four PUF exchange phases, offering a substantial improvement over the previous approach. Nevertheless, in future work one could construct a more efficient protocol that further reduces these requirements. For instance, it might be possible to achieve the same functionality using only two PUFs and two exchange phases by performing commitments in both directions simultaneously. Naturally, such a commitment scheme would need to be formally defined.

\section{Acknowledgements}
This work is funded by Fundação para a Ciência e a Tecnologia (FCT), 
Portugal FCT/MECI through national funds and when applicable
co-funded EU funds under the Unit UIDB/50008. The authors also acknowledge support from the project  PUFSeQure (\href{https://doi.org/10.54499/2023.14154.PEX}{2023.14154.PEX}) funded by FCT through national funds.

\appendix
\section{Appendices}

\subsection{Malicious PUF functionality}
\label{appendix:maliciousfunctionality}

The \FMPUF{} functionality for malicious PUFs with no communication is depicted in Fig. 
\ref{fig:mpuf functionality}.

\begin{figure}[ht]
    \centering
    \noindent\fbox{%
        \parbox{0.95\linewidth}{
            \begin{center}
                \textbf{Malicious PUF Functionality \FMPUF $(\puffamily, \kstate)$}
            \end{center}
        
            Run with parties $\mathbb{P} = \set{P_1, \cdots, P_k}$ and adversary \Sim{}.
            Create empty lists $\mathcal{L}$ and $\mathcal{M}$.
            \begin{itemize}
                \item Upon receiving $(\msf{sid}, \msf{init}, \msf{honest}, P)$ or $(\msf{sid}, \msf{init}, \msf{malicious}, M, P)$ from
                $P \in \mathbb{P} \cup \set{\Sim}$, check whether $\mathcal{L}$ contains some $(\msf{sid}, *, *, *, *)$:
                \begin{itemize}
                    \item If so, turn to the waiting state;
                    \item Else, draw $\msf{id} \gets \pufsample_n$, add $(\msf{sid}, \msf{honest}, \msf{id}, P, \bot)$
                    to $\mathcal{L}$ and send $(\msf{sid}, \msf{initialized})$ to $P$.
                    Furthermore, in the second case, add $(\msf{sid}, P, M)$ to $\mathcal{M}$.
                \end{itemize}

                \item Upon receiving $(\msf{sid}, \msf{eval}, P, s)$ from $P \in \mathbb{P} \cup \set{\Sim}$, check whether $\mathcal{L}$ contains
                $(\msf{sid}, \msf{mode}, \msf{id}, P, \bot)$ or $(\msf{sid}, \msf{mode}, \msf{id}, \bot, *)$ in case $P = \Sim{}$:
                \begin{itemize}
                    \item If it is not the case, turn to the waiting state;
                    \item Else, if $\msf{mode} = \msf{honest}$, run $\sigma \gets \pufeval_n \prn{\msf{id}, s}$
                    and send $(\msf{sid}, \msf{response}, s, \sigma)$ to $P$;
                    \item Else, if $\msf{mode} = \msf{malicious}$, get $(\msf{sid}, P, M)$ from $\mathcal{M}$, run
                    $\sigma \gets M(s)$ and send $(\msf{sid}, \msf{response}, s, \sigma)$ to $P$.
                \end{itemize}

                \item Upon receiving $(\msf{sid}, \msf{handover}, P_i, P_j)$ from $P_i$, check whether $\mathcal{L}$
                contains some $(\msf{sid}, *, *, P_i, \bot)$:
                \begin{itemize}
                    \item If it is not the case, turn to the waiting state;
                    \item Else, replace the tuple $(\msf{sid}, \msf{mode}, \msf{id}, P_i, \bot)$ in $\mathcal{L}$ with
                    $(\msf{sid}, \msf{mode}, \msf{id}, \bot, P_j)$ and send $(\msf{sid}, \msf{invoke}, P_i, P_j)$ to \Sim{}.
                \end{itemize}

                \item Upon receiving $(\msf{sid}, \msf{ready}, \Sim{})$ from \Sim{}, check whether $\mathcal{L}$ contains
                $(\msf{sid}, \msf{mode}, \msf{id}, \bot, P_j)$:
                \begin{itemize}
                    \item If it is not the case, turn to the waiting state;
                    \item Else, replace the tuple $(\msf{sid}, \msf{mode}, \msf{id}, \bot, P_j)$ in $\mathcal{L}$ with
                    $(\msf{sid}, \msf{mode}, \msf{id}, P_j, \bot)$, send $(\msf{sid}, \msf{handover}, P_i)$ to $P_j$ and add
                    $(\msf{sid}, \msf{received}, P_i)$ to $\mathcal{L}$.
                \end{itemize}

                \item Upon receiving $(\msf{sid}, \msf{received}, P_i)$ from \Sim{}, check whether $\mathcal{L}$ contains that tuple.
                \begin{itemize}
                    \item If so, send $(\msf{sid}, \msf{received})$ to $P_i$;
                    \item Otherwise, turn to the waiting state.
                \end{itemize}
            \end{itemize}
        }
    }%
    \caption{The malicious PUF functionality.}
    \label{fig:mpuf functionality}
\end{figure}
\subsection{Entropy properties}
\label{appendix:entropy}

Consider a random variable \( X \) defined on a set \( D_X \).
The \textbf{max-entropy} and \textbf{min-entropy} of \( X \) are respectively defined as
\begin{align*}
    \maxentropy{X} &= \log \prn{ \abs{ D_X } }; \\
    \minentropy{X} &= -\log \prn{ \max_{x \in D_X} \prob{X = x} }.
\end{align*}

Now, consider another random variable \( Y \), which may be correlated with \( X \).
The \textbf{average min-entropy} of \( X \) given \( Y \) is defined as
\begin{align*}
    &\condavgminentropy{X}{Y} \\
    &= - \log \prn{ \expsub{ y \gets Y }{ \max_{ x \in D_X } \condprob{X = x}{Y = y} } }.
\end{align*}

\begin{lemma}
    \label{lemma function}
    For any function $f$ and random variables $X, Y$, we have $\condavgminentropy{X}{Y} \leq \condavgminentropy{X}{f(Y)}$.
\end{lemma}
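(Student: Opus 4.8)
The plan is to reduce the inequality to a statement about optimal guessing probabilities and then exploit the fact that a function of $Y$ reveals no more than $Y$ itself. Since $-\log$ is strictly decreasing, the claim $\condavgminentropy{X}{Y} \leq \condavgminentropy{X}{f(Y)}$ is equivalent to showing that the quantity inside the logarithm can only shrink when $Y$ is replaced by $f(Y)$, i.e.
\[
    \expsub{z \gets f(Y)}{ \max_{x} \condprob{X = x}{f(Y) = z} } \leq \expsub{y \gets Y}{ \max_{x} \condprob{X = x}{Y = y} }.
\]
Conceptually, the left-hand side is the optimal probability of guessing $X$ from $f(Y)$, and the right-hand side the optimal probability of guessing $X$ from $Y$; any predictor that uses $f(Y)$ can be emulated by one that first applies $f$ to its input $Y$, so observing $Y$ is never worse. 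The formal argument makes this precise.

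To turn the intuition into a calculation, first I would rewrite each conditional expectation as an unnormalized sum over the joint distribution, using $\condprob{X=x}{Y=y}\,\prob{Y=y} = \prob{X=x, Y=y}$. The right-hand side then becomes $\sum_y \max_x \prob{X = x, Y = y}$. For the left-hand side I would group the values $y$ according to the fibers $f^{-1}(z)$: for each $z$ in the image of $f$,
\[
    \prob{f(Y) = z} \max_x \condprob{X = x}{f(Y) = z} = \max_x \sum_{y : f(y) = z} \prob{X = x, Y = y}.
\]

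The crux is a single subadditivity step, namely that the maximum of a sum is at most the sum of the maxima, so that $\max_x \sum_{y : f(y) = z} \prob{X=x, Y=y} \leq \sum_{y : f(y)=z} \max_x \prob{X=x, Y=y}$. Summing this over all $z$ in the image of $f$ collapses the double sum over fibers back into a single sum over all $y$, yielding exactly $\sum_y \max_x \prob{X=x, Y=y}$, which is the right-hand side; applying $-\log$ reverses the inequality and gives the claim. I expect the only subtle point to be the fiber bookkeeping --- ensuring the partition $\set{f^{-1}(z)}_z$ is handled cleanly, with the sum running over $z$ in the image of $f$ and each $y$ counted exactly once --- while the subadditivity of the maximum is the elementary inequality that drives the whole argument. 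A cleaner, essentially notation-free alternative is to phrase everything in terms of predictors: the optimal predictor for $X$ given $f(Y)$ induces a (generally suboptimal) predictor for $X$ given $Y$ with the same success probability, and the optimal predictor given $Y$ can only do at least as well, which is exactly the guessing-probability inequality above.
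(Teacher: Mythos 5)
Your proposal is correct and follows essentially the same route as the paper: both reduce the claim to the guessing-probability inequality, decompose the expectation over the fibers $f^{-1}(z)$, and apply the same key step (your ``max of a sum $\leq$ sum of maxima'' is exactly the paper's choice of the optimal $x_z^*$ per fiber, used as a suboptimal predictor given $Y$). Your closing ``predictor'' reformulation is in fact precisely how the paper writes it.
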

\begin{proof}
    We want to show that
    \begin{align*}
        &- \log{ \expsub{ y \gets Y }{ \max_{ x \in D_X } \condprob{X = x}{Y = y} }} \\
        \leq &- \log{ \expsub{ z \gets f(Y) }{ \max_{ x \in D_X } \condprob{X = x}{f(Y) = z} }},
    \end{align*}
    which is equivalent to
    \begin{align*}
        &\expsub{ y \gets Y }{ \max_{ x \in D_X } \condprob{X = x}{Y = y} } \\
        &\geq \expsub{ z \gets f(Y) }{ \max_{ x \in D_X } \condprob{X = x}{f(Y) = z} }.
    \end{align*}
    
    For each $z \in D_{f(Y)}$, let $x_z^*$ be such that
    \[
        \condprob{X = x_z^*}{f(Y) = z} = \max_{ x \in D_X } \condprob{X = x}{f(Y) = z}.
    \]

    Thus, 
    \begin{align*}
        &\expsub{ y \gets Y }{ \max_{ x \in D_X } \condprob{X = x}{Y = y} } \\
        \geq& \expsub{ y \gets Y }{ \condprob{X = x_{f(y)}^*}{Y = y} } \\
        =& \sum_{y \in D_Y} \prob{Y = y} \condprob{X = x_{f(y)}^*}{Y = y} \\
        =& \sum_{y \in D_Y} \prob{X = x_{f(y)}^*, Y = y} \\
        =& \sum_{z \in D_{f(Y)}} \sum_{y \in f^{-1}(z)} \prob{X = x_z^*, Y = y} \\
        =& \sum_{z \in D_{f(Y)}} \prob{X = x_z^*, f(Y) = z} \\
        =& \sum_{z \in D_{f(Y)}} \prob{f(Y) = z} \condprob{X = x_z^*}{f(Y) = z} \\
        =& \expsub{ z \gets f(Y) }{ \max_{ x \in D_X } \condprob{X = x}{f(Y) = z} }.
    \end{align*}
\end{proof}


\begin{lemma}
    \label{lemma independent}
    The following properties of the average min-entropy regarding independence hold for any
    random variables $X, Y, Z$:
    \begin{itemize}
        \item If $(X, Y) \indep Z$, then $\condavgminentropy{X}{Y, Z} = \condavgminentropy{X}{Y}$;
        \item If $X \indep Z$, then $\condavgminentropy{X}{Z} = \minentropy{X}$.
    \end{itemize}
\end{lemma}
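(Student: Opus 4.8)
The plan is to prove both identities directly from the definition of average min-entropy, exploiting the fact that independence causes the relevant conditional distributions to collapse, after which the outer expectation marginalizes trivially.

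For the first property, I would begin by writing out
\[
    \condavgminentropy{X}{Y,Z} = -\log \prn{ \expsub{(y,z) \gets (Y,Z)}{ \max_{x \in D_X} \condprob{X=x}{Y=y, Z=z} } },
\]
and focus on the inner conditional probability. Using the hypothesis $(X,Y) \indep Z$, I would factor $\prob{X=x, Y=y, Z=z} = \prob{X=x, Y=y} \prob{Z=z}$ and $\prob{Y=y, Z=z} = \prob{Y=y} \prob{Z=z}$ (restricting to the support of the conditioning variables so the denominators are nonzero), which yields $\condprob{X=x}{Y=y, Z=z} = \condprob{X=x}{Y=y}$. Since this expression no longer depends on $z$, both the maximizing $x$ and the value of the maximum depend only on $y$; the expectation over $(Y,Z)$ then reduces to an expectation over $Y$ alone by summing out $z$ against $\prob{Y=y, Z=z}$. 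Applying $-\log$ to both sides gives $\condavgminentropy{X}{Y,Z} = \condavgminentropy{X}{Y}$.

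For the second property, I would proceed analogously: the hypothesis $X \indep Z$ gives $\condprob{X=x}{Z=z} = \prob{X=x}$, so $\max_x \condprob{X=x}{Z=z} = \max_x \prob{X=x}$ is constant in $z$, and hence the expectation over $Z$ leaves it unchanged; applying $-\log$ recovers exactly $\minentropy{X}$. Alternatively, this is the special case of the first property obtained by taking $Y$ to be a trivial (constant) random variable, for which $\condavgminentropy{X}{Y} = \minentropy{X}$ and $(X,Y) \indep Z$ reduces to $X \indep Z$.

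I do not expect a genuine obstacle here; the only points that require care are the bookkeeping of conditional probabilities on the support of the conditioning variables (so that all denominators are nonzero and zero-probability atoms contribute nothing to the expectation) and the interchange of the maximum with the marginalization over $z$, which is justified precisely because the collapsed conditional probability is independent of $z$.
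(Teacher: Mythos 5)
Your proposal is correct and follows essentially the same route as the paper: both collapse $\condprob{X=x}{Y=y,Z=z}$ to $\condprob{X=x}{Y=y}$ via independence, marginalize out $z$ in the expectation, and obtain the second property as the special case where $Y$ is constant.
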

\begin{proof}
    To prove the first property, notice that $(X, Y) \indep Z$ implies that for all $x, y, z$,
    \[
        \condprob{X = x}{Y = y, Z = z} = \condprob{X = x}{Y = y}.
    \]
        
    Thus,
    \begin{align*}
        &\condavgminentropy{X}{Y, Z} \\
        =& - \log \prn{ \expsub{ (y, z) \gets (Y, Z) }{ \max_{ x \in D_X } \condprob{X = x}{Y = y, Z = z} } } \\
        =& - \log \prn{ \expsub{ (y, z) \gets (Y, Z) }{ \max_{ x \in D_X } \condprob{X = x}{Y = y} } } \\
        =& - \log \prn{ \expsub{ y \gets Y }{ \max_{ x \in D_X } \condprob{X = x}{Y = y} } } \\
        =& \condavgminentropy{X}{Y}.
    \end{align*}

    The second property is a particular case of the first one, where $Y$ is constant.
\end{proof}


\begin{lemma}
    \label{lemma minentropy}
    Let $A$ be a random variable and $X_A$ be a random variable that is parametrized on $A$.
    Furthermore, let $Y$ be another random variable and suppose $\condavgminentropy{X_a}{Y} = H$ for all $a \in D_A$.
    Then,
    \[
        \condavgminentropy{X_A}{Y} \geq H - \maxentropy{A}.
    \]
\end{lemma}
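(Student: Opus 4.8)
The plan is to unwind the definition of average min-entropy and reduce the claim to a single inequality between the relevant ``guessing probabilities''. Since $\maxentropy{A} = \log\prn{\abs{D_A}}$ and $t \mapsto -\log t$ is decreasing, the target inequality $\condavgminentropy{X_A}{Y} \geq H - \maxentropy{A}$ is equivalent to
\[
    \expsub{y \gets Y}{ \max_{x} \condprob{X_A = x}{Y = y} } \leq \abs{D_A} \cdot 2^{-H}.
\]
So the whole proof amounts to bounding the expected maximum conditional probability of $X_A$ given $Y$ by $\abs{D_A}$ times the common value $2^{-H} = \expsub{y \gets Y}{\max_x \condprob{X_a = x}{Y = y}}$ guaranteed by the hypothesis for every fixed $a$.

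First I would expand $\condprob{X_A = x}{Y = y}$ by the law of total probability over the parameter $A$, writing $\condprob{X_A = x}{Y = y} = \sum_{a \in D_A} \condprob{X_A = x, A = a}{Y = y}$. The key inequality is then that the maximum of a sum is at most the sum of the maxima, giving
\[
    \max_x \condprob{X_A = x}{Y = y} \leq \sum_{a \in D_A} \max_x \condprob{X_A = x, A = a}{Y = y}.
\]
Next, since $X_A$ coincides with $X_a$ on the event $A = a$, we have $\set{X_A = x, A = a} \subseteq \set{X_a = x}$, so each joint conditional probability is at most the corresponding marginal one: $\condprob{X_A = x, A = a}{Y = y} \leq \condprob{X_a = x}{Y = y}$. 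Substituting and then taking the expectation over $y \gets Y$ (interchanging it with the finite sum over $a$ by linearity), each summand becomes exactly $\expsub{y \gets Y}{\max_x \condprob{X_a = x}{Y = y}} = 2^{-H}$ by the hypothesis $\condavgminentropy{X_a}{Y} = H$. Summing over the $\abs{D_A}$ values of $a$ yields the desired bound $\abs{D_A} \cdot 2^{-H}$.

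The main obstacle -- and the place where the loss of $\maxentropy{A}$ is incurred -- is the transition from the joint event to the marginal. One is tempted to keep the weights $\condprob{A = a}{Y = y}$ and hope they sum to $1$, which would yield the stronger bound $\condavgminentropy{X_A}{Y} \geq H$; this is indeed valid when $A$ is independent of $Y$, but fails in general, so the honest step is to bound $\condprob{A = a}{Y = y} \leq 1$, i.e.\ to pass to the marginal $\condprob{X_a = x}{Y = y}$, which is precisely what turns the weighted average into a sum of $\abs{D_A}$ terms. A secondary subtlety to handle carefully is the reading of the parametrized family: one must treat each $X_a$ as a globally defined random variable jointly distributed with $Y$ (so that dropping the constraint $A = a$ is legitimate and the hypothesis $\condavgminentropy{X_a}{Y} = H$ refers to the same $Y$ for every $a$). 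This result is the average-min-entropy chain rule under a bounded-size side variable in the spirit of Dodis--Reyzin--Smith, and the $-\maxentropy{A}$ slack is known to be essentially tight, which is why we do not aim for the stronger independence-based bound.
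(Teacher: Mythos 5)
Your proposal is correct and follows essentially the same route as the paper's proof: expand $\condprob{X_A = x}{Y = y}$ via the law of total probability over $A$, drop the constraint $A = a$ to pass to $\condprob{X_a = x}{Y = y}$, bound the max of the sum by the sum of the maxima, interchange expectation and sum, and use the hypothesis to get $\abs{D_A}\,2^{-H}$ before taking $-\log$. The only cosmetic difference is the order in which you apply the max-of-sum bound and the drop-the-joint-event bound, which does not affect the argument.
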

\begin{proof}
    We just have to notice that
    \begin{align*}
        &\minentropy{X_A} \\
        =& -\log \prn{ \expsub{ y \gets Y }{ \max_{x \in D_{X_A}} \condprob{ X_A = x }{Y = y} } } \\
        =& -\log \prn{ \expsub{ y \gets Y }{ \max_{x \in D_{X_A}} \sum_{a \in D_A} \condprob{ X_a = x, A = a }{Y = y} } } \\
        \geq& -\log \prn{ \expsub{ y \gets Y }{ \max_{x \in D_{X_A}} \sum_{a \in D_A} \condprob{ X_a = x }{Y = y} } } \\
        \geq& -\log \prn{ \expsub{ y \gets Y }{ \sum_{a \in D_A} \max_{x \in D_{X_A}} \condprob{ X_a = x }{Y = y} } } \\
        =& -\log \prn{ \sum_{a \in D_A} \expsub{ y \gets Y }{ \max_{x \in D_{X_A}} \condprob{ X_a = x }{Y = y} } } \\
        =& -\log \prn{ \sum_{a \in D_A} 2^{-\condavgminentropy{X_a}{Y}} } \\
        =& -\log \prn{ \abs{D_A} 2^{-H} } \\
        =& -\log \prn{ 2^{-H} } - \log \prn{ \abs{D_A} } \\
        =& H - \maxentropy{A}. \\
    \end{align*}
\end{proof}


\begin{lemma}
    \label{lemma chain rule}
    The following weak chain rule holds for any random variables $X, Y$:
    \[
        \condavgminentropy{X}{Y, Z} \geq \condavgminentropy{X}{Y} - \maxentropy{Z}.
    \]

    Furthermore, as a consequence,
    \[
        \condavgminentropy{X}{Z} \geq \minentropy{X} - \maxentropy{Z}.
    \]
\end{lemma}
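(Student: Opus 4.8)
The plan is to reduce the claimed inequality, via monotonicity of $-\log$, to the equivalent multiplicative statement
\[
\expsub{(y,z) \gets (Y,Z)}{\max_{x} \condprob{X = x}{Y = y, Z = z}} \leq \abs{D_Z} \cdot \expsub{y \gets Y}{\max_{x} \condprob{X = x}{Y = y}},
\]
and to establish this last inequality by a direct manipulation of joint probabilities, in the same spirit as the computation in Lemma \ref{lemma function}.

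First I would rewrite the left-hand expectation as a sum over the joint support. Using $\prob{Y = y, Z = z}\condprob{X = x}{Y = y, Z = z} = \prob{X = x, Y = y, Z = z}$, the left-hand side becomes $\sum_{y, z}\max_{x}\prob{X = x, Y = y, Z = z}$. The key observation is then that, for each fixed $x$, the event $\set{X = x, Y = y, Z = z}$ is contained in $\set{X = x, Y = y}$, so $\prob{X = x, Y = y, Z = z} \leq \prob{X = x, Y = y}$; taking the maximum over $x$ on both sides preserves the inequality. This lets me replace each inner maximum by the coarser quantity $\max_x \prob{X = x, Y = y}$, which no longer depends on $z$.

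Summing the resulting bound over all $z \in D_Z$ then contributes a factor of exactly $\abs{D_Z}$, giving
\[
\sum_{y,z}\max_x\prob{X = x, Y = y, Z = z} \leq \abs{D_Z}\sum_y \max_x\prob{X = x, Y = y}.
\]
Rewriting $\sum_y \max_x\prob{X = x, Y = y} = \expsub{y \gets Y}{\max_x \condprob{X = x}{Y = y}}$ recovers the desired multiplicative inequality, and applying $-\log$ together with $\maxentropy{Z} = \log\abs{D_Z}$ yields the chain rule. The second, unconditioned inequality follows immediately by specializing $Y$ to a constant random variable, since then $\condavgminentropy{X}{Y} = \minentropy{X}$ and $\condavgminentropy{X}{Y, Z} = \condavgminentropy{X}{Z}$, exactly as in the degenerate case of Lemma \ref{lemma independent}.

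I expect the only delicate point to be the bookkeeping in the summation-over-$z$ step: one must sum over the full index set $D_Z$ (not merely the support of $Z$) to obtain the clean factor $\abs{D_Z}$. This is harmless, because the bound $\prob{X = x, Y = y, Z = z} \leq \prob{X = x, Y = y}$ holds for every $z$ regardless of whether $\prob{Z = z} = 0$, and the extra indices contribute only zero terms on the left. Everything else is a routine rearrangement of sums.
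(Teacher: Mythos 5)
Your proposal is correct. The paper does not actually prove this lemma; it disposes of it in one line by citing Lemma 2.2 of the fuzzy-extractors paper of Dodis, Reyzin and Smith, so your argument is a genuinely self-contained alternative -- essentially a reconstruction of the standard proof behind that citation. The chain of steps checks out: rewriting $\expsub{(y,z) \gets (Y,Z)}{\max_x \condprob{X = x}{Y = y, Z = z}}$ as $\sum_{y,z} \max_x \prob{X = x, Y = y, Z = z}$ is valid because the weight $\prob{Y = y, Z = z}$ does not depend on $x$ and can be pulled inside the maximum; the monotonicity bound $\prob{X = x, Y = y, Z = z} \leq \prob{X = x, Y = y}$ survives the maximum over $x$; and summing the $z$-independent bound over $D_Z$ yields exactly the factor $\abs{D_Z} = 2^{\maxentropy{Z}}$, which becomes the additive loss after applying $-\log$. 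Your handling of the zero-probability indices is also the right way to justify summing over all of $D_Z$ rather than the support. The specialization to constant $Y$ for the second inequality matches the degenerate case of Lemma \ref{lemma independent}, as you note. The only thing your write-up buys over the paper's approach is self-containedness (and consistency with the paper's own conventions for average min-entropy, which can differ by small conventions across sources); the paper's approach buys brevity. No gap.
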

\begin{proof}
    It follows from Lemma 2.2 of \cite{FuzzyExtractors}.
\end{proof}


\begin{lemma}
    \label{lemma equality}
    Let $X$ and $Y$ be random variables defined on the same set $D$.
    Then,
    \[
        \prob{X = Y} \leq 2^{- \condavgminentropy{X}{Y}}.
    \]
\end{lemma}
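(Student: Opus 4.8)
The plan is to unfold the definition of average min-entropy and thereby reduce the statement to an elementary term-by-term comparison. Recalling that
\[
    2^{-\condavgminentropy{X}{Y}} = \expsub{y \gets Y}{\max_{x \in D} \condprob{X = x}{Y = y}},
\]
the claim is equivalent to showing
\[
    \prob{X = Y} \leq \expsub{y \gets Y}{\max_{x \in D} \condprob{X = x}{Y = y}}.
\]

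First I would decompose the collision event $\set{X = Y}$ according to the value taken by $Y$. Since $X$ and $Y$ live on the same set $D$, every realization of $\set{X = Y}$ corresponds to some $d \in D$ with $X = d$ and $Y = d$, so that
\[
    \prob{X = Y} = \sum_{d \in D} \prob{Y = d} \condprob{X = d}{Y = d}.
\]

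Next I would bound each summand: for every fixed $d$, the particular conditional probability $\condprob{X = d}{Y = d}$ is at most the largest conditional probability $\max_{x \in D} \condprob{X = x}{Y = d}$, since $d$ is one admissible choice of $x$. Substituting this bound into the sum and recognizing the resulting $\prob{Y = d}$-weighted average as precisely the expectation $\expsub{y \gets Y}{\max_{x \in D} \condprob{X = x}{Y = y}}$ yields the desired inequality.

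There is no genuine obstacle here; the only point requiring a moment of care is the hypothesis that $X$ and $Y$ are defined on the same set $D$, which is what lets us identify the event $\set{X = Y}$ with the diagonal $\set{(d, d) : d \in D}$ and, in turn, guarantees that the value $d$ at which we condition is itself an admissible value of $X$ over which the maximum ranges. Everything else reduces to the trivial fact that a single term never exceeds the maximum of the terms it belongs to.
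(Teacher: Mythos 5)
Your proposal is correct and follows essentially the same route as the paper's proof: decompose $\prob{X = Y}$ over the values of $Y$, bound each conditional probability $\condprob{X = d}{Y = d}$ by the maximum over $x$, and recognize the resulting expectation as $2^{-\condavgminentropy{X}{Y}}$. No differences worth noting.
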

\begin{proof}
    As noted in \cite{stackexchange},
    \begin{equation*}
        \begin{split}
            \prob{X = Y}
            &= \sum_{y \in D} \prob{X = y, Y = y} \\
            &= \sum_{y \in D} \condprob{X = y}{Y = y} \prob{Y = y} \\
            &= \expsub{y \gets Y}{\condprob{X = y}{Y = y}} \\
            &\leq \expsub{y \gets Y}{\max_{x \in D} \condprob{X = x}{Y = y}} \\
            &= 2^{- \condavgminentropy{X}{Y}}.
        \end{split}
    \end{equation*}
\end{proof}


\begin{lemma}
    \label{lemma neighborhood}
    Let $X$ and $Y$ be random variables defined on the same set $D$.
    Furthermore, consider a family of sets $A(d) \subseteq D$ for each $d \in D$ such that:
    \begin{itemize}
        \item $\abs{A(d)}$ does not depend on $d$ (and so we write $\abs{A}$ instead);
        \item $\bigcup_{d \in D} A(d) = D$.
    \end{itemize}
    
    Then,
    \[
        \prob{X \in A(Y)} \leq |A| \ 2^{- \condavgminentropy{X}{Y}}.
    \]
\end{lemma}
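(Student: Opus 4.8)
The plan is to mimic the computation in the proof of Lemma \ref{lemma equality}, of which this statement is exactly the generalization obtained by replacing the singleton $\set{y}$ with the set $A(y)$. First I would condition on $Y$ and rewrite the event probability as an expectation over $Y$, expanding the inner event $X \in A(y)$ as a sum over its elements:
\[
    \prob{X \in A(Y)} = \expsub{y \gets Y}{\condprob{X \in A(y)}{Y = y}} = \expsub{y \gets Y}{\sum_{x \in A(y)} \condprob{X = x}{Y = y}}.
\]
The decomposition by the value of $Y$ is disjoint, so no mass is double counted and the first equality is exact.

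The key step is to bound the inner sum. Since $\abs{A(y)} = \abs{A}$ for every $y$ -- this is precisely where the hypothesis that $\abs{A(d)}$ does not depend on $d$ enters -- each of the $\abs{A}$ summands is at most $\max_{x \in D} \condprob{X = x}{Y = y}$, so the whole sum is at most $\abs{A}$ times this maximum. Pulling the constant $\abs{A}$ out of the expectation and recognizing the remaining expectation as the definition of the average min-entropy then gives
\[
    \prob{X \in A(Y)} \leq \abs{A} \ \expsub{y \gets Y}{\max_{x \in D} \condprob{X = x}{Y = y}} = \abs{A} \ 2^{- \condavgminentropy{X}{Y}}.
\]

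I do not expect any genuine obstacle: the argument is a clean one-line strengthening of Lemma \ref{lemma equality}. The only point requiring care is that the cardinality bound $\abs{A(y)} = \abs{A}$ must hold uniformly in $y$, which is exactly the stated assumption; without it the maximum could be weighted by a varying $\abs{A(y)}$ and would not factor out. The surjectivity hypothesis $\bigcup_{d \in D} A(d) = D$ plays no role in the inequality itself and is only needed so that $A(Y)$ can in principle cover all of $D$, making the bound meaningful in the intended application.
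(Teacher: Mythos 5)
Your proof is correct and follows essentially the same route as the paper's: condition on $Y$, expand $\condprob{X \in A(y)}{Y = y}$ as a sum over $A(y)$, bound each of the $\abs{A}$ terms by $\max_{x \in D}\condprob{X = x}{Y = y}$, and recognize the resulting expectation as $2^{-\condavgminentropy{X}{Y}}$. Your version is in fact slightly cleaner: the paper first inserts an extra bound $\condprob{X \in A(y)}{Y=y} \le \max_{x \in D}\condprob{X \in A(x)}{Y=y}$ and therefore needs the covering hypothesis $\bigcup_{d} A(d) = D$ to collapse $\max_{x \in D}\max_{x' \in A(x)}$ back to $\max_{x' \in D}$, whereas your direct bound at the realized $y$ shows (as you correctly note) that this hypothesis is not needed for the inequality at all.
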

\begin{remark}
    Notice that Lemma \ref{lemma equality} is a particular case of this one, where $A(d) = \set{d}$, for $d \in D$.
\end{remark}
\begin{proof}    
    \begin{equation*}
        \begin{split}
            &\prob{X \in A(Y)} \\
            =& \sum_{y \in D} \condprob{X \in A(y)}{Y = y} \prob{Y = y} \\
            =& \expsub{y \gets Y}{\condprob{X \in A(y)}{Y = y}} \\
            \leq& \expsub{y \gets Y}{\max_{x \in D} \condprob{X \in A(x)}{Y = y}} \\
            =& \expsub{y \gets Y}{\max_{x \in D} \sum_{x' \in A(x)} \condprob{X = x'}{Y = y}} \\
            \leq& \expsub{y \gets Y}{\max_{x \in D} \sum_{x' \in A(x)} \max_{x' \in A(x)} \condprob{X = x'}{Y = y}} \\
            =& \expsub{y \gets Y}{\max_{x \in D} |A| \ \max_{x' \in A(x)} \condprob{X = x'}{Y = y}} \\
            =& |A| \ \expsub{y \gets Y}{\max_{x \in D} \max_{x' \in A(x)} \condprob{X = x'}{Y = y}} \\
            =& |A| \ \expsub{y \gets Y}{\max_{x' \in \bigcup_{x \in D} A(x)} \condprob{X = x'}{Y = y}} \\
            =& |A| \ \expsub{y \gets Y}{\max_{x' \in D} \condprob{X = x'}{Y = y}} \\
            =& |A| \ 2^{- \condavgminentropy{X}{Y}}.
        \end{split}
    \end{equation*}
\end{proof}

\subsection{CQ property}
\label{appendix:wellspread}

We formalize the CQ property as an interaction between an honest challenger \C{} and 
an adversary \Adv{}, depicted in Fig. \ref{fig:wellspreaddomain}, where \Adv{} is successful if it 
can query \PUF{} on a challenge that is close to $s$.
We say that \puffamily{} satisfies the CQ property if any adversary succeeds with 
negligible probability.
\begin{figure}[ht]
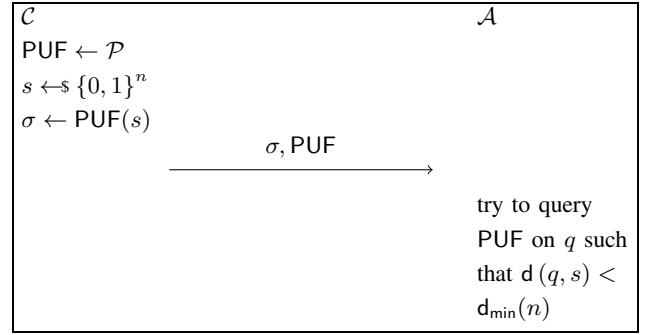

    \procb{}{
        \C \< \< \Adv \\
        \PUF \gets \puffamily \< \< \\
        s \sample \bin^n \< \< \\
        \sigma \gets \PUF(s) \< \< \\ [-1.5ex]
        \< \sendmessageright*{\sigma, \PUF} \< \\
        \< \< \text{try to query} \\
        \< \< \text{\PUF{} on $q$ such} \\
        \< \< \text{that } \dham \prn{q, s} < \\
        \< \< \dmin(n)
    }
    \caption{CQ property interaction.}
    \label{fig:wellspreaddomain}
\end{figure}

\begin{proposition}
    \label{wellspreaddomain}
    If \puffamily{} satisfies the preimage entropy property, then it satisfies the CQ property.
\end{proposition}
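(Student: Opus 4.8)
The plan is to reduce any successful CQ-adversary to a violation of the bound in the preimage entropy property, routing the argument through the neighborhood lemma (Lemma~\ref{lemma neighborhood}). First I would fix the random experiment of Fig.~\ref{fig:wellspreaddomain}: let $S$ be the uniform challenge, $\PUF \gets \puffamily$, and $\sigma \gets \PUF(S)$ the response handed to $\Adv$. The key preliminary observation is that, although $\Adv$ holds the token and may evaluate it adaptively, those extra evaluations are --- given the PUF's index --- independent of $S$; they only reveal more of the already-available function together with fresh noise, neither of which depends on which challenge produced $\sigma$. Consequently every challenge $Q$ that $\Adv$ issues is a randomized function of $(\PUF, \sigma)$, and the additional oracle answers carry no information about $S$ beyond $\sigma$; formalizing this reduces the relevant conditioning to $\condavgminentropy{S}{\PUF(S)}$.

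Next I would treat a single query $Q = f(\PUF, \sigma)$. By the data-processing inequality for average min-entropy (Lemma~\ref{lemma function}), $\condavgminentropy{S}{\PUF(S)} \leq \condavgminentropy{S}{Q}$. A single query is \emph{close} exactly when $\dham(Q,S) < \dmin(n)$, i.e.\ when $S \in B_n^{\dmin(n)}(Q)$. I would then apply Lemma~\ref{lemma neighborhood} with $X = S$, $Y = Q$ and the ball family $A(q) = B_n^{\dmin(n)}(q)$ --- all of these balls share the cardinality $\abs{B_n^{\dmin(n)}}$ and cover $\bin^n$ --- to obtain
\begin{align*}
    \prob{ S \in B_n^{\dmin(n)}(Q) } &\leq \abs{B_n^{\dmin(n)}} \, 2^{-\condavgminentropy{S}{Q}} \\
    &\leq \abs{B_n^{\dmin(n)}} \, 2^{-\condavgminentropy{S}{\PUF(S)}} = \varepsilon(n),
\end{align*}
where the final equality is precisely the preimage entropy property.

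Finally, since $\Adv$ is PPT it issues at most polynomially many queries $Q_1, \dots, Q_t$, each a function of the same view, so the per-query bound applies to all of them; a union bound gives overall success probability at most $t(n)\,\abs{B_n^{\dmin(n)}}\,2^{-\condavgminentropy{S}{\PUF(S)}} = \mathrm{poly}(n)\cdot\varepsilon(n)$, still negligible, which establishes the CQ property. I expect the main obstacle to be the preliminary step: making rigorous that $\Adv$'s full view --- the token plus adaptive, noisy oracle access --- yields no more information about $S$ than the quantity conditioned upon in the preimage entropy property. This requires pinning down whether $\condavgminentropy{S}{\PUF(S)}$ conditions on the response alone or also on the PUF index, and then arguing that adaptivity introduces no dependence on $S$; once that is settled, the entropy lemmas assemble the bound mechanically.
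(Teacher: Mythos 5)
Your proposal is correct and follows essentially the same route as the paper: a union bound over the polynomially many queries, then for each query $Q$ the neighborhood lemma (Lemma~\ref{lemma neighborhood}) combined with the data-processing inequality (Lemma~\ref{lemma function}) to reduce the conditioning to $\condavgminentropy{S}{\PUF(S)}$ and invoke preimage entropy. The "main obstacle" you flag is resolved in the paper by an inductive argument that every query can be written as $Q = f(\PUF(S), X)$ for some $X \indep S$, after which Lemma~\ref{lemma independent} discards $X$ from the conditioning.
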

\begin{proof}
    Let $\bm{Q}$ be the sequence of queries \Adv{} makes to \PUF{}, which is a random variable.
    Notice that
    \begin{equation*}
        \begin{split}
            \prob{\Adv \text{ is successful}}
            &= \prob{ \dham \prn{\bm{Q}, S} < \dmin(n) } \\
            &= \prob{ \exists Q \in \bm{Q} : \dham \prn{Q, S} < \dmin(n) } \\
            &\leq \sum_{Q \in \bm{Q}} \prob{ \dham \prn{Q, S} < \dmin(n) }.
        \end{split}        
    \end{equation*}
    
    Since \Adv{} is PPT, it only queries \PUF{} a polynomial number of times and therefore it suffices to show that
    $\prob{ \dham \prn{Q, S} < \dmin(n) }$ is negligible for each $Q \in \bm{Q}$.

    First, notice that $Q = f(\PUF(S), X)$, for some function $f$ and some random variable $X \indep S$.
    Indeed, if $Q$ is the first query, this is clear, since at that point the only information that \Adv{} has that depends on $S$ is $\PUF(S)$.
    Furthermore, suppose some queries $Q'$ were done before $Q$ and all of them are of that form.
    Now, $Q$ can additionally depend on $\PUF(Q')$, but since all the $Q'$ are of that form, the same will happen with $Q$.
    From this observation, we get
    \begin{align*}
        \prob{ \dham(Q, S) < \dmin(n) }
        &= \prob{ S \in B_n^{\dmin(n)}(Q) } \\
        &\leq \left|B_n^{\dmin(n)} \right| 2^{-\condavgminentropy{S}{Q}} \tag{Lemma \ref{lemma neighborhood}} \\
        &\leq \left|B_n^{\dmin(n)} \right| 2^{-\condavgminentropy{S}{\PUF(S), X}} \tag{Lemma \ref{lemma function}} \\
        &\leq \left|B_n^{\dmin(n)} \right| 2^{-\condavgminentropy{S}{\PUF(S)}}, \tag{Lemma \ref{lemma independent}}
    \end{align*}
    which is negligible, by assumption.
\end{proof}

\begin{remark}
    In \cite{PUFsUC}, it was assumed that $\dmin(n) \in o(n / \log(n))$, which corresponds to a simplified version of the preimage entropy property.
    Indeed, first notice that
    \begin{align*}
        \abs{B_n^{\dmin(n)}} &= \sum_{k = 0}^{\dmin(n) - 1} \binom{n}{k} \\
        &\leq \sum_{k = 0}^{\dmin(n) - 1} n^k \\
        &= \frac{n^{\dmin(n)} - 1}{n - 1} \\
        &\leq n^{\dmin(n)}.
    \end{align*}

    Since $\condavgminentropy{S}{\PUF(S)}$ was not taken into consideration
    (that is, it was assumed to take its maximum value $n$), this then leads to
    \begin{equation*}
        \begin{split}
            \abs{B_n^{\dmin(n)}} 2^{-\condavgminentropy{S}{\PUF(S)}}
            &\leq n^{\dmin(n)} 2^{-n} \\
            &\leq 2^{-n + \dmin(n) \log(n)} \\
            &= 2^{-n + o(n)},
        \end{split}
    \end{equation*}
    which is negligible.
\end{remark}
\subsection{Indistinguishability property}
\label{appendix:indist}

We can formalize the indistinguishability property as an interaction between an honest challenger \C{} and a distinguisher \D{}, depicted in Fig. \ref{fig:single indist},
where \D{} is successful if it can guess whether the extracted string it received comes from \PUF{} or is uniform.
We say that \puffamily{} satisfies the indistinguishability property if all distinguishers succeed with probability negligibly close to $\frac{1}{2}$,
that is,
\[
    \prob{\D = B} = \frac{1}{2} + \varepsilon(n).
\]
\begin{figure}[ht]
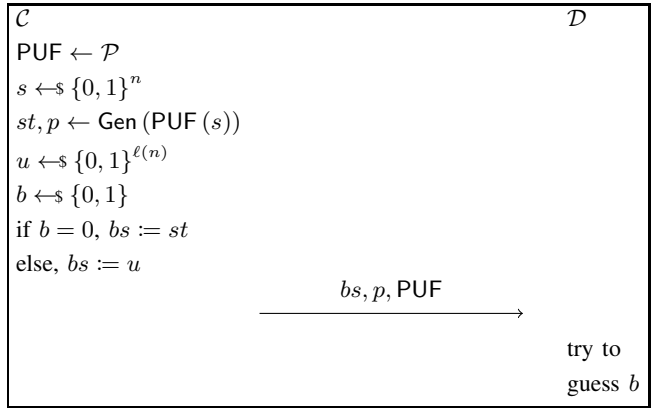

    \procb{}{
        \C \< \< \D \\
        \PUF \gets \puffamily \< \< \\
        s \sample \bin^n \< \< \\
        st, p \gets \Gen \prn{ \PUF \prn{ s } } \< \< \\
        u \sample \bin^{ \fel(n) } \< \< \\
        b \sample \bin \< \< \\
        \text{if $b = 0$, } \bs \coloneqq st \< \< \\
        \text{else, } \bs \coloneqq u \< \< \\[-1.5ex]
        \< \sendmessageright*{\bs, p, \PUF} \< \\
        \< \< \text{try to}\\
        \< \< \text{guess } b
    }
    \caption{Indistinguishability property interaction.}
    \label{fig:single indist}
\end{figure}

\begin{lemma}
    \label{single auxprop}
    Consider the indistinguishability property interaction depicted in Fig. \ref{fig:single indist} and
    let $\bm{Q}$ denote the sequence of queries \D{} makes to \PUF{}.
    Then, for all distinguishers \D{},
    \[
        \condprob{\D = B}{\dham \prn{\bm{Q}, S} \geq \dmin(n)} = \frac{1}{2} + \varepsilon(n).
    \]
\end{lemma}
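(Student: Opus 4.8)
The plan is to show that, conditioned on the event $E \coloneqq \set{\dham(\bm{Q}, S) \geq \dmin(n)}$ that every query $\D$ makes to $\PUF$ is far from the secret challenge $s$, the extracted string $st$ is statistically indistinguishable from the uniform string $u$; once this is established, $\D$ receives essentially the same distribution in the case $b = 0$ and the case $b = 1$, and hence cannot guess $b$ with probability bounded away from $\frac{1}{2}$. Concretely, I would first invoke the standard reduction between guessing and distinguishing: writing $\mathsf{view}_b$ for the joint distribution of everything $\D$ observes (namely $\bs$, $p$, and the query--response pairs $\prn{\bm{Q}, \PUF(\bm{Q})}$) when the challenge bit is $b$, one has, up to the negligible imbalance of $B$ conditioned on $E$ (which the same analysis controls),
\[
    \condprob{\D = B}{E} = \tfrac{1}{2} + \tfrac{1}{2}\prn{ \condprob{\D = 0}{B = 0, E} - \condprob{\D = 0}{B = 1, E} },
\]
so it suffices to bound the conditional statistical distance between $\mathsf{view}_0$ and $\mathsf{view}_1$ given $E$ by a negligible function.

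The core of the argument is the almost-uniformity property, with side information $\EXTRA$ taken to be the responses $\PUF(\bm{Q})$ to $\D$'s far queries. Since $E$ forces $\dham(\bm{Q}, s) \geq \dmin(n)$, unpredictability yields $\condavgminentropy{\PUF(s)}{\PUF(\bm{Q})} \geq \entropybound(n)$, and almost-uniformity (extraction independence) then gives $\SD\prn{ \prn{st, p, \PUF(\bm{Q})}, \prn{u, p, \PUF(\bm{Q})} } < \feeps_\fe(n)$. Because $\D$'s output is obtained from $\prn{\bs, p, \PUF(\bm{Q})}$ by post-processing, a data-processing step would transfer this closeness to $\mathsf{view}_0$ versus $\mathsf{view}_1$, and combining with the displayed identity gives $\condprob{\D = B}{E} = \tfrac{1}{2} + \varepsilon(n)$, since the resulting gap is at most $\feeps_\fe(n) = \abs{\varepsilon(n)}$.

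The main obstacle, and the step needing the most care, is that $\bm{Q}$ is chosen \emph{adaptively} and may depend on $\bs$, so the side information $\PUF(\bm{Q})$ is not a single fixed random variable but one that differs between the two worlds; moreover conditioning on $E$ itself leaks information about $S$, and hence potentially about $\PUF(s)$, which threatens the entropy hypothesis that almost-uniformity requires. To handle the adaptivity I would reuse the observation underlying Proposition \ref{wellspreaddomain}: fixing $\D$'s internal randomness, every query is of the form $Q = f\prn{\bs, p, \PUF(s), \PUF(\bm{Q}'), X}$ for the previously answered far queries $\bm{Q}'$ and some $X \indep S$, so the entire transcript is a deterministic function of $\PUF(s)$, the far responses, and independent coins. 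This lets me organize the proof as an average over the polynomially many possible fixed far-query patterns: for each realization of the far challenges I apply almost-uniformity with $\EXTRA$ equal to the responses at exactly those challenges, using Lemmas \ref{lemma function} and \ref{lemma independent} to argue that neither the extra independent randomness nor revealing which pattern occurred can lower $\condavgminentropy{\PUF(s)}{\EXTRA}$ below $\entropybound(n)$.

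Finally, I would address the conditioning itself by recalling that $E$ is precisely the ``good'' event isolated so that this lemma can later be combined with the CQ property (Proposition \ref{wellspreaddomain}), whose complement occurs only with negligible probability; within $E$, all queries are far by definition, so the entropy hypothesis of almost-uniformity is exactly the one supplied by unpredictability, and the small distortion introduced by conditioning on $E$ is absorbed into the negligible slack $\feeps_\fe(n)$. The delicate point throughout is keeping the reduction non-circular: in the world $b = 0$ the queries may depend on $st$, the very object we are trying to prove uniform, and the resolution is that the far responses $\PUF(\bm{Q})$ carry negligible information about $\PUF(s)$, so feeding $\D$ the true responses is indistinguishable from feeding it responses that are independent of $st$.
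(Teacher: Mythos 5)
Your proposal is correct and follows essentially the same route as the paper: condition on a fixed realization $(\bm{q},s)$ of the far queries and the challenge, invoke unpredictability to get $\condavgminentropy{\PUF(s)}{\PUF(\bm{q})} \geq \entropybound(n)$, apply almost-uniformity with $\EXTRA = \PUF(\bm{q})$ to bound the statistical distance between the two views, convert that to a guessing bound, and average over all far realizations. Your extra care about the adaptivity of $\bm{Q}$ and the conditioning on $E$ goes slightly beyond what the paper writes down, but it does not change the argument's structure.
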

\begin{proof}
    Suppose $\bm{Q} = \bm{q}$ and $S = s$ such that $\dham(\bm{q}, s) \geq \dmin(n)$.
    Consider the programs depicted in Fig. \ref{fig:procedures single auxprop}.
    \begin{figure}[ht]
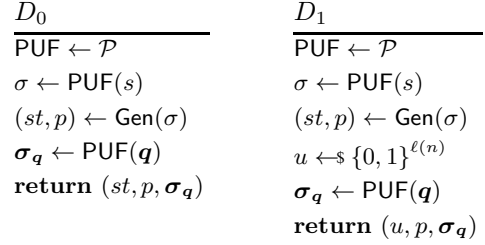

        \begin{pchstack}[center, space = 1cm]
            \procedure{$D_0$}{
                \PUF \gets \puffamily \\
                \sigma \gets \PUF(s) \\
                (st, p) \gets \Gen(\sigma) \\
                \bm{\sigma}_{\bm{q}} \gets \PUF(\bm{q}) \\
                \pcreturn (st, p, \bm{\sigma}_{\bm{q}})
            }
    
            \procedure{$D_1$}{
                \PUF \gets \puffamily \\
                \sigma \gets \PUF(s) \\
                (st, p) \gets \Gen(\sigma) \\
                u \sample \bin^{\fel(n)} \\
                \bm{\sigma}_{\bm{q}} \gets \PUF(\bm{q}) \\
                \pcreturn (u, p, \bm{\sigma}_{\bm{q}})
            }
        \end{pchstack}
        \caption{Programs $D_0$ and $D_1$.}
        \label{fig:procedures single auxprop}
    \end{figure}    

    We get $\condavgminentropy{\PUF(s)}{\PUF(\bm{q})} \geq \entropybound(n)$ from unpredictability.
    Thus, from almost-uniformity with $\EXTRA = \PUF(\bm{q})$, we know that 
    \[
        \SD \prn{ D_0, D_1 } = \abs{ \varepsilon(n) }.
    \]  
    
    Furthermore, notice that each $D_b$ corresponds to the information \D{} gets in the interaction if $B = b$.
    Therefore, since statistical closeness implies indistinguishability, we get
    \[
        \condprob{\D = B}{\bm{Q} = \bm{q}, S = s} = \frac{1}{2} + \varepsilon(n).
    \]

    Thus,
    \begin{align*}
        &\prob{\D = B, \dham \prn{\bm{Q}, S} \geq \dmin(n)} \\
        =& \sum_{\substack{(\bm{q}, s) : \\ \dham(\bm{q}, s) \geq \dmin(n)}} \condprob{\D = B}{\bm{Q} = \bm{q}, S = s} \prob{\bm{Q} = \bm{q}, S = s} \\
        =& \sum_{\substack{(\bm{q}, s) : \\ \dham(\bm{q}, s) \geq \dmin(n)}} \prn{ \frac{1}{2} + \varepsilon(n) } \prob{\bm{Q} = \bm{q}, S = s} \\
        =& \prn{ \frac{1}{2} + \varepsilon(n) } \prob{\dham \prn{\bm{Q}, S} \geq \dmin(n)},
    \end{align*}
    that is,
    \[
        \condprob{\D = B}{\dham \prn{\bm{Q}, S} \geq \dmin(n)} = \frac{1}{2} + \varepsilon(n).
    \]
\end{proof}

\begin{proposition}
    \label{single indistinguishability}
    If \puffamily{} satisfies the preimage entropy property, then it satisfies the indistinguishability 
    property.
\end{proposition}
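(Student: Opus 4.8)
The plan is to remove the conditioning in Lemma \ref{single auxprop} using the CQ property. Let $\bm{Q}$ denote the sequence of queries \D{} makes to \PUF{}, and set $E \coloneqq \set{\dham(\bm{Q}, S) \geq \dmin(n)}$, so that its complement $\bar{E}$ is exactly the event that \D{} queries \PUF{} within distance $\dmin(n)$ of the secret challenge $s$. By the law of total probability,
\[
    \prob{\D = B} = \condprob{\D = B}{E}\prob{E} + \condprob{\D = B}{\bar{E}}\prob{\bar{E}}.
\]
The idea is that Lemma \ref{single auxprop} already controls the first conditional probability, while the CQ property (Proposition \ref{wellspreaddomain}) makes $\prob{\bar{E}}$ negligible, so the second term is harmless.

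First I would bound $\prob{\bar{E}}$. Since \puffamily{} satisfies the preimage entropy property, Proposition \ref{wellspreaddomain} gives that it satisfies the CQ property. To apply it, I would note that in the indistinguishability interaction \D{} receives only $(\bs, p, \PUF)$, where $(st, p) \gets \Gen(\PUF(s))$, $b \sample \bin$, $u \sample \bin^{\fel(n)}$ and $\bs \in \set{st, u}$; thus $(\bs, p)$ is a function of $\sigma = \PUF(s)$ together with randomness independent of $S$. Consequently \D{} can be simulated by a CQ-adversary $\Adv$ that, on input $(\sigma, \PUF)$, reconstructs the challenger's message, runs \D{}, and forwards its \PUF{}-queries; the queries of $\Adv$ coincide with $\bm{Q}$, and the reduction preserves the joint distribution of $S$ and these queries. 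Hence $\prob{\bar{E}} = \varepsilon(n)$ and $\prob{E} = 1 - \varepsilon(n)$. Lemma \ref{single auxprop} then supplies $\condprob{\D = B}{E} = \frac{1}{2} + \varepsilon(n)$. Substituting both estimates and using $\condprob{\D = B}{\bar{E}} \leq 1$, every summand other than $\frac{1}{2}$ is a sum or product of negligible quantities, so $\prob{\D = B} = \frac{1}{2} + \varepsilon(n)$, which is the claim.

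The computation itself is immediate once the two ingredients are in place; the only step needing genuine care -- and the one I would regard as the main obstacle -- is the justification that the bad event $\bar{E}$ is indeed governed by the CQ property, even though the CQ game hands its adversary the raw response $\sigma$ whereas \D{} sees only the extracted data $(\bs, p)$. This is what the reduction above settles: because $(\bs, p)$ is obtained from $\sigma$ and $S$-independent coins, the data-processing argument from the proof of Proposition \ref{wellspreaddomain} applies verbatim (each query is again of the form $f(\PUF(S), X)$ with $X \indep S$), so giving \D{} strictly less information than the CQ-adversary can only decrease its chance of querying close to $s$.
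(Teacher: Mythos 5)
Your proposal is correct and follows essentially the same route as the paper: split on the event $\dham(\bm{Q}, S) \geq \dmin(n)$, invoke Proposition \ref{wellspreaddomain} (via the preimage entropy property) to make the complementary event negligible, and apply Lemma \ref{single auxprop} on the good event. Your explicit justification that the CQ reduction still applies when \D{} sees only $(\bs, p)$ rather than the raw response $\sigma$ is a detail the paper leaves implicit, and it is handled correctly by the same observation used in the proof of Proposition \ref{wellspreaddomain} (every query is of the form $f(\PUF(S), X)$ with $X \indep S$).
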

\begin{proof}
    Let $\bm{Q}$ be the sequence of queries $\D$ makes to $\PUF$, which is a random variable.
    Notice that
    \begin{align*}
        &\prob{\D = B} \\
        = &\prob{\D = B, \dham \prn{\bm{Q}, S} < \dmin(n)} + \\
        &\prob{\D = B, \dham \prn{\bm{Q}, S} \geq \dmin(n)}.
    \end{align*}

    From Lemma \ref{wellspreaddomain}, we have
    \[
        \prob{\dham \prn{\bm{Q}, S} < \dmin(n)} = \varepsilon(n).
    \]

    Furthermore, from Lemma \ref{single auxprop}, we have
    \begin{align*}
        &\prob{\D = B, \dham \prn{\bm{Q}, S} \geq \dmin(n)} \\
        = &\prn{ \frac{1}{2} + \varepsilon(n) } \prob{\dham \prn{\bm{Q}, S} \geq \dmin(n)}.
    \end{align*}

    Thus, we can conclude that
    \begin{equation*}
        \begin{split}
            \prob{\D = B}
            &= \varepsilon(n) + \prn{\frac{1}{2} + \varepsilon(n)} (1 - \varepsilon(n)) \\
            &= \frac{1}{2} + \varepsilon(n).
        \end{split}
    \end{equation*}
\end{proof}

We now present some analogous properties that are necessary for collective commitment schemes.
Let $\IND_n$ denote an arbitrary program that generates indices, where $\abs{\IND_n}$ is polynomial in $n$.
The interaction depicted in Fig. \ref{fig:collective one indist} is a generalization
of the indistinguishability interaction depicted in Fig. \ref{fig:single indist}.
Just like before, we say that \puffamily{} satisfies the collective indistinguishability property if 
all distinguishers succeed with probability negligibly close to $\frac{1}{2}$.
\begin{figure*}[ht]
    \procb{}{
            \C \< \< \D \\
            \PUF \gets \puffamily \< \< \\
            \text{for each } i \in \IND_n: \< \< \\
            \pcind s^i \sample \bin^n \< \< \\
            \pcind st^i, p^i \gets \Gen \prn{ \PUF \prn{ s^i } } \< \< \\
            \pcind u^i \sample \bin^{ \fel(n) } \< \< \\
            b \sample \bin \< \< \\
            \text{if $b = 0$, } \bm{\bs} \coloneqq \bm{st} \< \< \\
            \text{else, } \bm{\bs} \coloneqq \bm{u} \< \< \\[-1.5ex]
            \< \sendmessageright*{\bm{bs}, \bm{p}, \PUF} \< \\
            \< \< \text{try to guess } b
        }
        \caption{Collective indistinguishability interaction.}
        \label{fig:collective one indist}
\end{figure*}

\begin{lemma}
    \label{collective one auxprop}
    Consider the interaction depicted in Fig. \ref{fig:collective one indist}.    
    Let $\bm{Q}$ be the sequence of queries \D{} makes to \PUF{} and let $\bm{Q}^i \coloneqq \bm{Q} \mathbin \Vert \bm{S}^{j \neq i}$
    for each $i \in \IND_n$.
    Furthermore, we write $\varphi(\bm{q}, \bm{s})$ instead of
    \[
        \forall i \in \IND_n \ \dham \prn{\bm{q}^i, s^i} \geq \dmin(n).
    \]
    
    Then, for all distinguishers \D{},
    \[
        \condprob{\D = B}{\varphi(\bm{Q}, \bm{S})} = \frac{1}{2} + \varepsilon(n).
    \]
\end{lemma}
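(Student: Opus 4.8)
The plan is to mirror the structure of the single-challenge Lemma \ref{single auxprop}, replacing its single invocation of almost-uniformity with a hybrid argument over the indices generated by $\IND_n$. First I would fix values $\bm{Q} = \bm{q}$ and $\bm{S} = \bm{s}$ satisfying $\varphi(\bm{q}, \bm{s})$, and reduce the claim to showing that \D{}'s view is statistically the same whether $B = 0$ or $B = 1$. As in the single case, the conditional guessing probability $\condprob{\D = B}{\bm{Q} = \bm{q}, \bm{S} = \bm{s}}$ is then summed over all such $(\bm{q}, \bm{s})$ and $\prob{\varphi(\bm{Q}, \bm{S})}$ is factored out, exactly as in the final computation of Lemma \ref{single auxprop}. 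The content of the proof is thus concentrated in bounding a single statistical distance.

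The core is to bound the distance between the two distributions $D_0$ and $D_1$ describing \D{}'s view when $B = 0$ (all secret strings genuine, $\bm{bs} = \bm{st}$) and $B = 1$ (all replaced by uniform, $\bm{bs} = \bm{u}$), where both also expose $\bm{p}$ and the query responses $\bm{\sigma}_{\bm{q}} = \PUF(\bm{q})$ (conditioning on $\bm{Q} = \bm{q}$ means \D{}'s guess is a function of exactly this view, just as in the single case). I would enumerate $\IND_n = \set{i_1, \dots, i_N}$ with $N = \abs{\IND_n}$ and define hybrids $H_0, \dots, H_N$, where $H_\ell$ outputs $u^{i_1}, \dots, u^{i_\ell}$ in place of the first $\ell$ secret strings and keeps $st^{i_{\ell+1}}, \dots, st^{i_N}$ genuine, together with $\bm{p}$ and $\bm{\sigma}_{\bm{q}}$. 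Then $H_0 = D_0$ and $H_N = D_1$.

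The key step is to show each consecutive pair $H_{\ell-1}, H_\ell$ is within $\feeps_\fe(n)$ in statistical distance, by invoking almost-uniformity for the challenge $s^{i_\ell}$ with secret/helper pair $(st^{i_\ell}, p^{i_\ell})$ and side information \EXTRA{} equal to everything else in the output: the already-substituted $u^{i_1}, \dots, u^{i_{\ell-1}}$, the still-genuine $st^{i_{\ell+1}}, \dots, st^{i_N}$, all helper data $p^j$ with $j \neq i_\ell$, and $\bm{\sigma}_{\bm{q}}$. To apply it I must verify $\condavgminentropy{\PUF(s^{i_\ell})}{\EXTRA} \geq \entropybound(n)$. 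Here I observe that \EXTRA{} is a randomized function of $\PUF(\bm{q}^{i_\ell})$, with $\bm{q}^{i_\ell} = \bm{q} \mathbin \Vert \bm{s}^{j \neq i_\ell}$, together with randomness $R$ (the \Gen{} coins and the uniform $u$'s) independent of $(\PUF(s^{i_\ell}), \PUF(\bm{q}^{i_\ell}))$. Since $\varphi$ gives $\dham(\bm{q}^{i_\ell}, s^{i_\ell}) \geq \dmin(n)$, unpredictability yields $\condavgminentropy{\PUF(s^{i_\ell})}{\PUF(\bm{q}^{i_\ell})} \geq \entropybound(n)$; then Lemma \ref{lemma function} (data processing) and Lemma \ref{lemma independent} (dropping the independent $R$) preserve this bound under \EXTRA{}.

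Finally I would chain the $N$ hybrid bounds by the triangle inequality, obtaining $\SD(D_0, D_1) \leq N \cdot \feeps_\fe(n)$, which is negligible since $N = \abs{\IND_n}$ is polynomial in $n$ and $\feeps_\fe$ is negligible; statistical closeness then gives $\condprob{\D = B}{\bm{Q} = \bm{q}, \bm{S} = \bm{s}} = \frac{1}{2} + \varepsilon(n)$, and the summation yields the claim. The main obstacle I anticipate is the entropy bookkeeping in the hybrid step: one must check that the already-substituted uniform strings genuinely contribute nothing about $\PUF(s^{i_\ell})$ (handled by the independence of $R$ via Lemma \ref{lemma independent}), and that the remaining genuine strings $st^{i_m}$ with $m > \ell$, being functions of $\PUF(s^{i_m})$ for $s^{i_m}$ far from $s^{i_\ell}$ by the same $\varphi$, are already absorbed into $\PUF(\bm{q}^{i_\ell})$ — so that the single entropy bound from unpredictability suffices uniformly across all hybrid indices.
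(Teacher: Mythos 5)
Your proposal is correct and follows essentially the same route as the paper's proof: a hybrid argument over the indices in $\IND_n$, where each step replaces one $st^{i_\ell}$ by a uniform string and invokes almost-uniformity with side information $\EXTRA$ consisting of the substituted uniforms, the remaining genuine secret strings, the helper data, and the query responses, with the entropy precondition obtained from unpredictability applied to $\bm{q}^{i_\ell} = \bm{q} \mathbin\Vert \bm{s}^{j \neq i_\ell}$ via Lemmas \ref{lemma function} and \ref{lemma independent}. The final chaining by the triangle inequality and the summation over $(\bm{q},\bm{s})$ satisfying $\varphi$ also match the paper exactly.
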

\begin{proof}
    Suppose $\bm{Q} = \bm{q}$ and $S = s$ that satisfy $\varphi(\bm{q}, \bm{s})$.
    For simplicity, assume $\IND_n = \set{1, \cdots, p(n)}$, for some polynomial $p(n)$.
    For each $i \in \set{0, \cdots, p(n)}$, consider the procedure depicted in Fig. \ref{fig:procedures collective one auxprop}.
    \begin{figure}[ht]
        \centering
        \procedure{$D_i$}{
            \PUF \gets \puffamily \\
            \pcfor 1 \leq j \leq p(n) : \\
            \pcind \sigma^j \gets \PUF \prn{ s^j } \\
            \pcind \prn{st^j, p^j} \gets \Gen \prn{ \sigma^j } \\
            \pcfor 1 \leq j \leq i: \\
            \pcind u^j \sample \bin^{\fel(n)} \\
            \bm{\sigma}_{\bm{q}} \gets \PUF \prn{ \bm{q} } \\
            \pcreturn \prn{ \bm{u}^{j \leq i}, \bm{st}^{j > i}, \bm{p}, \bm{\sigma}_{\bm{q}} }
        }
        \caption{Procedures $D_i$ for $i \in \set{0, \cdots, p(n)}$.}
        \label{fig:procedures collective one auxprop}
    \end{figure}

    Let us focus on the difference between $D_{i-1}$ and $D_i$.    
    In the first one, $ST^i$ is used, while in the second one, $U^i$ is used instead.
    Furthermore, both have the additional information
    \[
        \EXTRA_i \coloneqq \prn{ \bm{U}^{j < i}, \bm{ST}^{j > i}, \bm{P}^{j \neq i}, \PUF \prn{ \bm{q} } }.
    \]    
    
    Notice that $\EXTRA_i$ is a function of $\PUF \prn{\bm{q}^i}$ and some $X \indep \PUF \prn{s^i}$.
    From that fact and from unpredictability, we have 
    \begin{align*}
        &\condavgminentropy{\PUF \prn{s^i}}{\EXTRA_i} \\
        =& \condavgminentropy{\PUF \prn{s^i}}{\PUF \prn{\bm{q}^i}} \\
        \geq& \entropybound(n).
    \end{align*}
    
    Therefore, from almost-uniformity with $\EXTRA_i$, we know that $\SD \prn{ D_{i-1}, D_i } = \abs{ \varepsilon(n) }$ for each $i \in \IND_n$.
    Furthermore, since $p(n)$ is a polynomial,
    \begin{align*}
        &\SD \prn{ D_0, D_{p(n)} } \\
        \leq& \sum_{i = 1}^{p(n)} \SD \prn{ D_{i-1}, D_i } \\
        =& p(n) \abs{ \varepsilon(n) } \\
        =& \abs{ \varepsilon(n) }.
    \end{align*}
    
    Notice that $D_0$ corresponds to the information \D{} gets in the interaction if $B = 0$;
    the same can be said about $D_{p(n)}$ if $B = 1$.
    Therefore, just like in Lemma \ref{single auxprop}, we get
    \[
        \condprob{\D = B}{\bm{Q} = \bm{q}, \bm{S} = \bm{s}} = \frac{1}{2} + \varepsilon(n).
    \]

    Thus,
    \begin{align*}
        &\prob{\D = B, \varphi(\bm{Q}, \bm{S})} \\
        =& \sum_{\substack{(\bm{q}, \bm{s}) : \\ \varphi(\bm{q}, \bm{s})}} \condprob{\D = B}{\bm{Q} = \bm{q}, \bm{S} = \bm{s}} \prob{\bm{Q} = \bm{q}, \bm{S} = \bm{s}} \\
        =& \sum_{\substack{(\bm{q}, \bm{s}) : \\ \varphi(\bm{q}, \bm{s})}} \prn{ \frac{1}{2} + \varepsilon(n) } \prob{\bm{Q} = \bm{q}, \bm{S} = \bm{s}} \\
        =& \prn{ \frac{1}{2} + \varepsilon(n) } \prob{\varphi(\bm{Q}, \bm{S})},
    \end{align*}
    that is,
    \[
        \condprob{\D = B}{\varphi(\bm{Q}, \bm{S})} = \frac{1}{2} + \varepsilon(n).
    \]
\end{proof}

\begin{lemma}
    \label{collective one indistinguishability}
    If \puffamily{} satisfies the preimage entropy property, then it satisfies the collective 
    indistinguishability property.
\end{lemma}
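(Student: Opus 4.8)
The plan is to mirror the proof of Proposition \ref{single indistinguishability}, replacing the single good event $\dham(\bm{Q}, S) \geq \dmin(n)$ by the collective predicate $\varphi(\bm{Q}, \bm{S})$ of Lemma \ref{collective one auxprop}. I would begin from
\[
\prob{\D = B} = \prob{\D = B, \varphi(\bm{Q}, \bm{S})} + \prob{\D = B, \neg\varphi(\bm{Q}, \bm{S})}
\]
and treat the two terms separately. The first term is already under control: Lemma \ref{collective one auxprop} yields $\condprob{\D = B}{\varphi(\bm{Q}, \bm{S})} = \frac{1}{2} + \varepsilon(n)$, hence $\prob{\D = B, \varphi(\bm{Q}, \bm{S})} = \prn{\frac{1}{2} + \varepsilon(n)} \prob{\varphi(\bm{Q}, \bm{S})}$. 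So the only new ingredient needed is a collective analogue of the CQ bound of Proposition \ref{wellspreaddomain}, namely $\prob{\neg\varphi(\bm{Q}, \bm{S})} = \varepsilon(n)$.

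To establish this, recall that $\neg\varphi$ asserts that there is some index $i \in \IND_n$ with $\dham(\bm{Q}^i, s^i) < \dmin(n)$, where $\bm{Q}^i = \bm{Q} \Vert \bm{S}^{j \neq i}$; since $\abs{\IND_n}$ is polynomial, a union bound reduces the task to bounding $\prob{\dham(\bm{Q}^i, s^i) < \dmin(n)}$ for a fixed $i$. The natural move would be to replay the entropy computation of Proposition \ref{wellspreaddomain}, writing each query as $f(\PUF(s^i), X)$ with $X \indep s^i$ and passing to $\condavgminentropy{s^i}{\PUF(s^i)}$. This is exactly where the difficulty lies, and is the main obstacle: in the collective interaction \D{} also receives $\bm{\bs}$ and $\bm{p}$ for \emph{every} index, and the responses $\PUF(s^j)$ with $j \neq i$ may be correlated with $\PUF(s^i)$, so $(s^i, \PUF(s^i))$ is no longer independent of the side information and the step relying on Lemma \ref{lemma independent} breaks down.

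To sidestep this I would reduce to the single-string CQ property rather than argue directly. From a distinguisher \D{} I build an adversary \Adv{} for the interaction of Fig. \ref{fig:wellspreaddomain}: on input $(\sigma, \PUF)$ with $\sigma = \PUF(s)$, \Adv{} draws a uniform target index $i^* \in \IND_n$, plants $s$ at position $i^*$ by computing $(st^{i^*}, p^{i^*}) \gets \Gen(\sigma)$ and filling the $i^*$-th entry of $\bm{\bs}$ according to a self-sampled bit $b$, samples the remaining challenges $s^j$ (for $j \neq i^*$) on its own and derives their data by querying its oracle, and finally runs \D{}, forwarding \D{}'s oracle queries to its own oracle. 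Since the planted challenge and all self-sampled challenges are uniform and independent, \D{}'s view --- and therefore its query distribution --- is identical to the real collective interaction with $s^{i^*} = s$; moreover \Adv{} is PPT. Crucially, \Adv{}'s set of oracle queries contains both \D{}'s queries $\bm{Q}$ and the setup challenges $\set{s^j : j \neq i^*}$, i.e. exactly $\bm{Q}^{i^*}$, so \Adv{} wins precisely when $\dham(\bm{Q}^{i^*}, s^{i^*}) < \dmin(n)$. Averaging over $i^*$ gives $\prob{\Adv \text{ wins}} \geq \frac{1}{\abs{\IND_n}} \prob{\neg\varphi(\bm{Q}, \bm{S})}$, and since Proposition \ref{wellspreaddomain} forces $\prob{\Adv \text{ wins}} = \varepsilon(n)$, we conclude $\prob{\neg\varphi(\bm{Q}, \bm{S})} = \abs{\IND_n} \cdot \varepsilon(n) = \varepsilon(n)$.

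Finally, I would assemble the bound exactly as in Proposition \ref{single indistinguishability}. Using $\prob{\varphi(\bm{Q}, \bm{S})} = 1 - \varepsilon(n)$, the two terms combine to
\[
\prob{\D = B} = \varepsilon(n) + \prn{\tfrac{1}{2} + \varepsilon(n)} (1 - \varepsilon(n)) = \tfrac{1}{2} + \varepsilon(n),
\]
which is the collective indistinguishability property. Apart from the reduction in the previous step, every piece is a routine lifting of the single-string argument.
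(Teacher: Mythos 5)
Your proof is correct, and its skeleton coincides with the paper's: the same decomposition of $\prob{\D = B}$ over the event $\varphi(\bm{Q}, \bm{S})$, the same use of Lemma \ref{collective one auxprop} for the good branch, and the same final arithmetic. Where you genuinely diverge is in justifying $\prob{\neg \varphi(\bm{Q}, \bm{S})} = \varepsilon(n)$. The paper simply cites Proposition \ref{wellspreaddomain} for this step, even though that proposition is stated and proved for a single planted challenge, and its entropy argument does not transfer verbatim to the collective setting for exactly the reason you identify: the side information now contains $\PUF(s^j)$ for $j \neq i$, which is correlated with $\PUF(s^i)$ through the PUF's index, so the independence step via Lemma \ref{lemma independent} no longer applies directly. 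Your planted-challenge reduction --- embedding the CQ challenge at a uniformly random position $i^*$, self-sampling the remaining challenges so that they become part of the reduction's own query list and hence of $\bm{Q}^{i^*}$, and paying a polynomial $\abs{\IND_n}$ factor in a union bound --- is a clean way to close this gap, and it correctly accounts for the fact that $\neg\varphi$ also covers collisions among the setup challenges themselves. In short, your argument buys a rigorous justification of a step the paper leaves implicit, at the cost of one extra (routine) reduction; the paper's version is shorter but leans on an unproven collective analogue of the CQ property. The only cosmetic imprecisions are that all entries of $\bm{\bs}$, not just the $i^*$-th, should be populated according to the sampled bit $b$, and that the final equality $\prob{\neg\varphi(\bm{Q},\bm{S})} = \abs{\IND_n}\,\varepsilon(n)$ should be an inequality; neither affects the conclusion.
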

\begin{proof}
    Notice that
    \[
        \prob{\D = B} = \prob{\D = B, \neg \varphi(\bm{S}, \bm{Q})} + \prob{\D = B, \varphi(\bm{S}, \bm{Q})},
    \]
    where $\varphi(\bm{S}, \bm{Q})$ is the event defined in Lemma \ref{collective one auxprop}.

    From Lemma \ref{wellspreaddomain}, we have
    \[
        \prob{\neg \varphi(\bm{S}, \bm{Q})} = \varepsilon(n).
    \]

    Furthermore, from Lemma \ref{collective one auxprop}, we have
    \[
        \prob{\D = B, \varphi(\bm{S}, \bm{Q})} = \prn{ \frac{1}{2} + \varepsilon(n) } \prob{\varphi(\bm{S}, \bm{Q})}.
    \]

    Thus, we can conclude that
    \begin{equation*}
        \begin{split}
            \prob{\D = B}
            &= \prob{\D = B, \neg \varphi(\bm{S}, \bm{Q})} + \prob{\D = B, \varphi(\bm{S}, \bm{Q})} \\
            &= \varepsilon(n) + \prn{\frac{1}{2} + \varepsilon(n)} (1 - \varepsilon(n)) \\
            &= \frac{1}{2} + \varepsilon(n).
        \end{split}
    \end{equation*}
\end{proof}
\subsection{CRP guessing property}
\label{appendix:crp}

We can formalize the CRP guessing property with an interaction between an honest challenger \C{} and an adversary \Adv{}, as depicted in Fig. \ref{fig:guesscrp},
where \Adv{} is successful if $st = \Rep(\PUF(s), p)$ and throughout its execution it does not query \PUF{} on any challenge $q$
such that $\dham(q, s) < \dmin(n)$.
We say that \puffamily{} satisfies the CRP guessing property if all adversaries \Adv{} in this interaction succeed with negligible probability.
\begin{figure}[ht]
    \procb{}{
        \C \< \< \Adv \\
        \PUF \gets \puffamily \< \< \\ [-1.5ex]
        \< \sendmessageright*{\PUF} \< \\
        \< \< \text{output $s, st, p$}
    }
    \caption{CRP guessing property interaction.}
    \label{fig:guesscrp}
\end{figure}

Although it seems like it can be reduced to the indistinguishability property, we were not able to prove this result.
It seems that such a reduction would depend on specific details of the FE, which we want to avoid.

Alternatively, one might consider reducing this requirement to a more fundamental property involving only the PUF family, as depicted in Fig.
\ref{fig:alternative guesscrp}, where \Adv{} is successful if $\sigma$ is a possible response for $\PUF(s)$.
However, achieving this reduction is challenging, as it would also rely on the specific characteristics of the FE
and PUF family in use.
Indeed, even if we could identify some $w$ such that $\Gen(w) = (st, p)$, it would not necessarily enable us to recover an actual possible response
$\sigma$ for $\PUF(s)$.

\begin{figure}[ht]
    \procb{}{
        \C \< \< \Adv \\
        \PUF \gets \puffamily \< \< \\ [-1.5ex]
        \< \sendmessageright*{\PUF} \< \\
        \< \< \text{output $s, \sigma$}
    }
    \caption{Alternative CRP guessing property interaction.}
    \label{fig:alternative guesscrp}
\end{figure}

\subsection{Test query property}
\label{appendix:testquery}
The test query property can be formalized with an interaction between an honest challenger \C{} and an adversary \Adv{}, as depicted in Fig. \ref{fig:testquery},
where \Adv{} is successful if it sends $\PUF^* \neq \PUF$ such that $\Rep(\PUF^*(s), p) = st$.
We say that \puffamily{} satisfies the test query property if all adversaries \Adv{} in this interaction succeed with negligible probability.
This is proved in Proposition \ref{testquery}.
\begin{figure}[ht]
    \procb{}{
        \C \< \< \Adv \\
        \PUF \gets \puffamily \< \< \\
        s \sample \bin^n \< \< \\
        st, p \gets \Gen(\PUF(s)) \< \< \\[-1.5ex]
        \< \sendmessageright*{\PUF} \< \\[-1.5ex]
        \< \sendmessageleft*{\PUF^*} \< \\
        \Rep \prn{ \PUF^*(s), p } = st? \< \<
    }
    \caption{Test query property interaction.}
    \label{fig:testquery}
\end{figure}
\begin{proposition}
    \label{testquery}
    If \puffamily{} satisfies the preimage entropy property and the CRP guessing property,
    then it satisfies the test query property.
\end{proposition}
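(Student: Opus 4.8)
The plan is to reduce the test query property to the CRP guessing property, using the preimage entropy assumption (through its well-spread consequence, established in the proof of Proposition \ref{wellspreaddomain}) to take care of the ``no close query'' side condition of the CRP game. Given a test query adversary \Adv{}, I would build a CRP guessing adversary $\Adv'$ that, upon receiving \PUF{}, runs \Adv{} and forwards its queries to \PUF{}, obtaining a returned token $\PUF^*$; then $\Adv'$ samples $s \sample \bin^n$ on its own, queries the returned token once to get $\sigma^* \gets \PUF^*(s)$, computes $(st, p) \gets \Gen(\sigma^*)$, and outputs $(s, st, p)$. The crucial structural point, and the reason the model restriction matters, is that \Adv{} wins only when $\PUF^* \neq \PUF$: in the stateless, no-outgoing-communication model the returned token is then a genuinely distinct machine with no oracle access to the handed-over \PUF{}, so the single query $\PUF^*(s)$ made by $\Adv'$ is \emph{not} a query to \PUF{}. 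Consequently the only queries $\Adv'$ ever makes to \PUF{} are exactly \Adv{}'s queries, which I denote $\bm{q}$.

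First I would argue that $\Adv'$ respects the side condition of the CRP game. Since \Adv{} receives only \PUF{} and never sees $s$, its view --- and hence $\bm q$ --- is independent of $s$. As $\bm q$ has polynomial size and $s$ is uniform, the well-spread argument from Proposition \ref{wellspreaddomain} (which rests on $\abs{B_n^{\dmin(n)}} 2^{-n} \leq \abs{B_n^{\dmin(n)}} 2^{-\condavgminentropy{S}{\PUF(S)}} = \varepsilon(n)$) shows that $\dham(\bm q, s) \geq \dmin(n)$ except with negligible probability. On this overwhelming-probability event $\Adv'$ never queries \PUF{} close to $s$, so it is an admissible CRP adversary.

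Next I would establish validity of the extracted tuple through the correctness of the fuzzy extractor. Because the parameters match ($\fet = \dnoise$), whenever $\dham(\sigma^*, \PUF(s)) \leq \dnoise(n)$ we get $\Rep(\PUF(s), p) = st$, i.e. $(s, st, p)$ is a valid CRP and $\Adv'$ wins. Hence the CRP guessing property forces $\prob{\dham(\PUF^*(s), \PUF(s)) \leq \dnoise(n) \ \wedge\ \PUF^* \neq \PUF}$ to be negligible. It then remains to relate the event that \Adv{} wins --- that $\PUF^*$ passes, i.e. $\Rep(\PUF^*(s), p) = st$ --- to this ``closeness'' event. Splitting on the distance, the ``passes and close'' part is covered by the bound just obtained, while for the ``passes yet far'' part I would invoke the almost-uniformity/security of the fuzzy extractor: conditioned on $p$ the secret $st$ is statistically close to uniform and essentially independent of a response $\sigma^*$ lying outside the decoding radius, so $\Rep(\sigma^*, p) = st$ holds only with probability about $2^{-\fel(n)}$, which is negligible.

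I expect the main obstacle to be precisely this ``passes yet far'' case together with the distance bookkeeping around it. The fuzzy extractor's correctness only controls $\Rep$ on inputs within $\dnoise$ of a genuine response, and the inherent noise of \PUF{} means the challenger's reference response, the re-evaluation in the CRP check, and $\sigma^*$ must all be compared up to several noise radii; making the ``a far response reproduces $st$ only negligibly often'' step rigorous is where the FE security property --- and the independence of $\PUF^*$ from $s$, which in turn rests on the stateless and no-outgoing-communication restriction --- does the real work. This is also exactly the scenario, highlighted by the authors in connection with \cite{BoundedStateOT}, in which an encapsulating malicious PUF differing from the original on a single challenge would defeat the property, so the proof must visibly exploit that $\PUF^*$ cannot depend on the hidden $s$.
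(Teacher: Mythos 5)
Your proposal is correct and is essentially the paper's proof: the same reduction to the CRP guessing game (hand \PUF{} to the test-query adversary $\Adv^0$, sample $s$ yourself, query only the returned $\PUF^* \neq \PUF$ --- which in this model contains only freshly created honest PUFs and hence cannot route that query back to \PUF{} --- and output $\prn{s, \Gen\prn{\PUF^*(s)}}$), with the preimage entropy property, via Proposition \ref{wellspreaddomain}, disposing of the no-close-query side condition exactly as the paper does. The only divergence is your ``passes yet far'' case: the paper simply asserts that $\Adv^0$'s success makes the output a valid CRP, whereas you correctly note that $\Rep\prn{\PUF^*(s), p_0} = st_0$ does not formally imply $\Rep\prn{\PUF(s), p} = st$ for $(st, p) \gets \Gen\prn{\PUF^*(s)}$ and close that gap with the fuzzy extractor's security property --- a tightening of the paper's argument rather than a different route.
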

\begin{proof}
    Suppose there exists an adversary $\Adv^0$ that is successful in the test query interaction with non-negligible probability.
    In Fig. \ref{fig:reductioncrp}, we construct an adversary \Adv{} that contradicts the CRP guessing property.
    \begin{figure}[ht]
        \begin{center}
            \noindent\fbox{%
                \parbox{0.95\linewidth}{
                    \begin{center}
                        \textbf{Adversary} \Adv
                    \end{center}
                    
                    \begin{itemize}
                        \item When receiving \PUF{} from \C{}, simulate the interaction depicted in Fig. \ref{fig:guesscrp} between
                        a challenger $\C^0$ and the adversary $\Adv^0$ using \PUF{};
                        \item When $\Adv^0$ sends $\PUF^*$, run $(st, p) \gets \Gen \prn{ \PUF^*(s) }$, where $s$ is the challenge
                        that $\C^0$ created;
                        \item Output $(s, st, p)$.
                    \end{itemize}
                }
            }%
        \end{center}
        \caption{Reduction to the CRP guessing property.}
        \label{fig:reductioncrp}
    \end{figure}
    
    Let us consider what happens when $\Adv^0$ is successful and it does not query \PUF{} close to $s$.
    Notice that \Adv{}’s queries are the same as those of $\Adv^0$ to \PUF{}, along with the additional query $\PUF^*(s)$.
    Since $\PUF^* \neq \PUF$, we know that \Adv{} also does not query \PUF{} close to $s$.
    Indeed, this holds because even if $\PUF^*$ is malicious, the honest PUFs embedded within it are 
    freshly created using the honest \pufsample{} algorithm, and therefore distinct from \PUF{}.
    Moreover, since $\Adv^0$ is successful, \Adv{} outputs a valid CRP.
    Therefore, under these conditions, \A{} is always successful.

    Now, notice that $\Adv^0$ is successful with non-negligible probability.
    Furthermore, from Proposition \ref{wellspreaddomain}, we know that it only queries \PUF{} close to $s$ with negligible probability.
    Thus, we can conclude that the probability of both these conditions holding simultaneously is non-negligible, and so
    \Adv{} is successful with non-negligible probability.
\end{proof}
\subsection{Improved proofs for the previous protocols}
\label{appendix:newproofs}

The \CPUF{} protocol from \cite{MalPUFs} is depicted in Fig. \ref{fig:single nocomm cpuf}.
We consider a PUF family \puffamily{} and a fuzzy extractor family $\fe = (\Gen, \Rep)$ with matching 
parameters such that $\fel_\fe(n) = kl$, with $l(n) = 3n$.
\begin{figure*}[ht]
    \procb{\CPUF}{
        \< \S(x) \< \< \R \\[1ex]
        \textbf{Commit:   } \< \PUF \gets \puffamily{} \< \< \\
        \< s \sample \bin^n \< \< \\
        \< (st, p) \gets \Gen (\PUF(s)) \< \< \\[-1.5ex]
        \< \< \sendmessageright*{\PUF, p} \< \\
        \< \< \< r \sample \bin^{kl} \\[-1.5ex]
        \< \< \sendmessageleft*{r} \< \\
        \< c \coloneq st \oplus \prn{ x^l \land r } \< \< \\[-1.5ex]
        \< \< \sendmessageright*{c} \< \\
        \textbf{Decommit:   } \< \< \sendmessageright*{(s, x)} \< \\
        \< \< \< st \gets \Rep (\PUF(s), p) \\
        \< \< \< c = st \oplus \prn{ x^l \land r } ?\\
        \< \< \< \text{output } x
    }
    \caption{The ideal commitment scheme \CPUF{} in the \FMPUF{}-hybrid model.}
    \label{fig:single nocomm cpuf}
\end{figure*}

\begin{theorem}
    \label{single nocomm cpuf}
    \CPUF{} is an ideal commitment scheme in the \FMPUF{}-hybrid model,
    with unbounded \kstate{}.
\end{theorem}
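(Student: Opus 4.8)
The plan is to establish the three defining properties of an ideal commitment for the protocol of Fig.~\ref{fig:single nocomm cpuf}: completeness, hiding against PPT receivers, and statistical binding. Throughout I treat the committed object as a fixed — possibly stateful, unbounded-state — PUF machine together with the helper string $p$, both handed to \R{} \emph{before} any receiver message is sent. \textbf{Completeness} is immediate: when both parties are honest, \R{} recomputes $st \gets \Rep(\PUF(s), p)$ from a fresh, noisy evaluation $\sigma'$ of $\PUF(s)$; since $\dham(\sigma, \sigma') < \dnoise(n) = \fet(n)$, the response consistency property yields $\Rep(\sigma', p) = st$, so the check $c = st \oplus \prn{x^l \land r}$ passes with probability $1$.

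For \textbf{hiding} I would reduce to the indistinguishability property (Proposition~\ref{single indistinguishability}). Given any PPT malicious receiver \Radv{} for the game of Fig.~\ref{fig:single hiding}, build a distinguisher \D{} for the interaction of Fig.~\ref{fig:single indist}: \D{} receives $(\bs, p, \PUF)$, samples $B \sample \bin$, forwards $(\PUF, p)$ to \Radv{}, receives the (maliciously chosen) string $r$, returns $c = \bs \oplus \prn{(x^B)^l \land r}$, and finally outputs the guess $0$ iff \Radv{} correctly recovers $B$. When $\bs = st$, \D{} perfectly emulates $\texttt{Commit}^{\CPUF}\prn{x^B}$, so \Radv{} recovers $B$ with its genuine advantage; when $\bs$ is uniform, $c$ is uniform and independent of $B$, so \Radv{} is correct with probability exactly $\tfrac12$. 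Hence \D{}'s advantage in the indistinguishability game is half the hiding advantage of \Radv{}, and Proposition~\ref{single indistinguishability} forces the latter to be negligible. This is exactly the gap flagged in the text — \Radv{} may query \PUF{} after receiving data that depends on $\PUF(s)$ — and it is handled for free, since the indistinguishability property (and, beneath it, the CQ property of Proposition~\ref{wellspreaddomain}) already grants the distinguisher oracle access to \PUF{}.

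\textbf{Binding} is the heart of the argument, and I would prove it in the style of \cite{Naor}, adapted to noisy, stateful PUFs. Suppose $\Sadv$ decommits successfully to $x \neq x'$ via challenges $s, s'$. Setting $st = \Rep(\PUF^*(s), p)$ and $st' = \Rep(\PUF^*(s'), p)$, both openings equal $c$, whence $st \oplus st' = \prn{x^l \oplus (x')^l} \land r$; on any block $i$ where $x_i \neq x'_i$ this forces the $i$-th length-$l$ blocks to satisfy $st|_i \oplus st'|_i = r_i$, with $r_i$ uniform over $\bin^l$ and $l = 3n$. The observation that makes this go through even for unbounded state is that \R{} queries $\PUF^*$ only at the decommitment challenges, i.e. only \emph{after} $r$ is fixed, whereas the PUF machine and its handover state are fixed \emph{before} $r$ is sent. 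Thus, for each ordered pair $(s, s')$ the quantity $st|_i \oplus st'|_i$ is a random variable independent of $r_i$, so $\condavgminentropy{r_i}{st|_i \oplus st'|_i} = l$ (Lemma~\ref{lemma independent}) and $\prob{st|_i \oplus st'|_i = r_i} \leq 2^{-l}$ by Lemma~\ref{lemma equality}. Union-bounding over all $2^{2n}$ challenge pairs and the $k$ blocks gives total success probability at most $k\,2^{2n}2^{-3n} = k\,2^{-n}$, which is negligible; this is precisely why the block length $l = 3n > 2n$ is chosen.

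The step I expect to be the main obstacle is making this binding bound rigorous against a malicious PUF with \emph{unbounded} state. Although \Sadv{} picks $s, s'$, the order of \R{}'s evaluations, and $c$ adaptively after seeing $r$, one must argue that every value of $st|_i \oplus st'|_i$ that \R{} could compute is still independent of $r_i$ — because the token is committed before $r$ and only a constant number of evaluations occur afterwards — and that, once the PUF's internal randomness and handover state are fixed, the set of achievable values has size at most $2^{2n}$ so that the per-pair estimate of Lemma~\ref{lemma equality} can be promoted (via Lemma~\ref{lemma neighborhood}) to a clean negligibility bound. This independence-and-counting argument is exactly what the proof in \cite{MalPUFs} left incomplete, and it is where I would concentrate the formal work.
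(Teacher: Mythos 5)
Your proposal is correct and follows essentially the same route as the paper: completeness from response consistency, hiding by turning \Radv{} into a distinguisher for the indistinguishability property (with the uniform-$\bs$ branch giving success exactly $\tfrac12$, so the distinguisher's advantage is half of \Radv{}'s), and a Naor-style binding argument that exploits the fact that the PUF token and its handover state are fixed before $r$ is sent. The only cosmetic difference is in the binding bookkeeping: you pay the $2n$ bits for the adaptive choice of $(s,s')$ via a union bound over all challenge pairs, whereas the paper pays the same price through the average min-entropy chain rule (Lemmas \ref{lemma function}, \ref{lemma independent}, \ref{lemma chain rule}, \ref{lemma minentropy} and \ref{lemma equality}), arriving at the same $2^{-n+O(\log k)}$ bound.
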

\begin{proof}
    Completeness clearly follows from the response consistency property \ref{subsec:additionalproperties},
    and thus its proof is omitted.
    We will do the same for the remaining proofs.
    
    \textbf{Computationally hiding:}
    
    Suppose, by contradiction, that this is not the case.
    Then, there exist different strings $x^0$ and $x^1$ and a malicious receiver \Radv{} such that, in the
    interaction depicted in Fig. \ref{fig:single hiding},
    \[
        \prob{\Radv \prn{ \commit^\CPUF \prn{x^{B^0}} } = B^0} - \frac{1}{2}
    \]
    is not negligible.
    \footnote{
        We denote the random variable as $B^0$ in this interaction to avoid confusion with the analogous $B$
        in the subsequent interaction we define.
    }

    Consider a modified protocol \CUnif{},
    \footnote{
        Given that $u$ is chosen uniformly, the decommitment phase of \CUnif{} is actually not defined.
        This is not a problem, because we will only run its commitment phase.
    }
    where \S{} uses $u \sample \bin^{kl}$ instead of $st$.
    That is, in that protocol, \S{} does $c \coloneq u \oplus \prn{ x^l \land r }$.
    Then,    
    \[
        \prob{\Radv \prn{ \commit^\CUnif \prn{x^{B^0}} } = B^0} = \frac{1}{2},
    \]
    because the distribution of the interaction is independent from $B^0$.
    
    Now, consider the interaction depicted in Fig. \ref{fig:single indist} between a challenger \C{} and a distinguisher \D{},
    corresponding to the indistinguishability property of PUFs.        
    In Fig. \ref{fig:distinguisher single nocomm cpuf hiding}, we present a distinguisher \D{} that breaks this property,
    thereby contradicting Lemma \ref{single indistinguishability}.
    Here, \CBS{} denotes a protocol just like \CPUF{}, but where
    \S{} does not create its \PUF{} and does not need to query \PUF{} to get $st$ and $p$.
    Instead, \S{} receives $\bs, p, \PUF$ from \D{} (who receives them from \C{}) and uses $\bs$ instead of $st$ in $c$.
    \begin{figure}[ht]
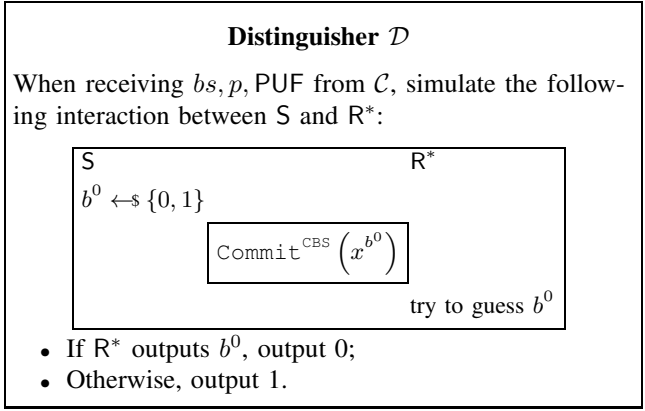

        \begin{center}
            \noindent\fbox{%
                \parbox{0.95\linewidth}{
                    \begin{center}
                        \textbf{Distinguisher} \D
                    \end{center}
            
                    When receiving $\bs, p, \PUF$ from \C{}, simulate the following interaction between \S{} and \Radv{}:
                    \procb{}{
                        \S \< \< \Radv \\
                        b^0 \sample \bin \< \< \\
                        \< \pcbox{\commit^\CBS \prn{x^{b^0}} } \< \\
                        \< \< \text{try to guess } b^0
                    }
                    \begin{itemize}
                        \item If \Radv{} outputs $b^0$, output 0;
                        \item Otherwise, output 1.
                    \end{itemize}
                }
            }%
        \end{center}
        \caption{Distinguisher that breaks the indistinguishability property.}
        \label{fig:distinguisher single nocomm cpuf hiding}
    \end{figure}
    
    Notice that if $b = 0$, then $\bs = st$, making \CBS{} identical to \CPUF{}.
    Likewise, if $b = 1$, then $\bs = u$, and so \CBS{} is identical to \CUnif{}.
    Thus,
    \begin{align*}
        &2\prob{\D = B} \\
        =& \condprob{\D = 0}{B = 0} + \condprob{\D = 1}{B = 1} \\
        =& \prob{\Radv \prn{\commit^\CPUF \prn{x^{B^0}} } = B^0} + \\
        &\prob{\Radv \prn{\commit^\CUnif \prn{x^{B^0}} } \neq B^0} \\
        =& \prob{\Radv \prn{\commit^\CPUF \prn{x^{B^0}} } = B^0} +
        1 - \frac{1}{2} \\
        =& \prob{\Radv \prn{\commit^\CPUF \prn{x^{B^0}} } = B^0} +
        \frac{1}{2},
    \end{align*}
    and so
    \begin{align*}
        &\prob{\D = B} - \frac{1}{2} \\
        =& \frac{1}{2} \prn{ \prob{\Radv \prn{\commit^\CPUF \prn{x^{B^0}} } = B^0} + \frac{1}{2} } - \frac{1}{2} \\
        =& \frac{1}{2} \prn{ \prob{\Radv \prn{\commit^\CPUF \prn{x^{B^0}} } = B^0} - \frac{1}{2} } \\
    \end{align*}
    is non-negligible, which is a contradiction.

    \textbf{Statistically Binding:}
    
    Suppose \Sadv{} has some malicious behavior and makes a certain commitment.
    We have to show that the probability of \Sadv{} being able to open that commitment to different strings $X$ and $Y$ successfully is negligible.
    
    When decommitting to $X$, \Sadv{} sends $S$, and \R{} runs $\mathsf{Rep}(\PUF(S), P)$, with this final expression being a
    random variable that depends on:
    \footnote{
        The use of capital letters is deliberate.
        Here, instead of writing the values as $x, s, p$ like in the protocol,
        we write $X, S, P$ to emphasize the fact that they are random variables.
        We adopt this convention throughout our work.
    }
    \begin{enumerate}
        \item the random variable $S$;
        \item \PUF{}'s state \STATE{} in that moment and its internal randomness, which is independent from $R$;
        \item the random variable $P$, which is independent from $R$;
        \item the randomness of the fuzzy extractor, which is independent from $R$.
    \end{enumerate}

    Therefore, we can write $\mathsf{Rep}(\PUF(S), P) = f \prn{S, \STATE, T}$, where $T$ represents some internal randomness that is
    independent from the remaining variables.        
    Thus, if the decommitment is successful,
    \[
        C = f \prn{S, \STATE, T} \oplus \prn{ X^l \land R }.
    \]
    
    Likewise, for the decommitment of $Y$, we can write $\mathsf{Rep} \prn{\PUF \prn{S'}, P} = g \prn{S', \STATE', T'}$, where $T'$
    is also independent from the remaining variables.
    If the decommitment is successful,
    \[
        C = g \prn{S', \STATE', T'} \oplus \prn{ Y^l \land R }.
    \]
    
    Suppose $X$ and $Y$ differ on an index $J$.
    Given $j \in [k]$, let ${\bm{I}}_j$ be the set of the positions of the commitment $C$ that are used for committing the $j$-th bit
    of the string.
    That is, ${\bm{I}_j} = \set{ j, j+k, \cdots, j + (l-1)k }$.
    Then,        
    \begin{align*}
        C_{\bm{I}_J} &= 
        f \prn{S, \STATE, T}_{\bm{I}_J} \oplus \prn{ X_J^l \land R_{\bm{I}_J} } \\
        &= g \prn{S', \STATE', T'}_{\bm{I}_J} \oplus \prn{ Y_J^l \land R_{\bm{I}_J} },
    \end{align*}        
    which implies        
    \[
        R_{\bm{I}_J} = f \prn{S, \STATE, T}_{\bm{I}_J} \oplus g \prn{S', \STATE', T'}_{\bm{I}_J},
    \]
    since $X_J \neq Y_J$.

    Hence, if we show that
    \[
        \prob{R_{\bm{I}_J} = f \prn{S, \STATE, T}_{\bm{I}_J} \oplus g \prn{S', \STATE', T'}_{\bm{I}_J}}
    \]         
    is negligible, we are done.

    Let $\STATE_0$ be \PUF{}'s state when it is sent to \R{}.
    Notice that $\STATE = \STATE' = \STATE_0$, given that \PUF{} is not queried up to the moment of decommitment.
    Furthermore, $\STATE_0 \indep R$, since $R$ is only sent after \PUF{}.
    Therefore,
    \begin{align*}
        &\condavgminentropy{R_{\bm{I}_J}}{f \prn{S, \STATE, T}_{\bm{I}_J} \oplus g \prn{S', \STATE', T'}_{\bm{I}_J}} \\
        =& \ \condavgminentropy{R_{\bm{I}_J}}{f \prn{S, \STATE_0, T}_{\bm{I}_J} \oplus g \prn{S', \STATE_0, T'}_{\bm{I}_J}} \\
        \geq& \ \condavgminentropy{R_{\bm{I}_J}}{\STATE_0, J, S, S', T, T'} \tag{Lemma \ref{lemma function}} \\
        =& \ \condavgminentropy{R_{\bm{I}_J}}{\STATE_0, J, S, S'} \tag{$\prn{T, T'} \indep \prn{R_{\bm{I}_J}, J, \STATE_0, S, S'}$ and Lemma \ref{lemma independent}} \\
        \geq& \ \condavgminentropy{R_{\bm{I}_J}}{\STATE_0} - \maxentropy{J, S, S', J} \tag{Lemma \ref{lemma chain rule}}.
    \end{align*}

    Now, although $\STATE_0 \indep R$, this does not necessarily imply $\STATE_0 \indep R_{\bm{I}_J}$.
    Indeed, $R_{\bm{I}_J}$ depends on $J$, which in turn can depend on $\STATE_0$.
    However, we still know that $\STATE_0 \indep R_{\bm{I}_j}$ for each $j \in [k]$.
    Therefore, by Lemma \ref{lemma independent}, for each $j \in [k]$,
    \[
        \condavgminentropy{R_{\bm{I}_j}}{\STATE_0} = \minentropy{R_{\bm{I}_j}} = l(n),
    \]
    which implies
    \begin{align*}
        &\condavgminentropy{R_{\bm{I}_J}}{\STATE_0} - \maxentropy{J, S, S', J} \\
        \geq& \ l(n) - \maxentropy{J} - \maxentropy{J, S, S', J} \tag{Lemma \ref{lemma minentropy}} \\
        =& \ 3n - 2\log(k) - 2n \\
        =& \ n - 2\log(k).
    \end{align*}

    Thus, by Lemma \ref{lemma equality},
    \begin{align*}
        &\prob{R_{\bm{I}} = f \prn{S, \STATE, T}_{\bm{I}} \oplus g \prn{S', \STATE', T'}_{\bm{I}}} \\
        \leq& 2^{-n + 2\log(k)},
    \end{align*}
    which is negligible.
\end{proof}

In what follows, we will need the definition below, adapted from \cite{UCComm}:
\begin{definition}
    \label{def:ecc}
    An \textbf{$(N, L, D)$-error-correcting code (ECC)} is a tuple of PPT algorithms $(\Enc, \Dec)$,
    where $\Enc : \bin^N \to \bin^L$ and $\Dec: \bin^L \to \bin^N$, such that:
    \begin{itemize}
        \item \textbf{Minimum distance:}
        For all messages $m_1, m_2 \in \bin^N$, the corresponding codewords are such that $\dham (\Enc(m_1), \Enc(m_2)) \geq D$.
        \footnote{
            Consequently, $D$ is called the \textit{minimum distance} of the code.
        }
        \item \textbf{Correct decoding:}
        Let $m \in \bin^N$ and $c = \Enc(m)$.
        Then, for all $c' \in \bin^L$,
        \[
            \dham \prn{ c, c' } \leq \floor{ \frac{D-1}{2} } \implies \Dec \prn{c'} = m.
        \]
    \end{itemize}
\end{definition}

\begin{theorem}
    \label{single nocomm extpuf}
    \ExtPUF{} is an ideal extractable commitment scheme in the \FMPUF{}-hybrid model, with $\kstate = 0$.
\end{theorem}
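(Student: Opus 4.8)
The plan is to establish the four properties of an ideal extractable commitment separately, reusing the tools already developed for \CPUF{} (Theorem \ref{single nocomm cpuf}) and the revised PUF properties wherever possible. Completeness is immediate from response consistency: an honest \S{} returns \PUFE{} unaltered so the test query \TQ{} verifies, and \R{} recomputes $st$ and $st_\E$ from consistent PUF responses, so both checks $c = st \oplus \prn{x^n \land r}$ and $st_\E = \Rep\prn{\PUFE\prn{\Enc\prn{st}}, p_\E}$ pass. For hiding I would reduce to the single indistinguishability property (Proposition \ref{single indistinguishability}) exactly as in the \CPUF{} argument: the only commit-phase message depending on $x$ is $c = st \oplus \prn{x^n \land r}$, so I would build a distinguisher \D{} for Fig. \ref{fig:single indist} that, on input $\prn{bs, p, \PUF}$, simulates \S{} using $bs$ in place of $st$, so that $bs = st$ reproduces \ExtPUF{} while $bs = u$ makes $c$ uniform and independent of $x$. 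The one new point relative to \CPUF{} is \S{}'s query to the adversarially created \PUFE{} on $\Enc\prn{st}$; because $\kstate = 0$ and, in the \FMPUF{} model, \PUFE{} cannot communicate with \Radv{}, this stateless query leaves no state that \Radv{} can read off once \PUFE{} is returned, so it does not enlarge \Radv{}'s view and \D{} may answer it with $\Enc\prn{bs}$ without changing the distribution. This is precisely where statelessness is indispensable.

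Binding is the step that departs most from \CPUF{} and is the main obstacle. With $\abs{r} = kn$ each block $r_{\bm{I}_j}$ carries only $n$ bits, whereas a malicious sender who picks $s, s'$ after seeing $r$ effectively spends up to $2n$ bits, so the \CPUF{}-style min-entropy argument on $r_{\bm{I}_J}$ no longer closes. Instead I would exploit the $st_\E$ check together with the CRP guessing property (Appendix \ref{appendix:crp}) and the test query property (Proposition \ref{testquery}). The test query forces \Sadv{} to return the original \PUFE{}, so every decommitment is verified against it, and CRP guessing then implies that, except with negligible probability, a successful opening to $X$ via $s$ requires a commit-phase query $q$ with $\dham\prn{q, \Enc\prn{st}} < \prn{\dmin}_\E$, where $st = \Rep\prn{\PUF\prn{s}, p}$. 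Since the code has minimum distance $2\prn{\dmin}_\E - 1$, each such $q$ is close to a unique codeword, so the set $\mathcal{V}$ of admissible values of $st$ is polynomial in size and fixed during the commit phase, before $r$ is drawn. A successful opening to $X$ and $Y$ differing at index $J$ forces $r_{\bm{I}_J} = st^{(X)}_{\bm{I}_J} \oplus st^{(Y)}_{\bm{I}_J}$ with $st^{(X)}, st^{(Y)} \in \mathcal{V}$; as the right-hand side ranges over a polynomial set fixed before the uniform $r_{\bm{I}_J} \in \bin^n$ is sampled, a union bound over $\mathcal{V}^2$ and over $J \in [k]$ bounds the success probability by $k \abs{\mathcal{V}}^2 2^{-n}$, which is negligible. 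This is where returning \PUFE{} before $r$, and hence statelessness, becomes essential.

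For extractability I would use the analogue of the extractor of Fig. \ref{fig:original extractor}: \ext{} plays the honest receiver while, via its interface access to \FMPUF{}, recording \Sadv{}'s queries $\mathcal{Q}$ to the honest \PUFE{}; after the commit phase it computes $\Dec\prn{q}$ for each $q \in \mathcal{Q}$, runs the single-commitment version of \texttt{ExtractFromQuery} against $c$ and $r$, and outputs the string if a unique non-$\bot$ one is obtained and $\bot$ otherwise. Simulation holds because recording queries is passive and alters no message sent to \Sadv{}. For the extraction guarantee, suppose \Sadv{} opens successfully to $X$ via $s$; as in binding, CRP guessing forces a commit-phase query $q$ with $\dham\prn{q, \Enc\prn{st}} < \prn{\dmin}_\E$, so correct decoding gives $\Dec\prn{q} = st$, and since the valid opening yields $c_{\bm{I}_j} = st_{\bm{I}_j} \oplus \prn{X_j \cdot r_{\bm{I}_j}}$, running \texttt{ExtractFromQuery} on $st$ returns exactly $X$ whenever every block has $r_{\bm{I}_j} \neq 0^n$, which fails with probability at most $k 2^{-n}$. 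Finally I must rule out a second query $q'$ yielding a conflicting string: with $st' = \Dec\prn{q'}$, the routine can only output a bit differing from $X_j$ at block $j$ when $st'_{\bm{I}_j} = st_{\bm{I}_j} \oplus r_{\bm{I}_j}$, and since $st, st'$ are fixed before $r$ this has probability $2^{-n}$ per block; a union bound over the polynomially many queries and the $k$ blocks keeps it negligible, so \ext{} outputs exactly $X$ except with negligible probability.
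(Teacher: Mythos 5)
Your proposal is correct and follows essentially the same route as the paper's proof: hiding reduces to the \CPUF{}/indistinguishability argument using that the stateless, non-communicating \PUFE{} leaves no observable trace of \S{}'s query; binding and extraction both combine the test query property with CRP guessing to force a commit-phase query decoding to $st$, and then exploit that all such queries are fixed before the uniform $r$ is sampled; and the extractor is the same query-recording one. The only cosmetic difference is that you bound the final collision probability by a direct union bound over the polynomially many query pairs and the $k$ blocks, where the paper routes the same fact through its average min-entropy lemmas to get $2^{-n+2\log k}$.
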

\begin{proof}
    Let $\bm{I}_j \coloneqq \set{ j, j+k, \cdots, j + (n-1)k }$, which is the set of positions of $c$ that are used for committing the
    $j$-th bit of $x$.

    \textbf{Computationally Hiding:}
    
    Notice that \S{} does not abort when \PUFE{} aborts and \PUFE{} is stateless.
    Thus, \Radv{} gets the same information in this protocol as in \CPUF{}, which is computationally hiding. 
    
    \textbf{Statistically Binding:}
    
    Suppose \Sadv{} decommits successfully with $S, X, ST_\E, P_\E$.        
    From Lemma \ref{testquery}, we know that \Sadv{} returned the same \PUFE{} with overwhelming probability, so we can assume that is the case.       

    First, we will show that with overwhelming probability \Sadv{} queried \PUFE{} on some $Q_X$ that is close to $Q = \Enc(ST)$,
    \footnote{
        That is, $\dham (Q_X, Q) < \dmin$.
    }
    where $ST$ is the one obtained by \R{} in the decommitment.
    Indeed, suppose that, with non-negligible probability, none of the queries were close to $Q$.
    In that case, we can define an adversary \Adv{}, depicted in Fig. \ref{fig:adversary extpuf guesscrp}, which contradicts 
    the CRP guessing property depicted in Fig. \ref{fig:guesscrp}, when interacting with a challenger \C{}.
    We know that
    \begin{itemize}
        \item with non-negligible probability, \Sadv{} does not query \PUFE{} with queries that are close to $Q$,
        which implies that the same happens for \Adv{};
        \item \Sadv{} decommits successfully to $X$, which implies that \Adv{} can output $Q, ST, P$ such that $ST = \msf{Rep} \prn{ \PUF(Q), P}$,
    \end{itemize}
    and so \Adv{} is successful with non-negligible probability.
    \begin{figure}
        \begin{center}
            \noindent\fbox{%
                \parbox{0.95\linewidth}{
                    \begin{center}
                        \textbf{Adversary} \Adv{}
                    \end{center}

                    \begin{itemize}
                        \item When receiving \PUFE{} from \C{}, simulate \ExtPUF{} between the malicious \Sadv{} and the honest \R{},
                        who uses \PUFE{};
                        \item When \Sadv{} decommits with $s, x, st_\E, p_\E$, output $\msf{Enc}(st), st_\E, p_\E$, where $st$
                        is the one obtained by \R{} in the decommitment.
                    \end{itemize}
                }
            }%
        \end{center}
        \caption{Adversary that breaks the CRP guessing property.}
        \label{fig:adversary extpuf guesscrp}
    \end{figure}

    Therefore, we have shown that if \Sadv{} decommits successfully with $S, X, ST_\E, P_\E$, then with overwhelming probability
    it queries \PUFE{} on some $Q_X$ such that
    \[
        C = ST \oplus \prn{X^n \land R} = \Dec(Q_X) \oplus \prn{X^n \land R}.
    \]

    Likewise, if \Sadv{} can also decommit successfully with $S', Y, ST'_\E, P'_\E$, then with overwhelming probability
    it queries \PUFE{} on some $Q_Y$ such that
    \[
        C = ST' \oplus \prn{Y^n \land R} = \Dec(Q_Y) \oplus \prn{Y^n \land R}.
    \]
    where $ST'$ is the one obtained by \R{} in the decommitment.

    Suppose $X$ and $Y$ differ on an index $J$ (which is also a random variable).
    Then,
    \begin{align*}
        C_{\bm{I}_J} &= \Dec(Q_X)_{\bm{I}_J} \oplus \prn{ X_J^n \land R_{\bm{I}_J} } \\
        &= \Dec(Q_Y)_{\bm{I}_J} \oplus \prn{ Y_J^n \land R_{\bm{I}_J} },
    \end{align*}
    and so
    \[
        R_{\bm{I}_J} = \Dec(Q_X)_{\bm{I}_J} \oplus \Dec(Q_Y)_{\bm{I}_J}.
    \]

    Since the queries $Q_X$ and $Q_Y$ were done by \Sadv{} before receiving $R$, we know $\prn{Q_X, Q_Y} \indep R$.
    Just like in \CPUF{}, this does not necessarily imply $\prn{Q_X, Q_Y} \indep R_{\bm{I}}$,
    but we still have $\prn{Q_X, Q_Y} \indep R_{\bm{I}_j}$ for each $j \in [k]$.
    Therefore, by Lemma \ref{lemma independent}, for each $j \in [k]$,
    \[
        \condavgminentropy{R_{\bm{I}_j}}{Q_X, Q_Y} = \minentropy{R_{\bm{I}_j}} = n,
    \]
    which implies
    \begin{align*}
        &\condavgminentropy{R_{\bm{I}_J}}{\Dec(Q_X)_{\bm{I}_J} \oplus \Dec(Q_Y)_{\bm{I}_J}} \\
        \geq& \ \condavgminentropy{R_{\bm{I}_J}}{Q_X, Q_Y, J} \tag{Lemma \ref{lemma function}} \\
        \geq& \ \condavgminentropy{R_{\bm{I}_J}}{Q_X, Q_Y} - \maxentropy{J} \tag{Lemma \ref{lemma chain rule}} \\
        \geq& \ n - \maxentropy{J} - \maxentropy{J} \tag{Lemma \ref{lemma minentropy}}\\
        =& \ n - 2\log(k).
    \end{align*}
    
    Thus, by Lemma \ref{lemma equality},
    \[
        \prob{R_{\bm{I}_J} = \Dec(Q_X)_{\bm{I}_J} \oplus \Dec(Q_Y)_{\bm{I}_J}} \leq 2^{-n + 2\log(k)},
    \]
    which is negligible.

    \textbf{Extractability:}
    
    Consider the extractor \ext{} depicted in Fig. \ref{fig:extractor extpuf}, which is essentially the same as the original one,
    but adapted to this modified protocol.
    Notice that $x$ is extracted from a query $q$ if and only if
    \begin{itemize}            
        \item $r_{\bm{I}_j} \neq \bm{0}$ for all $j \in [k]$;            
        \item $\msf{check}_x^q$ holds, that is, $c = \mathsf{Dec}(q) \oplus \prn{ x^n \land r }$.
    \end{itemize}

    Furthermore, \ext{} outputs a string $x^*$ if and only if it is the only string $x$ that satisfies
    \[
    \msf{check}_x \coloneq \exists q \in \mathcal{Q} : \msf{check}_x^q.
    \]
    \begin{figure}
        \begin{center}
            \noindent\fbox{%
                \parbox{0.95\linewidth}{
                    \begin{center}
                        \textbf{Extractor \ext}
                    \end{center}
            
                    \ext{} proceeds like an honest \R{}, while also saving \Sadv{}'s queries to \PUFE{} in $\mathcal{Q}$.
                    At the end of the commitment phase, for each $q \in \mathcal{Q}$, it tries to extract a string $x$ from $q$ 
                    by running \texttt{ExtractFromQuery}$(\msf{Dec}(q))$, where:                
                    \pcb[head = \texttt{ExtractFromQuery}$(st)$] {
                        x \coloneqq \varepsilon \\
                        \pcfor j \in [k]: \\
                        \pcind \pcif c_{\bm{I}_j} = st_{\bm{I}_j} \land c_{\bm{I}_j} \neq st_{\bm{I}_j} \oplus r_{\bm{I}_j}: \\
                        \pcind \pcind x \coloneq x \mathbin \Vert 0 \\
                        \pcind \pcelseif c_{\bm{I}_j} = st_{\bm{I}_j} \oplus r_{\bm{I}_j} \land c_{\bm{I}_j} \neq st_{\bm{I}_j}: \\
                        \pcind \pcind x \coloneq x \mathbin \Vert 1 \\
                        \pcind \pcelse : \\
                        \pcind \pcind \pcreturn \bot \\
                        \pcreturn x
                    }

                    Then, it does the following:
                    \begin{itemize}
                        \item If exactly one string $x$ was extracted, output $x^* = x$;
                        \item Otherwise, output $x^* = \bot$.
                    \end{itemize}
                }
            }%
        \end{center}
        \caption{The extractor for \ExtPUF{}.}
        \label{fig:extractor extpuf}
    \end{figure}         
    
    \ext{} is clearly PPT and verifies the simulation property.
    Now, we want to prove the extraction property, that is,
    \[
            \prob{\Sadv{} \text{ decommits successfully to } X \neq X^*} = \varepsilon(n).
    \]

    First, let us consider the probability $\prob{\Sadv{} \text{ decommits successfully to } X, X^* = \bot}$.
    Let \texttt{EXT} be the set of strings \ext{} extracted.
    This event happens in the following cases:
    \begin{itemize}
        \item \textbf{No string was extracted:}
        
        This can be expressed as $\abs{ \texttt{EXT} } = 0$.

        Notice that if $R_{\bm{I}_j} = \bm{0}$ for some $j \in [k]$, then \texttt{ExtractFromQuery} always aborts.
        Given that this happens with negligible probability, we can assume it is not the case.
        
        Suppose \Sadv{} decommits successfully with non-negligible probability with $S, X, ST_\E, P_\E$.
        Let $ST$ be the one obtained by \R{} in the decommitment and $Q = \Enc(ST)$.
        Then, we know $\msf{check}_X^Q$ is verified, because the decommitment is successful.

        Notice that \Sadv{} did not query \PUFE{} with $Q'$ that is close to $Q$,
        because that would imply that $\msf{Dec}(Q') = ST$ and so $\msf{check}_x^{Q'}$ would be true.
        Given that $R_{\bm{I}_j}$ is never $\bm{0}$, this would mean that $X$ was extracted from $Q'$,
        contradicting the assumption.

        Just like we discussed in binding, we will be able to construct an adversary \Adv{} that contradicts the CRP
        guessing property depicted in Fig. \ref{fig:guesscrp}.
        Thus,
        \begin{align*}
            &\prob{\Sadv{} \text{ decommits successfully to } X, \abs{ \texttt{EXT} } = 0} \\
            =& \varepsilon(n).
        \end{align*}

        \item \textbf{Two different strings were extracted:}
        
        This can be expressed as $\abs{ \texttt{EXT} } > 1$.

        Let $J$ be such that $X_J \neq Y_J$ and ${\bm{I}} \coloneqq \bm{I}_J$.
        Then, there exist $Q_X, Q_Y \in \mathcal{Q}$ such that
        \begin{align*}
            C_{\bm{I}} &= \Dec(Q_X)_{\bm{I}} \oplus \prn{ X_J^n \land R_{\bm{I}} } \\
            &= \Dec(Q_Y)_{\bm{I}} \oplus \prn{ Y_J^n \land R_{\bm{I}} },
        \end{align*}
        and so
        \[
            \Dec(Q_X)_{\bm{I}} \oplus \Dec(Q_Y)_{\bm{I}} = R_{\bm{I}}.
        \]
        
        Just like we discussed in binding, this only happens with negligible probability.
        Thus,
        \begin{align*}
            &\prob{\Sadv{} \text{ decommits successfully to } X, \abs{ \texttt{EXT} } > 1} \\
            \leq& \prob{\abs{ \texttt{EXT} } > 1} \\
            =& \varepsilon(n).
        \end{align*}
    \end{itemize}
        
    Finally, let us consider the probability $\prob{\Sadv{} \text{ decommits successfully to } X \neq X^*, X^* \neq \bot}$.
    If this event happens, we know $X^*$ is the only string $x$ for which $\msf{check}_x$ holds.
    
    Now, suppose \Sadv{} decommits successfully with non-negligible probability with $S, X, ST_\E, P_\E$.
    Let $ST$ be the one obtained by \R{} in the decommitment and $Q = \Enc(ST)$.
    Then, we know $\msf{check}_X^Q$ is verified, because the decommitment is successful.

    Just like in the case where no string was extracted, all of this implies that \Sadv{} did not query \PUFE{}
    on some $Q'$ that is close to $Q$, otherwise $X$ would have been extracted from $Q'$ and
    we would be able to construct an adversary \Adv{} that contradicts the CRP
    guessing property depicted in Fig. \ref{fig:guesscrp}.
    Therefore,
    \[
        \prob{\Sadv{} \text{ decommits successfully to } X \neq X^*, X^* \neq \bot} = \varepsilon(n).
    \]

    Thus, we can finally conclude that
    \[
        \prob{\Sadv{} \text{ decommits successfully to } X \neq X^*} = \varepsilon(n).
    \]
\end{proof}
\subsection{Security proof for \CollExtPUF{}, Theorem \ref{collective extpuf one main text}}
\label{appendix:ourproof}

\begin{theorem*}
    \label{collective extpuf one}
    \CollExtPUF{} is an ideal extractable collective commitment in the \FComMPUF{}-hybrid model, with $\kstate = \kout = 0$ and unbounded \kin{}.
\end{theorem*}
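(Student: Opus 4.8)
The plan is to establish the four defining properties of an ideal extractable collective commitment — completeness, computational hiding, statistical binding, and extractability — for \CollExtPUF{}, closely following the structure of the single-commitment proof (Theorem \ref{single nocomm extpuf}) but lifting every step to the collective setting and replacing the single-instance tools by their collective counterparts. Completeness is immediate from the response consistency property of \PUF{} and \PUFE{} together with matching fuzzy-extractor parameters, exactly as in the single case, so I would dispose of it in one line. The work is in the remaining three properties, where the shared single \PUF{}/\PUFE{} pair across all $N$ commitments is the feature that must be handled, and this is precisely what the collective properties were designed for.

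For hiding I would reduce the collective hiding game to the collective indistinguishability property (Lemma \ref{collective one indistinguishability}). Assuming a malicious \Radv{} that guesses $w$ with advantage over $1/|\Omega_n|$, I build a distinguisher for the collective indistinguishability interaction (Fig. \ref{fig:collective one indist}) whose index set is $\CLOSED_n$. Given $(\bm{bs}, \bm{p}, \PUF)$ from the challenger, the distinguisher simulates \INTER{} with \Radv{}: for $i \in \OPEN_n$ it generates $s^i$ and $st^i, p^i$ itself by querying the provided \PUF{} (so that these constant strings can later be opened and verified honestly in both cases), and for $i \in \CLOSED_n$ it sets $c^i \coloneq bs^i \oplus \prn{ \prn{x^i}^n \land r^i }$ using the challenger's $bs^i$. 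When the challenger's bit is $0$ this reproduces the real commitment, whereas when it is $1$ the closed $c^i$ are uniform and independent of $\bm{x}$, so (since the open strings lie in $\Const \prn{ \STRINGS_n }$) \Radv{}'s whole view becomes independent of $w$ and its guess succeeds with probability exactly $1/|\Omega_n|$; outputting the distinguisher's bit according to whether \Radv{} guessed correctly then contradicts Lemma \ref{collective one indistinguishability}. The crucial point that makes this simulation sound is that \PUFE{} is created by \Radv{}, yet with $\kstate = \kout = 0$ it is stateless and has no outgoing communication, so \S{}'s queries $\Enc(st^i)$ are entirely invisible to \Radv{} (nothing is leaked in real time and nothing is stored to be read once \PUFE{} is returned); hence the distinguisher may query \PUFE{} arbitrarily on the unused closed indices without \Radv{} detecting any discrepancy, and unbounded \kin{} is harmless because a stateless machine cannot make use of it.

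For binding and extractability I would argue per index and take a union bound over the polynomially many $i \in [N]$. For a fixed $i$, the single-commitment analysis applies verbatim: the test query property (Proposition \ref{testquery}) forces \Sadv{} to return the genuine \PUFE{} with overwhelming probability; the CRP guessing property (Appendix \ref{appendix:crp}) then forces any successful decommitment of $i$ to $X^i$ to have been preceded by a query of \PUFE{} near $\Enc(ST^i)$; and since the fresh $r^i$ is sent only after \PUFE{} is returned, it is independent of \Sadv{}'s queries, so any attempt to open $i$ to two strings $X^i \neq Y^i$ forces the collision $R^i_{\bm{I}_J} = \Dec(Q_X)_{\bm{I}_J} \oplus \Dec(Q_Y)_{\bm{I}_J}$ at a differing bit position $J$, which has probability at most $2^{-n + 2\log k}$ by Lemmas \ref{lemma function}, \ref{lemma independent}, \ref{lemma chain rule}, \ref{lemma minentropy} and \ref{lemma equality}. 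For extractability I would define the collective extractor as the honest \R{} augmented to record \Sadv{}'s \PUFE{}-queries and to run \texttt{ExtractFromQuery} per index with the corresponding $c^i, r^i$ (the natural vectorization of Fig. \ref{fig:extractor extpuf}); simulation is immediate, and the extraction failure for each $i$ — outputting $\bot$ while \Sadv{} decommits, or outputting a single wrong string — is bounded exactly as in the single case, so the union bound over $i$ keeps the total failure negligible.

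I expect the hiding reduction to be the main obstacle, for two related reasons. First, the set $\CLOSED_n$ is chosen by \S{} only partway through \INTER{} and may depend on the earlier (by definition $w$-independent) interaction with \Radv{}, whereas the collective indistinguishability challenger must fix its index set before producing the challenges; I would resolve this by conditioning on the $w$-independent pre-commitment transcript, which fixes $\CLOSED_n$, applying Lemma \ref{collective one indistinguishability} for that fixed index set, and averaging. Second, one must argue rigorously that a malicious \emph{communicating} \PUFE{} really leaks nothing about the committed strings; this is exactly the step where the assumptions $\kstate = \kout = 0$ are indispensable, and it is also why the restriction to stateless, non-transmitting PUFs cannot be dropped with the present argument.
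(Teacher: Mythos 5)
Your proposal is correct and follows essentially the same route as the paper: hiding is reduced to the collective indistinguishability property with index set $\CLOSED_n$ via a hybrid protocol that substitutes uniform strings for the closed $st^i$ (relying on $\kstate=\kout=0$ so that the \PUFE{} queries leak nothing), and binding/extractability reuse the single-instance argument with the vectorized extractor. The only cosmetic difference is that you handle the choice of the doubly-opened index by a union bound over $i\in[N]$, where the paper folds it into the entropy loss as $\maxentropy{I}=\log(N(n))$ — these yield the same bound $2^{-n+\log N(n)+2\log k(n)}$ — and your explicit treatment of the adaptively chosen $\CLOSED_n$ by conditioning on the $w$-independent pre-commitment transcript addresses a point the paper leaves implicit.
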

\begin{proof}
    \textbf{Computationally Hiding:}
    
    Suppose, by contradiction, that this is not the case.
    Then, there exists an interaction \INTER{} and a malicious receiver \Radv{} such that,
    for the interaction depicted in Fig. \ref{fig:collective hiding},
    \[
        \prob{ \Radv \prn{ \INTER^\CollExtPUF(W) } = W } - \frac{1}{|\Omega_n|}
    \]
    is not negligible.

    Consider a modified protocol \CollExtUnif{},
    \footnote{
        Again, the decommitments of \CollExtUnif{} are actually not defined for $i \in \CLOSED_n$.
        This is not a problem, because we will only run its commitment phase and decommitments for $i \in \OPEN_n$.
    }
    where \S{} uses $u^i \sample \bin^{kn}$ instead of $st^i$, for each $i \in \CLOSED_n$.
    Then,
    \[
        \prob{ \Radv \prn{ \INTER^\CollExtUnif(W) } = W } = \frac{1}{|\Omega_n|},
    \]
    because the distribution of the interaction is independent from $W$.
    Indeed, since \S{} does not abort when \PUFE{} aborts and $\kstate = \kout = 0$, \Radv{} does not
    get any information that depends on $W$.
    
    Now, consider the interaction depicted in Fig. \ref{fig:collective one indist} between a challenger \C{} and a distinguisher \D{},
    corresponding to a generalized indistinguishability property of PUFs.        
    In Fig. \ref{fig:distinguisher collective one cpuf}, we present a distinguisher \D{} that breaks this property,
    thereby contradicting Lemma \ref{collective one indistinguishability}, with $\IND_n = \CLOSED_n$.        
    Here, \CollExtBS{} denotes a protocol just like \CollExtPUF{}, but where:
    \begin{itemize}            
        \item \S{} receives \PUF{} and $\bs^i, p^i$ for each $i \in \CLOSED_n$ from \D{} (who received them from \C{}) and uses each
        $\bs^i$ instead of $st^i$;

        \item \S{} generates the $st^i, p^i$ for each $i \in \OPEN_n$ using \PUF{}.
    \end{itemize}

    Notice that if $b = 0$, then $\bs^i = st^i$ for each $i \in \CLOSED_n$, making \CollExtBS{} identical to \CollExtPUF{}.
    Likewise, if $b = 1$, then $\bs^i = u^i$ for each $i \in \CLOSED_n$, and so \CollExtBS{} is identical to \CollExtUnif{}.
    \begin{figure}[ht]
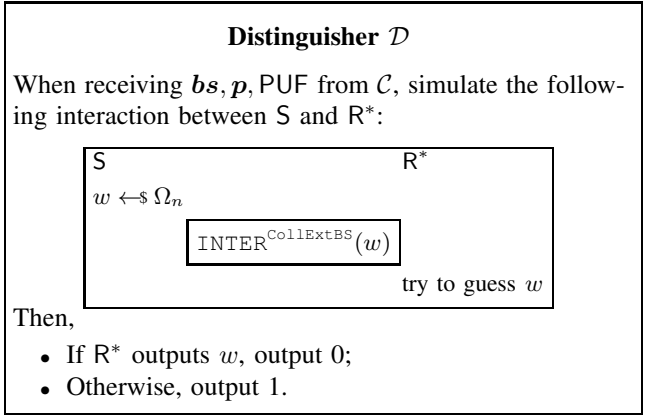

        \centering
        \noindent\fbox{%
            \parbox{0.95\linewidth}{
                \begin{center}
                    \textbf{Distinguisher} $\D$
                \end{center}
                
                When receiving $\bm{\bs}, \bm{p}, \PUF$ from \C{}, simulate the following interaction between \S{} and \Radv{}:
                \procb{}{
                    \S \< \< \Radv \\
                    w \sample \Omega_n \< \< \\
                    \< \pcbox{\INTER^\CollExtBS{}(w)} \< \\
                    \< \< \text{try to guess } w
                }
                Then,
                \begin{itemize}
                    \item If \Radv{} outputs $w$, output 0;
                    \item Otherwise, output 1.
                \end{itemize}
            }
        }%
        \caption{Distinguisher that breaks the generalized indistinguishability property depicted in
        Fig. \ref{fig:collective one indist}.}
        \label{fig:distinguisher collective one cpuf}
    \end{figure}
    
    Thus,
    \begin{equation*}
        \begin{split}
            & 2\prob{\D = B} \\
            =& \condprob{\D = 0}{B = 0} + \condprob{\D = 1}{B = 1} \\
            =& \prob{\Radv \prn{ \INTER^\CollExtPUF{}(W) } = W} + \\
            & \prob{\Radv \prn{ \INTER^\CollExtUnif{}(W) } \neq W} \\
            =& \prob{\Radv \prn{ \INTER^\CollExtPUF{}(W) } = W} + 1 - \frac{1}{|\Omega_n|}, \\
        \end{split}
    \end{equation*}
    and so
    \begin{equation*}
        \begin{split}
            &\prob{\D = B} - \frac{1}{2} \\
            =& \frac{1}{2} \prn{ \prob{\Radv \prn{ \INTER^\CollExtPUF{}(W) } = W} - \frac{1}{|\Omega_n|} } \\
        \end{split}
    \end{equation*}
    is non-negligible, which is a contradiction.
    
    \textbf{Statistically Binding:}
    
    Suppose a malicious sender \Sadv{} and an honest receiver interact according to some \INTER{}, which involves \Sadv{} making a commitment.
    We have to show that the probability of \Sadv{} being able to open some commitment $I$ successfully to strings $X^I$ and $Y^I$
    that differ on an index $J$ is negligible.

    We want to show that
    \[
        \prob{R^I_{\bm{I}_J} = \Dec(Q_X)_{\bm{I}_J} \oplus \Dec(Q_Y)_{\bm{I}_J}}
    \]
    is negligible.

    Notice that
    \begin{align*}
        &\condavgminentropy{R^I_{\bm{I}_J}}{\Dec(Q_X)_{\bm{I}_J} \oplus \Dec(Q_Y)_{\bm{I}_J}} \\
        \geq& \ \condavgminentropy{R^I_{\bm{I}_J}}{Q_X, Q_Y, J} \tag{Lemma \ref{lemma function}} \\
        \geq& \ \condavgminentropy{R^I_{\bm{I}_J}}{Q_X, Q_Y} - \maxentropy{J} \tag{Lemma \ref{lemma chain rule}} \\
        \geq& \ n - \maxentropy{I} - \maxentropy{J} - \maxentropy{J} \tag{Lemma \ref{lemma minentropy}}\\
        =& \ n - \log(N(n)) - 2\log(k(n)).
    \end{align*}
    
    Thus, by Lemma \ref{lemma equality},
    \begin{align*}
        &\prob{R^I_{\bm{I}_J} = \Dec(Q_X)_{\bm{I}_J} \oplus \Dec(Q_Y)_{\bm{I}_J}} \\
        \leq& 2^{ -n + \log(N(n)) + 2\log(k(n)) },
    \end{align*}
    which is negligible, since $N(n)$ and $k(n)$ are polynomial.
    
    \textbf{Extractability:}

    Here, we use the extractor from Theorem \ref{single nocomm extpuf} for each string being committed,
    as depicted in Fig. \ref{fig:extractor collective extpuf}.
    \begin{figure}[ht]
        \noindent\fbox{%
            \parbox{0.95\linewidth}{
                \begin{center}
                    \textbf{Extractor \ext}
                \end{center}
        
                \ext{} proceeds like an honest \R{}, while also saving \Sadv{}'s queries to \PUFE{} in $\mathcal{Q}$.
                At the end of the commitment phase, for each $i \in [N]$ and $q \in \mathcal{Q}$,
                it tries to extract a string $x^i$ from $c^i$ and $q$ by running $\texttt{ExtractFromQuery}^i(\msf{Dec}(q))$, where:                
                \pcb[head = $\texttt{ExtractFromQuery}^i(st)$] {
                    x^i \coloneqq \varepsilon \\
                    \pcfor j \in [k]: \\
                    \pcind \pcif c^i_{\bm{I}_j} = st_{\bm{I}_j} \land c^i_{\bm{I}_j} \neq st_{\bm{I}_j} \oplus r^i_{\bm{I}_j}: \\
                    \pcind \pcind x^i \coloneq x^i \mathbin \Vert 0 \\
                    \pcind \pcelseif c^i_{\bm{I}_j} = st_{\bm{I}_j} \oplus r^i_{\bm{I}_j} \land c^i_{\bm{I}_j} \neq st_{\bm{I}_j}: \\
                    \pcind \pcind x^i \coloneq x^i \mathbin \Vert 1 \\
                    \pcind \pcelse : \\
                    \pcind \pcind \pcreturn \bot \\
                    \pcreturn x^i
                }

                Then, output $\bm{x}^*$, where for each $i \in [N]$:
                \begin{itemize}
                    \item If exactly one string $x^i$ was extracted from $c^i$, ${x^*}^i \coloneqq x^i$;
                    \item Otherwise, ${x^*}^i \coloneqq \bot$.
                \end{itemize}
            }
        }%
        \caption{The extractor for \CollExtPUF{}.}
        \label{fig:extractor collective extpuf}
    \end{figure}
    We want to show that
    \[
        \prob{ \Sadv \text{ decommits some $I$ successfully to } X^i \neq \prn{X^*}^i }
    \]
    is negligible.

    This is completely analogous to what was done in Theorem \ref{single nocomm extpuf}.
    Indeed, as the extractability proof of \ExtPUF{} relied on its binding proof, the same happens
    for \CollExtPUF{}.
\end{proof}
\subsection{Proof of UC security}
\label{appendix:ucproof}

The bit commitment functionality is depicted in Fig. \ref{fig:bit commit functionality}.
\begin{figure}[ht]
    \begin{center}
        \noindent\fbox{%
            \parbox{0.9\linewidth}{
                \begin{center}
                    \textbf{Bit commitment functionality \Fcom}
                \end{center}
        
                \textbf{Commitment phase:}
                
                Upon receiving a message $(\msf{commit}, b)$ from \Sideal{}, where $b \in \bin$,
                record the value $b$ and send the message \msf{commit} to \Rideal{} and \Sim{}. 
                Ignore any subsequent \msf{commit} messages.\\
        
                \textbf{Decommitment phase:}
                
                Upon receiving a value \msf{open} from \Sideal{}, if some value $b$ was previously recorded,
                then send the message $(\msf{open}, b)$ to \Rideal{} and \Sim{} and halt.
            }
        }%
    \end{center}
    \caption{Bit commitment functionality \Fcom{}.}
    \label{fig:bit commit functionality}
\end{figure}

Consider the protocol given by \texttt{UCCompiler} in Fig. \ref{fig:revised uccompiler} with an ideal 
extractable collective commitment \collcom{}.
We prove that this protocol UC-realizes \Fcom{}.
\begin{lemma}
    \label{ucproofreceiver}
    Let \collcom{} be an ideal extractable collective commitment scheme in the \FComMPUF-hybrid model.
    Then, for any real world adversary \Adv{} that corrupts the receiver, there exists a simulator \Sim{} such that no environment
    \Z{} can distinguish between the corresponding real world and ideal world processes.
\end{lemma}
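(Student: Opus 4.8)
The plan is to construct a simulator \Sim{} for the corrupted-receiver case and then prove indistinguishability through a short hybrid sequence whose only nontrivial step is a reduction to the collective hiding of \collcom{}. \Sim{} internally runs the corrupted receiver \Adv{} and plays the role of \FComMPUF{} towards it, so that it also has the interface access required by extractability. When \Adv{}, acting as the committer of the first commitment, runs $\commit^\collcom\prn{\bm e}$, \Sim{} invokes the extractor \ext{} guaranteed by extractability to recover $\bm e = \prn{e_i}_{i \in [n,2]}$; the early placement of this commitment in \texttt{UCCompiler} is exactly what makes $\bm e$ available before \Sim{} must act. \Sim{} then commits, as an honest sender would through \collcom{}, to blobs prepared for equivocation: for each pair $\set{i, i+1}$ it samples $\beta_i \sample \bin$ and arranges blob $i$ to encode $\beta_i$ and blob $i+1$ to encode $1 - \beta_i$ (that is, $b_i^0 \oplus b_i^1 = \beta_i$ and $b_{i+1}^0 \oplus b_{i+1}^1 = 1-\beta_i$), choosing the remaining freedom in the shares uniformly. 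Crucially this commitment is independent of the bit being committed. In \texttt{BlobEqualities}, \Sim{} sends $y_i := b_i^{e_i} \oplus b_{i+1}^{e_i}$ and opens $b_i^{e_i}, b_{i+1}^{e_i}$, so the receiver's check passes by construction regardless of the blobs' contents. Finally, once \Fcom{} delivers $(\msf{open}, b)$, \Sim{} sets, for each pair, $l_i$ to be the unique index in $\set{i, i+1}$ whose blob encodes $b$ and opens that blob; since its shares XOR to $b$, \Adv{} accepts and outputs $b$.

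The indistinguishability of the real and ideal executions is then shown with hybrids $H_0, H_1, H_2$. $H_0$ is the real execution. $H_1$ inserts \ext{} on \Adv{}'s commitment to $\bm e$: by the simulation property of extractability the view is unchanged, and by the extraction property together with statistical binding of \collcom{}, the value \Adv{} later opens for $\bm e$ agrees with the extracted $\bm e$ except with probability $\varepsilon(n)$, so the forced $y_i$ pass whenever an honest execution would. $H_2$ replaces the honest sender's blobs (all encoding $b$, with $y_i = b_i^0 \oplus b_{i+1}^0$ and $l_i$ uniform) by the equivocating construction above. The key observation is that in $H_2$ the entire commitment phase is independent of $b$, and only the choice of $l_i$ at decommitment uses it; hence $H_2$ is literally the ideal execution driven by \Sim{} and \Fcom{}, and it remains to prove $H_1 \approx H_2$.

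For the latter one first checks that the receiver's observable values are identically distributed in $H_1$ and $H_2$: in both, $b_i^{e_i}$ and $b_{i+1}^{e_i}$ are uniform and independent, $y_i$ equals their XOR, $l_i$ is uniform over $\set{i,i+1}$ (uniform in $H_2$ precisely because $\beta_i$ is), and the opened blob's two shares XOR to $b$. The \emph{only} discrepancy is the value committed in the one blob of each pair that is never fully opened, which encodes $b$ in $H_1$ and $1-b$ in $H_2$, and whose differing share is never revealed. I would cast this as the collective hiding game: the honest sender commits to all $4n$ shares, the \OPEN{} positions are those opened in \texttt{BlobEqualities} and at decommitment (whose values, and hence the messages $y_i$ and the openings, are constant across the two configurations and so do not depend on the hidden data), \CLOSED{} collects the unopened shares, and $w$ selects between the $H_1$- and $H_2$-encoding of the unopened blobs via a suitable \STRINGS{} with $\OPEN \subseteq \Const\prn{\STRINGS}$. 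A distinguisher for $H_1$ versus $H_2$ then yields a malicious receiver \Radv{} guessing $w$ with non-negligible advantage, contradicting the computational hiding of \CollExtPUF{} (Theorem \ref{collective extpuf one main text}). The main obstacle is exactly this reduction: one must check, over both branches $l_i = i$ and $l_i = i+1$, that every message \Sim{} sends and every opened value is a deterministic function of the \OPEN{} positions alone, so that the interaction fits the collective hiding definition and the advantage transfers cleanly; the bookkeeping around which share of each pair is revealed is where the argument is most delicate.
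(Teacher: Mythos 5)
Your proposal is correct and follows essentially the same route as the paper: extract $\bm e$ from the corrupted receiver's early commitment, commit to equivocable blob pairs (one blob of $0$ and one of $1$ per pair, with a uniformly random assignment), force the $y_i$ to match the extracted $\bm e$, and argue indistinguishability by a hybrid whose only non-trivial step reduces the switch of the never-fully-opened blob's value to the collective hiding of \collcom{}, with the $\OPEN$/$\CLOSED$ bookkeeping you flag as delicate being exactly what the paper's explicit $\INTER$ construction (with its $\alpha_i,\beta_i$ indices) carries out. Your compression of the paper's four hybrids into three and your phrasing of the hiding reduction via a selector $w$ rather than an explicit challenge bit are cosmetic differences only.
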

\begin{proof}
    Consider the simulator defined in Fig. \ref{fig:simulator hd}.
    We aim to show that no environment \Z{} can distinguish the real world from the ideal world by using a hybrid argument,
    just like in \cite{UCComm}. 
    Starting from the real world, we define a sequence of hybrids that gradually transition to the ideal world.  
    Each intermediate hybrid is defined within a modified version of the ideal world, where the simulator has access to the input $b_\Z$
    chosen by \Z{} for \S{}.
    Furthermore, we construct each hybrid's simulator based on the previous one and prove that \Z{} cannot distinguish between consecutive hybrids.  
    Since the final hybrid corresponds to the ideal world, using the simulator defined in Fig. \ref{fig:simulator hd},
    the result follows.
    \begin{figure}[ht]
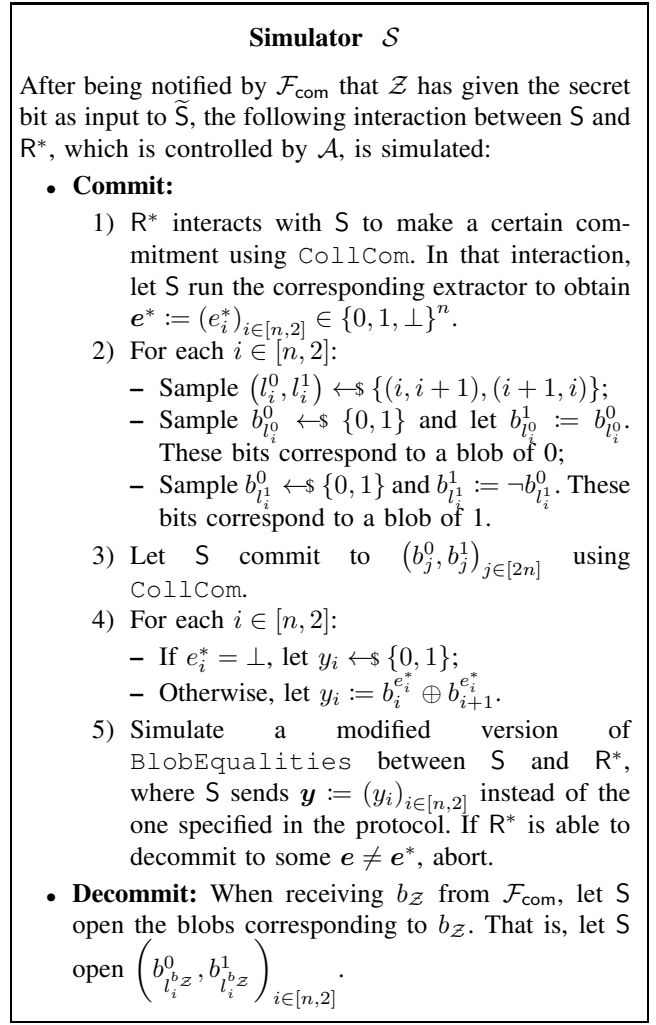

        \centering
        \noindent\fbox{%
            \parbox{0.95\linewidth}{
                \begin{center}
                    \textbf{Simulator } \Sim{}
                \end{center}

                After being notified by \Fcom{} that \Z{} has given the secret bit as input to \Sideal{},
                the following interaction between \S{} and \Radv{}, which is controlled by \Adv{}, is simulated:
                \begin{itemize}
                    \item \textbf{Commit:}
                        \begin{enumerate}
                            \item \Radv{} interacts with \S{} to make a certain commitment using \collcom{}.
                            In that interaction, let \S{} run the corresponding extractor to obtain
                            $\bm{e}^* \coloneqq \prn{e_i^*}_{i \in [n, 2]} \in \set{ 0, 1, \bot}^n$.
                            
                            \item For each $i \in [n, 2]$:
                            \begin{itemize}
                                \item Sample $\prn{l_i^0, l_i^1} \sample \set{(i, i+1), (i+1, i)}$;
                                
                                \item Sample $b_{l_i^0}^0 \sample \bin$ and let $b_{l_i^0}^1 \coloneqq b_{l_i^0}^0$.
                                These bits correspond to a blob of 0;
                                
                                \item Sample $b_{l_i^1}^0 \sample \bin$ and $b_{l_i^1}^1 \coloneqq \neg b_{l_i^1}^0$.
                                These bits correspond to a blob of 1.
                            \end{itemize}

                            \item Let \S{} commit to $\prn{b_j^0, b_j^1}_{j \in [2n]}$ using \collcom{}.
                            
                            \item For each $i \in [n, 2]$:
                            \begin{itemize}
                                \item If $e_i^* = \bot$, let $y_i \sample \bin$;
                                \item Otherwise, let $y_i \coloneqq b_i^{e_i^*} \oplus b_{i + 1}^{e_i^*}$.
                            \end{itemize}

                            \item Simulate a modified version of \texttt{BlobEqualities} between \S{} and \Radv{},
                            where \S{} sends $\bm{y} \coloneqq \prn{y_i}_{i \in [n, 2]}$ instead of the one specified in the protocol.
                            If \Radv{} is able to decommit to some $\bm{e} \neq \bm{e}^*$, abort.
                        \end{enumerate}
                    
                    \item \textbf{Decommit:}
                    When receiving $b_\Z$ from \Fcom{}, let \S{} open the blobs corresponding to $b_\Z$.
                    That is, let \S{} open $\prn{ b_{l_i^{b_\Z}}^0, b_{l_i^{b_\Z}}^1 }_{i \in [n, 2]}$.
                \end{itemize}
            }
        }%
        \caption{Simulator for the case where the receiver is dishonest.}
        \label{fig:simulator hd}
    \end{figure}

    Consider the following hybrids:
    \begin{itemize}
        \item \textbf{Hybrid $H_0$:}
        This is the real world execution of the protocol \texttt{UCCompiler}.
        
        \item \textbf{Hybrid $H_1$:}
        This hybrid is in the modified ideal world defined above, where the simulator $\Sim_1$ simulates an execution of the real world process,
        using $b_\Z$ as input for \S{}.
        Since $H_1$ is just the real world process executed through the simulator $\Sim_1$, hybrids $H_0$ and $H_1$ are identical.

        \item \textbf{Hybrid $H_2$:}
        In this hybrid, consider the simulator $\Sim_2$ that runs steps 1 and 5 of \Sim{}.
        Notice that, from the extractability property of \collcom{}, the extractor simulates an honest receiver (which, in this case, is \S{})
        and extracts $\bm{e}^*$ such that \Radv{} is only able to decommit to some $\bm{e} \neq \bm{e}^*$ with negligible probability.
        Furthermore, since for each $i \in [n, 2]$, the blobs $\mathbf{B}_i$ and $\mathbf{B}_{i+1}$ have the same value, we have
        $b_i^{e_i} \oplus b_{i+1}^{e_i} = b_i^0 \oplus b_{i+1}^0$.
        Thus, this is indistinguishable to $H_1$.

        \item \textbf{Hybrid $H_3$:}
        In this hybrid, consider the simulator $\Sim_3$ that generates its $l_i^0$ and $l_{i+1}^1$ for each $i \in [n, 2]$ during the commitment phase.
        Furthermore, in the decommitment phase, $\Sim_3$ lets \S{} open the blobs of indices $l_i^{b\Z}$.
        This is identical to $H_2$.

        \item \textbf{Hybrid $H_4$:}
        In this hybrid, consider the simulator \Sim{} that we defined earlier.
        Notice that we are now in the regular ideal world, where \Sim{} no longer has access to $b_\Z$.
        This hybrid is not identical to $H_3$, since the blobs no longer have the same value.
        However, we will show that no environment \Z{} can distinguish between them.

        Consider the interaction depicted in Fig. \ref{fig:indist h3 h4} between a challenger \C{} and \Z{}, in which \Z{} attempts to distinguish
        between $H_3$ and $H_4$.
        \begin{figure}[ht]
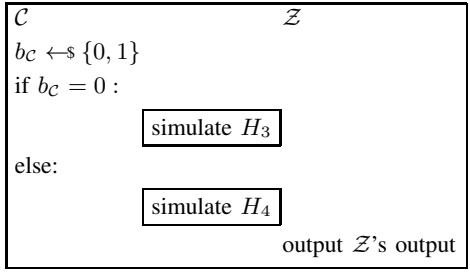

            \procb{}{
                \C \< \< \Z \\
                b_\C \sample \bin \< \< \\
                \text{if } b_\C = 0: \< \< \\
                \< \pcbox{ \text{simulate } H_3 } \< \\
                \text{else:} \< \< \\
                \< \pcbox{ \text{simulate } H_4 } \< \\
                \< \< \text{output \Z{}'s output}
            }
            \caption{Interaction corresponding to indistinguishability between hybrids $H_3$ and $H_4$.}
            \label{fig:indist h3 h4}
        \end{figure}

        Assume, by contradiction, that there exists \Z{} such that $\prob{\Z = B_\C} - \frac{1}{2}$ is non-negligible.
        Since the only difference between the two hybrids lies in the values of the bits being committed by \S{},
        we will be able to construct an interaction that breaks the computational hiding property of \collcom{}, thus 
        showing that both hybrids are indeed indistinguishable.
        
        Consider the interaction depicted in Fig. \ref{fig:reduction h3 h4} between \S{} and \Radv{}, where \INTER{} is
        defined in Fig. \ref{fig:inter h3 h4}.
        \begin{figure}[ht]
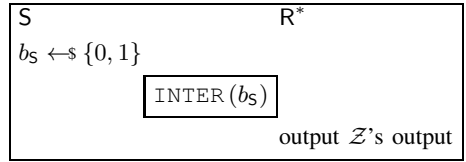

            \procb{}{
                \S \< \< \Radv \\
                b_\S \sample \bin \< \< \\
                \< \pcbox{ \INTER \prn{b_\S} } \< \\
                \< \< \text{output \Z{}'s output}
            }
            \caption{Reduction to the computational hiding property of \collcom{}.}
            \label{fig:reduction h3 h4}
        \end{figure}
        \begin{figure}[ht]
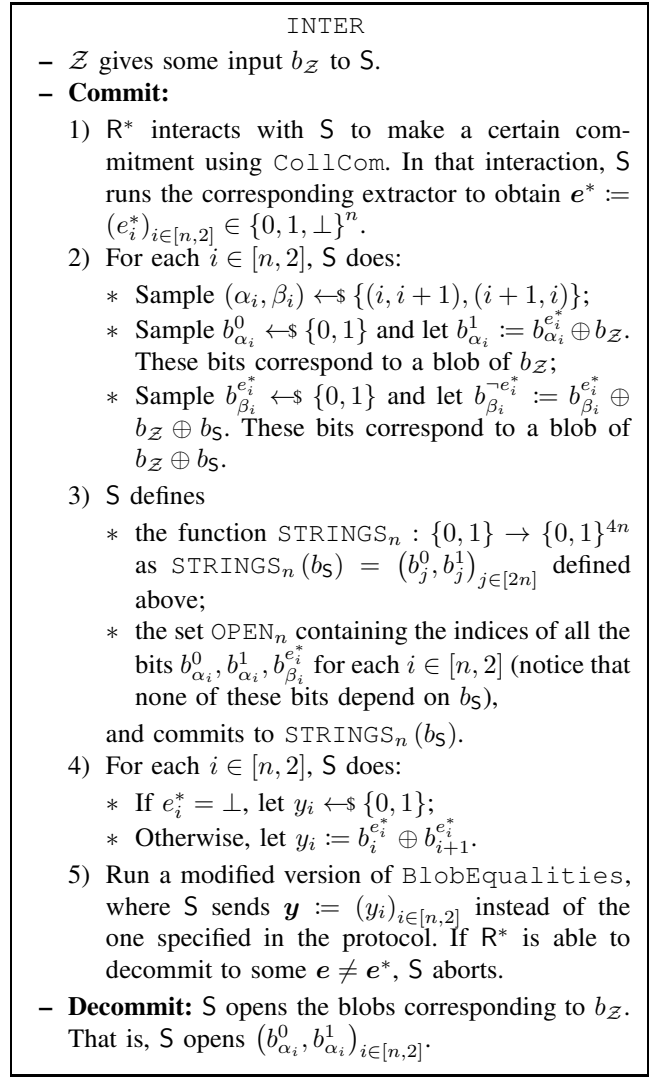

            \centering
            \noindent\fbox{%
                \parbox{0.95\linewidth}{
                    \begin{center}
                        \INTER
                    \end{center} 
                    \begin{itemize}
                        \item \Z{} gives some input $b_\Z$ to \S{}.
                        \item \textbf{Commit:}
                            \begin{enumerate}
                                \item \Radv{} interacts with \S{} to make a certain commitment using \collcom{}.
                                In that interaction, \S{} runs the corresponding extractor to obtain
                                $\bm{e}^* \coloneqq \prn{e_i^*}_{i \in [n, 2]} \in \set{ 0, 1, \bot}^n$.
                                
                                \item For each $i \in [n, 2]$, \S{} does:
                                \begin{itemize}
                                    \item Sample $\prn{\alpha_i, \beta_i} \sample \set{(i, i+1), (i+1, i)}$;
                                    \item Sample $b_{\alpha_i}^0 \sample \bin$ and let $b_{\alpha_i}^1 \coloneqq b_{\alpha_i}^{e_i^*} \oplus b_\Z$.
                                    These bits correspond to a blob of $b_\Z$;                                        
                                    \item Sample $b_{\beta_i}^{e_i^*} \sample \bin$ and let $b_{\beta_i}^{\neg e_i^*} \coloneqq b_{\beta_i}^{e_i^*} \oplus b_\Z \oplus b_\S$.
                                    These bits correspond to a blob of $b_\Z \oplus b_\S$.
                                \end{itemize}

                                \item \S{} defines
                                \begin{itemize}
                                    \item the function $\STRINGS_n : \bin \to \bin^{4n}$ as $\STRINGS_n \prn{ b_\S } = \prn{b_j^0, b_j^1}_{j \in [2n]}$
                                    defined above;
                                    \item the set $\OPEN_n$ containing the indices of all the bits $b_{\alpha_i}^0, b_{\alpha_i}^1, b_{\beta_i}^{e_i^*}$
                                    for each $i \in [n, 2]$ (notice that none of these bits depend on $b_\S$),
                                \end{itemize}
                                and commits to $\STRINGS_n \prn{ b_\S }$.
                                
                                \item For each $i \in [n, 2]$, \S{} does:
                                \begin{itemize}
                                    \item If $e_i^* = \bot$, let $y_i \sample \bin$;
                                    \item Otherwise, let $y_i \coloneqq b_i^{e_i^*} \oplus b_{i + 1}^{e_i^*}$.
                                \end{itemize}

                                \item Run a modified version of \texttt{BlobEqualities}, where \S{} sends $\bm{y} \coloneqq \prn{y_i}_{i \in [n, 2]}$
                                instead of the one specified in the protocol.
                                If \Radv{} is able to decommit to some $\bm{e} \neq \bm{e}^*$, \S{} aborts.
                            \end{enumerate}
                        
                        \item \textbf{Decommit:}
                        \S{} opens the blobs corresponding to $b_\Z$.
                        That is, \S{} opens $\prn{ b_{\alpha_i}^0, b_{\alpha_i}^1 }_{i \in [n, 2]}$.
                    \end{itemize}
                }
            }%
            \caption{Definition of the interaction \INTER{}.}
            \label{fig:inter h3 h4}
        \end{figure}
        Notice how important it is for \Radv{} to commit before \S{},
        since the bits \S{} commits to depend on what it extracts from \Radv{}'s commitment.        
        Furthermore, notice that for each value of $b_\S$, the only thing that differs is the value of the blobs.
        Indeed, when $b_\S = 0$, all the blobs have value $b_\Z$, causing the simulation to be distributed like in $H_3$.
        On the other hand, when $b_\S = 1$, there are blobs of 0 and 1, and so it is distributed like in $H_4$.
        Therefore, the probability of \Radv{} correctly guessing $b_\S$ is
        \begin{align*}
            & 2\prob{\Radv \prn{ \INTER \prn{B_\S} } = B_\S} \\
            =& \condprob{\Radv \prn{ \INTER (0)} = 0}{B_\S = 0} + \\
            &\condprob{\Radv \prn{ \INTER (1)} = 1}{B_\S = 1} \\
            =& \condprob{\Z = 0}{B_\C = 0} + \condprob{\Z = 1}{B_\C = 1} \\
            =& 2\prob{\Z = B_\C}
        \end{align*}
        and thus
        \[
            \prob{\Radv \prn{ \INTER (B_\S)} = B_\S} - \frac{1}{2} = \prob{\Z = B_\C} - \frac{1}{2},
        \]
        which we assumed to be non-negligible.
        This contradicts the computational hiding property of \collcom{}.
    \end{itemize}
\end{proof}

\begin{lemma}
    \label{ucproofsender}
    Let \collcom{} be an ideal extractable collective commitment scheme in the \FComMPUF-hybrid model.
    Then, for any real world adversary \Adv{} that corrupts the sender, there exists a simulator \Sim{} such that no environment
    \Z{} can distinguish between the corresponding real world and ideal world processes.
\end{lemma}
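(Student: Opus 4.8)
The plan is to mirror Lemma~\ref{ucproofreceiver}, except that here everything is driven by the \emph{extractability} and \emph{binding} of \collcom{} instead of its hiding. First I would define \Sim{}, which runs the corrupted sender \Sadv{} (controlled by \Adv{}) internally while playing the honest receiver \R{} towards it. Concretely, \Sim{} commits to a uniformly random $\bm{e}$ with \collcom{} exactly as \R{} does; when \Sadv{} commits to its blobs $\prn{b_j^0,b_j^1}_{j\in[2n]}$, \Sim{} runs the \collcom{} extractor \ext{} in place of \R{} and records the extracted values; and it runs \texttt{BlobEqualities} honestly, aborting precisely when \R{} would (so that a failed check produces no commitment, exactly as in the real world). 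Writing $v_j\coloneqq b_j^0\oplus b_j^1$ for each openable extracted blob, \Sim{} forms $V_i\coloneqq\set{v_i,v_{i+1}}$ for every pair $i\in[n,2]$ and lets $D\coloneqq\bigcap_{i\in[n,2]}V_i$ be the set of bits \Sadv{} could ever decommit to. It sets $b^*$ to the unique element of $D$ if $\abs{D}=1$ and to $0$ otherwise, sends $\prn{\msf{commit},b^*}$ to \Fcom{}, and in the decommitment phase verifies \Sadv{}'s opening as \R{} would, sending \msf{open} to \Fcom{} iff the check succeeds.

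Next I would argue indistinguishability with a two-step hybrid, as in Lemma~\ref{ucproofreceiver}. Hybrid $H_0$ is the real execution. Hybrid $H_1$ replaces honest \R{} by \ext{}: by the simulation clause of extractability the view of \Sadv{} (hence of \Z{}) is unchanged, and since the real receiver's output is a deterministic function of the protocol transcript, the joint distribution of \Z{}'s view together with the receiver's output is identical in $H_0$ and $H_1$. Hybrid $H_2$ is the true ideal world: rather than outputting the bit $b$ that \Sadv{} actually decommits, the ideal receiver \Rideal{} outputs the extracted bit $b^*$. These two experiments coincide except on the event $E\coloneqq\set{\Sadv{}\text{ decommits successfully to some }b\neq b^*}$, so it remains to prove $\prob{E}=\varepsilon(n)$.

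The hard part is bounding $\prob{E}$, which is exactly where the extraction property and the cut-and-choose soundness of \texttt{BlobEqualities} must be combined. By extraction, every value \Sadv{} successfully opens via \collcom{} agrees with what \ext{} extracted, up to negligible probability; hence a successful decommitment to $b$ forces, in each pair $i$, at least one extracted blob with value $b$, i.e. $b\in V_i$ for all $i$ and so $b\in D$. Therefore $b\neq b^*$ can only arise when $\abs{D}=2$, which requires $\abs{V_i}=2$ for every pair, i.e. every pair is \emph{mixed} ($v_i\neq v_{i+1}$, equivalently $b_i^0\oplus b_{i+1}^0\neq b_i^1\oplus b_{i+1}^1$). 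For a mixed pair the quantity $b_i^{e_i}\oplus b_{i+1}^{e_i}$ takes different values for $e_i=0$ and $e_i=1$, so the check $y_i=b_i^{e_i}\oplus b_{i+1}^{e_i}$ passes for at most one value of $e_i$; since \Sadv{} must announce $\bm{y}$ before \R{} opens $\bm{e}$ and the hiding of \collcom{} keeps $\bm{e}$ hidden from the PPT \Sadv{} at that point, each mixed pair passes independently with probability negligibly close to $\frac{1}{2}$. Consequently the probability that all $n$ pairs are simultaneously mixed and pass every check is at most $2^{-n}+\varepsilon(n)$; as a successful decommitment requires \texttt{BlobEqualities} to have passed, this gives $\prob{E}=\varepsilon(n)$. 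The subtle points to handle carefully are the joint invocation of extractability within the larger \INTER{} of \texttt{UCCompiler} (so that \ext{} pins down \Sadv{}'s openable values before the decommitment without changing its view), the fact that the $\bot$ entries returned by \ext{} only shrink the sets $V_i$ and hence never create spurious ambiguity, and the bookkeeping of the commit-receipt notification when $D=\emptyset$ (where \Sadv{} can never decommit, so \Rideal{} never outputs, matching the real world).
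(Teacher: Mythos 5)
Your simulator, hybrid structure, and case analysis match the paper's proof almost exactly: the same extraction of $\prn{{b^*}_j^0,{b^*}_j^1}_{j\in[2n]}$, the same sets $A_i$ (your $V_i$) and their intersection, the same two bad events (a cheat in \texttt{BlobEqualities} requiring every pair to be mixed, and a decommitment disagreeing with the extracted value), and the same reliance on extraction for the second event. The only substantive difference is how you bound the probability that all pairs are mixed and every check passes. You assert that ``each mixed pair passes independently with probability negligibly close to $\frac{1}{2}$'' and multiply; but the hiding property of \collcom{} as defined in the paper is a single guessing bound $\prob{\Radv(\INTER(W))=W}=\frac{1}{|\Omega_n|}+\varepsilon(n)$ for $W$ drawn from $\Omega_n$, and it gives you neither per-coordinate unpredictability nor independence across pairs, so this step as written is not justified. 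The paper closes exactly this gap by observing that when every pair is mixed, the map $f(\bm{e})=\prn{{b^*}_i^{e_i}\oplus {b^*}_{i+1}^{e_i}}_{i\in[n,2]}$ is a coordinatewise bijection with $f\circ f=\msf{id}$, so passing all checks is \emph{equivalent} to the sender outputting $\bm{y}$ with $f(\bm{y})=\bm{e}$, i.e.\ to guessing $\bm{e}\in\bin^n$; the hiding property applied with $\Omega_n=\bin^n$ then directly bounds this by $2^{-n}+\varepsilon(n)$. Your key insight (a mixed pair's check pins down $e_i$) is the right one, but you should replace the independence claim with this single reduction to the guessing game to make the argument sound.
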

\begin{proof}
    Consider the simulator defined in Fig. \ref{fig:simulator dh}.
    \begin{figure}[ht]
        \centering
        \noindent\fbox{%
            \parbox{0.95\linewidth}{
                \begin{center}
                    \textbf{Simulator } \Sim{}
                \end{center}

                Simulate the interaction between \Sadv{} and \R{} as follows:
                \begin{itemize}
                    \item \textbf{Commit:}
                        \begin{enumerate}
                            \item For each $i \in [n, 2]$:
                            \begin{itemize}
                                \item Sample $e_i \sample \bin$.
                            \end{itemize}
                            
                            \item Let \R{} commit to $\bm{e} \coloneqq \prn{e_i}_{i \in [n, 2]}$ using \collcom{}.
                        
                            \item \Sadv{} interacts with \R{} to make a certain commitment using \collcom{}.
                            In that interaction, let \R{} run the corresponding extractor to obtain
                            $\prn{{b^*}_j^0, {b^*}_j^1}_{j \in [2n]} \in \set{ 0, 1, \bot}^{2n}$.

                            \item For each $j \in [2n]$:
                            \begin{itemize}
                                \item If ${b^*}_j^0 = \bot$ or ${b^*}_j^1 = \bot$, let ${b^*}_j \coloneqq \bot$;
                                \item Otherwise, let ${b^*}_j \coloneqq {b^*}_j^0 \oplus {b^*}_j^1$.
                            \end{itemize}
                            
                            \item Simulate \texttt{BlobEqualities} between \Sadv{} and \R{}.
                            
                            \item Consider the sets $A_i \coloneqq \set{{b^*}_i, {b^*}_{i+1}}$ for each $i \in [n, 2]$ and let
                            $A \coloneqq \bigcap_{i \in [n, 2]} A_i \setminus \set{ \bot }$.
                            This set represents to which bits \Sadv{} can decommit to in the decommitment phase.
                            Check which of the following cases applies:                            
                            \begin{enumerate}
                                \item If $A = \emptyset$, let \Sideal{} send $b^* \coloneqq 0$ to \Fcom{};                                    
                                \item If $A = \set{b^*}$ for some $b^*$, let \Sideal{} send $b^*$ to \Fcom{};
                                \item If $A = \bin$, abort.
                            \end{enumerate}
                        \end{enumerate}
                    
                    \item \textbf{Decommit:}
                        If \Sadv{} correctly decommits to some $b$, then
                        \begin{itemize}
                            \item If $b = b^*$, let \Sideal{} send \msf{open} to \Fcom{};
                            \item Otherwise, abort.
                        \end{itemize}                
                \end{itemize}
            }
        }%
        \caption{Simulator for the case where the sender is dishonest.}
        \label{fig:simulator dh}
    \end{figure}
    Notice that the real and ideal world processes are almost identical.
    Indeed, \Sim{} lets \R{} run the extractor for \collcom{}, which simulates an honest receiver (which in this case is \R{}),
    and then it lets \R{} honestly follow the protocol \texttt{BlobEqualities}.
    The only difference is that it additionally aborts in the following cases:
    \begin{itemize}
        \item In step 6, if $A = \bin$, meaning $A_i = \bin$ for all $i \in [n, 2]$.
        This means that \Sadv{} managed to cheat the $\texttt{BlobEqualities}$ protocol, which we are going to show
        only happens with negligible probability.
        
        Due to extractability property of \collcom{}, we know that with overwhelming probability, the commitments of each
        $b_j^e$ can only be opened to ${b^*}_j^e$.
        Consider the function $f : \bin^n \to \bin^n$ such that
        $f(\bm{e}) = \prn{f_i(e_i)}_{i \in [n,2]} = \prn{{b^*}_i^{e_i} \oplus {b^*}_{i+1}^{e_i}}_{i \in [n,2]}$,
        where $\bm{e}$ is indexed in $[n,2]$.
        Notice that for each $i \in [n, 2]$     
        \[
            A_i = \set{ {b^*}_i^0 \oplus {b^*}_i^1, {b^*}_{i+1}^0 \oplus {b^*}_{i+1}^1 } = \bin.
        \]
        
        Furthermore,
        \begin{align*}
            &\forall i \in [n, 2] \ \set{ {b^*}_i^0 \oplus {b^*}_i^1, {b^*}_{i+1}^0 \oplus {b^*}_{i+1}^1 } = \bin \\
            \iff& \forall i \in [n, 2] \ {b^*}_i^0 \oplus {b^*}_i^1 \neq {b^*}_{i+1}^0 \oplus {b^*}_{i+1}^1 \\
            \iff& \forall i \in [n, 2] \ {b^*}_i^0 \oplus {b^*}_i^0 \neq {b^*}_{i+1}^1 \oplus {b^*}_{i+1}^1 \\
            \iff& \forall i \in [n, 2] \ f_i(0) \neq f_i(1) \\
            \iff& \forall i \in [n, 2] \ f_i \text{ is a bijection in } \bin \\
            \implies& \forall i \in [n, 2] \ f_i \circ f_i = \msf{id}_{\bin} \\
            \implies& f \circ f = \msf{id}_{\bin^n}.
        \end{align*}

        Now, observe that \Sadv{} successfully passed \texttt{BlobEqualities} and the commitments of each $b_j^e$ were opened to ${b^*}_j^e$.
        This means that for each $i \in [n, 2]$ it sent $y_i = {b^*}_i^{e_i} \oplus {b^*}_{i+1}^{e_i} = f_i(e_i)$.
        In other words, it sent $\bm{y} = f(\bm{e})$, and since $f \circ f = \msf{id}_{\bin^n}$, we know
        $\bm{y} = f(\bm{e}) \iff f(\bm{y}) = \bm{e}$.
        
        This implies that \Sadv{} could successfully guess the $\bm{e}$ generated by \R{}.
        However, this only happens with negligible probability due to the computational hiding property of \collcom{}.

        More specifically, consider the interaction depicted in Fig. \ref{fig:reduction dh} between \R{} and \Sadv{},
        where \INTER{} denotes an execution of \texttt{UCCompiler} in which \R{} commits to $\bm{e}$ using \collcom{}
        and $\bm{y}$ denotes the one \Sadv{} sends in \texttt{BlobEqualities}.
        \begin{figure}[ht]
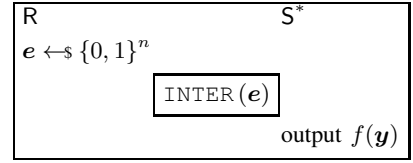

            \procb{}{
                \R \< \< \Sadv \\
                \bm{e} \sample \bin^n \< \< \\
                \< \pcbox{ \INTER \prn{\bm{e}} } \< \\
                \< \< \text{output } f(\bm{y})
            }
            \caption{Reduction to the computational hiding property of \collcom{}.}
            \label{fig:reduction dh}
        \end{figure}
        Then, the probability that the simulation aborted in this step is
        \begin{align*}
            \prob{\bm{Y} = f(\bm{E})} &= \prob{f(\bm{Y}) = \bm{E}} \\
            &= \prob{\Sadv \prn{\INTER(\bm{E})} = \bm{E}},
        \end{align*}
        which is negligible by the computational hiding property of \collcom{}.

        \item In the decommitment phase, when \Sadv{} opens $b \neq b^*$.
        Due to the extractability property of \collcom{}, this only happens with negligible probability.
    \end{itemize}

    Thus, we conclude that the simulator only additionally aborts with negligible probability.
    This, along with the fact that the interaction simulated by \Sim{} is otherwise identical to the real-world process,
    implies that no environment \Z{} can distinguish between the real and ideal worlds.
\end{proof}

Thus, from Lemmas \ref{ucproofreceiver} and \ref{ucproofsender}, we can finally conclude the following:
\begin{theorem}
\label{thm:ucappendix}
    Let \collcom{} be an ideal extractable collective commitment scheme in the \FComMPUF-hybrid model.
    Then, the corresponding protocol given by \texttt{UCCompiler} UC-realizes \Fcom{}
    in the \FComMPUF-hybrid model.
\end{theorem}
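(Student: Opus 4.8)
The plan is to prove UC-realization through the standard case analysis over the (static) corruption patterns of the two participants. By definition I must exhibit, for every PPT real-world adversary \Adv{}, a PPT simulator \Sim{} such that no PPT environment \Z{} distinguishes the real execution of \texttt{UCCompiler} in the \FComMPUF{}-hybrid model from the ideal interaction with \Fcom{}. Since the protocol has only the two roles \S{} and \R{}, there are four patterns to treat: both honest, only \R{} corrupted, only \S{} corrupted, and both corrupted. The master simulator inspects which parties \Adv{} corrupts at the outset and then runs the case-specific simulator below; correctness in each case yields the single simulator required by the definition.

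The two substantive cases are exactly the lemmas already proved. When \Adv{} corrupts the receiver I invoke Lemma \ref{ucproofreceiver}, whose simulator (Fig. \ref{fig:simulator hd}) prepares, for each index, both a blob of $0$ and a blob of $1$, extracts \Radv{}'s commitment via the extractor of \collcom{}, and later opens the blobs matching the bit that \Fcom{} reveals; indistinguishability then reduces, through the hybrid chain, to the computational hiding and extractability of \collcom{}. When \Adv{} corrupts the sender I invoke Lemma \ref{ucproofsender}, whose simulator (Fig. \ref{fig:simulator dh}) extracts the committed blobs, determines the bit $b^*$ that \Sideal{} forwards to \Fcom{}, and aborts only with negligible probability by the extractability and computational hiding of \collcom{}.

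It remains to dispatch the two remaining patterns. If both parties are corrupted, \Adv{} already drives the whole interaction, so the simulator merely runs \Adv{} internally and the two worlds coincide. For the both-honest pattern I would reuse the equivocation machinery of Lemma \ref{ucproofreceiver}: the simulator runs an internally-simulated honest \R{} in place of \Radv{}, prepares blobs of both values during the commit phase without knowing the committed bit, and then, upon learning $b$ from \Fcom{} at decommitment, opens the blobs corresponding to $b$. Because the receiver-side analysis holds against an arbitrary corrupted receiver, it holds a fortiori against the honestly-simulated one, and completeness of \collcom{} guarantees that the receiver's output equals $b$ in both worlds.

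I expect the only point demanding genuine care --- as opposed to the routine assembly of the lemmas --- to be the both-honest case, precisely because the simulator must produce a commit-phase transcript before \Fcom{} discloses $b$ and must nonetheless open it to the correct value. Computational hiding alone is insufficient here: it would make the commit transcript indistinguishable, but it would not let the simulator open a commitment made to a default value to a different bit, so the argument genuinely relies on the equivocation strategy imported from Fig. \ref{fig:simulator hd} rather than on hiding alone. A final bookkeeping check is that all four case-specific simulators use \collcom{} only as a black box and address \FComMPUF{} through the same interface, so that the master simulator can dispatch to any of them without introducing a distinguishable seam.
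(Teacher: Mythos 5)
Your proposal matches the paper's proof, which obtains the theorem directly by combining Lemma \ref{ucproofreceiver} (corrupted receiver) with Lemma \ref{ucproofsender} (corrupted sender). Your additional dispatch of the both-corrupted and both-honest patterns (the latter via the equivocation machinery of Fig. \ref{fig:simulator hd}) is routine completeness that the paper leaves implicit; the substantive content is the same.
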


\end{document}